\pgfplotsset{compat=1.13}  
\DeclareMathOperator{\sharpp}{\#P}
\newcommand{\p}{{\rm P}}
\DeclareMathOperator{\up}{UP}
\DeclareMathOperator{\coup}{coUP}
\newcommand{\psharpp}{\p^{\sharpp}}
\def\hackspace{{\hspace{0.75pt}}}
\newcommand{\spr}{strong-\allowbreak P\trank}
\newcommand{\pr}{\text{semistrong-}\allowbreak P\trank}
\newcommand{\wpr}{P\trank}
\newcommand{\nspr}{strong-\allowbreak P\trank\ensuremath{^\complement}}
\newcommand{\nsprit}{strong-\allowbreak P\trank\ensuremath{\hackspace^\complement}}
\newcommand{\npr}{\text{semistrong-}\allowbreak P\trank\ensuremath{^\complement}}
\newcommand{\nprit}{\text{semistrong-}\allowbreak P\trank\ensuremath{\hackspace^\complement}}
\newcommand{\nwpr}{P\trank\ensuremath{^\complement}}
\newcommand{\nwprit}{P\trank\ensuremath{\hackspace^\complement}}
\newcommand{\INFINITE}{\ensuremath{\mathrm{INFINITE}}}
\newcommand{\pc}{P\tcomp\ensuremath{'}}
\newcommand{\frr}{\ensuremath{\frec}\trank}
\newcommand{\frc}{\ensuremath{\frec}\tcomp}
\newcommand{\fprc}{\ensuremath{\fre}\tcomp}
\DeclareMathOperator{\shift}{shift}
\newcommand{\frec}{\ensuremath{\mathrm{F}_\mathrm{REC}}}
\newcommand{\fre}{\ensuremath{\mathrm{F_\mathrm{PR}}}}
\newcommand{\rank}{\ensuremath{\mathrm{rank}}}
\def\tcomp{\text{-compressible}}
\def\trank{\text{-rankable}}
\newcommand{\condition}{\mid}
\newcommand{\sigmastar}{{\ensuremath{\Sigma^\ast}}}
\newcommand{\domain}{\ensuremath{\mathrm{domain}}}
\newcommand{\calf}{\ensuremath{{\cal F}}}
\newcommand{\calc}{\ensuremath{{\cal C}}}
\newcommand{\N}{\mathbb{N}}
\newcommand{\Z}{\mathbb{Z}}
\newcommand{\emptystring}{\epsilon}
\newcommand{\rec}{\ensuremath{\mathrm{REC}}}
\newcommand{\re}{\ensuremath{\mathrm{RE}}}
\newcommand{\core}{\ensuremath{\mathrm{coRE}}}
\newcommand{\join}{\oplus}
\theoremstyle{plain}
\newtheorem{proposition}[theorem]{Proposition}
\title{Closure and Nonclosure Properties of the Compressible and Rankable Sets}
\author{Jackson Abascal\footnote{Supported in            
part by a CRA-W Collaborative Research Experiences for Undergraduates (CREU) grant.}, 
Lane A. Hemaspaandra\footnote{This work was done in part
     while on a sabbatical stay at ETH Z\"{u}rich and the University of D\"usseldorf.},
Shir Maimon$^{\rm{}1,}$\footnote{Current affiliation: Department of Computer Science, Cornell University},
and Daniel Rubery}
{Department of Computer Science, 
University of Rochester, Rochester,
   NY 14627, USA\\~\\November 5, 2016; revised October 30, 2018}{}{}{}
\authorrunning{J. Abascal, L.\,A. Hemaspaandra, S. Maimon, and
D. Rubery}
\subjclass{
	\ccsdesc[500]{Theory of computation~Computational complexity and cryptography};
	\ccsdesc[300]{Theory of computation~Complexity classes};
	\ccsdesc[300]{Theory of computation~Computability};
	\ccsdesc[100]{Information systems~Data compression}
}
\keywords{complexity theory, closure properties,
compression, ranking,
computability}
\def\mmmddyyyy{\ifcase\month\or Jan\or Feb\or Mar\or Apr\or May\or Jun\or Jul\or Aug\or Sep\or Oct\or Nov\or Dec\fi \space\number\day, \number\year}
\def\hhmm{\ifnum\hour<10 0\fi\number\hour :
\ifnum\minutes<10 0\fi\number\minutes}
\begin{document}
\sloppy
\maketitle

\begin{abstract}

The rankable and compressible sets have been studied for more than a
quarter of a century, ever since
Allender~\cite{all:thesis:invertible} and Goldberg and
Sipser~\cite{gol-sip:j:compression} introduced the formal study of
polynomial-time ranking.  Yet even after all that time,
whether the rankable and compressible sets are closed under the most
important boolean and other operations remains essentially
unexplored.  The present paper
studies these questions for both polynomial-time and
recursion-theoretic compression and ranking, and for almost every
case arrives at a Closed, a Not-Closed, or a
Closed-Iff-Well-Known-Complexity-Classes-Collapse result for the
given operation.  Even though compression and ranking classes are
capturing something quite natural about the structure of sets, it
turns out that they are
quite fragile with respect to closure properties, and many fail to
possess even the most basic of closure properties.  For example, we
show that with respect to the join (aka~disjoint union) operation:
the P-rankable sets are not closed, whether the semistrongly
P-rankable sets are closed is closely linked to whether
$\p = \up \cap \coup$, and the strongly P-rankable sets are closed.

\end{abstract}

\section{Introduction}

Loosely put, a compression function $f$ for a set $A$ is a function
over the domain $\sigmastar$ such that (a)~$f(A) = \sigmastar$ and
(b)~$(\forall a, b \in A : a \neq b)[f(a)\neq f(b)]$.  That
is, $f$ puts $A$ in 1-to-1 correspondence with $\sigmastar$.  This is
sometimes described as providing a minimal perfect hash function for
$A$: It is perfect since there are no collisions (among elements of
$A$), and it is minimal since not a single element of the codomain is
missed.  Note that the above does not put any constraints on what
strings the elements of $\overline{A}$ are mapped to, or even about
whether the compression function needs to be defined on such
strings. A ranking function is similar, yet stronger, in that a
ranking function sends the $i$th string in $A$ to the integer $i$; it
respects the ordering of the members of $A$.

The study of ranking was started by
Allender~\cite{all:thesis:invertible} and Goldberg and
Sipser~\cite{gol-sip:j:compression}, and has been pursued in many
papers since, especially in the early 1990s, e.g., \cite{%
hem-rud:j:ranking,%
huy:j:rank,%
ber-gol-sab:j:rank,%
alv-jen:j:logcount%
}.  The study of ranking led to the study of compression, which was
started---in its current form, though already foreshadowed in a notion
of~\cite{gol-sip:j:compression}---by Goldsmith, Hemachandra, and
Kunen~\cite{gol-hem-kun:j:address} (see
also~\cite{gol-hom:j:scalability}).
The abovementioned work focused on polynomial-time or
logarithmic-space ranking or compression functions.  More recently,
both compression and ranking have also been studied
in the recursion-theoretic context
(\cite{hem-rub:jtoappear-with-tr-pointer:rft-compression1}, and see the discussion therein
for precursors in classic recursive function theory), in particular
for both the case of (total) recursive compression/ranking functions
(which of course must be defined on all inputs in $\sigmastar$) and
the case of partial-recursive compression/ranking functions (i.e.,
functions that on some or all elements of the complement of the set
being compressed/ranked
are allowed to be undefined).

\begin{table}[!b]
{
     \footnotesize  
\begin{center}
\resizebox{\textwidth}{!}
{\begin{tabular}{l | c c c}
Class & $\cap$ & $\cup$ &
complement\\
\hline
\vspace{-4pt}
\\
\spr{}     & $\p=\psharpp$ (Th.~\ref{thm11})& $\p=\psharpp$ (Th.~\ref{thm11}) & Yes (Prop.~\ref{prop-was-lem-lem13}) \\[2pt]
\pr{}      & $\p=\psharpp$ (Th.~\ref{thm11}) & $\p=\psharpp$ (Th.~\ref{thm11})& $\approx \p = \up \cap \coup $ (Th.~\ref{thm16}, Cor~\ref{c:new-cite-for-old-thm18}) \\[4pt]
\parbox{1.7in}{\wpr{},
\pc{},
$\frec$-rank\-able,
$\frec$-com\-press\-ible,
$\fre$-rank\-able, and
$\fre$-com\-press\-ible}
& No (Th.~\ref{yellow})& No (Th.~\ref{red})& No (Th.~\ref{blue})\\[22pt]

\nspr{} & No (Th.~\ref{green})& No (Th.~\ref{green})& Yes (Prop.~\ref{prop-was-lem-lem13})\\[2pt]

\npr  & No (Th.~\ref{green})& No (Th.~\ref{green})& $\approx \p = \up \cap \coup$ (Th.~\ref{thm16}, Cor~\ref{c:new-cite-for-old-thm18})\\[4pt]
\parbox{1.7in}{\wpr{}${}^\complement$,
P-com\-press\-ible${}^\complement$,
$\frec$-rank\-able${}^\complement$,
$\frec$-com\-press\-ible${}^\complement$,
$\fre$-rank\-able${}^\complement$, and
$\fre$-com\-press\-ible${}^\complement$}
& No (Th.~\ref{green}) & No (Th.~\ref{green})& No (Th.~\ref{blue})\\
\end{tabular}
} \vspace*{14pt}

\caption{Overview of results for closure of these classes under
boolean operations.
If an entry does
not contain ``No'' or ``Yes'' then the class is closed under the
operation if and only if the entry holds. A
special case is
\pr{} and \npr{}, in which we deliberately use the
$\approx$ symbol to indicate that the implication is true in one
direction and in the other direction currently is known to be true only
for a broad subclass of
these sets. Specifically, if $\p=\up\cap\coup$ then the
complements of all ``nongappy'' \pr{}
sets are themselves \pr.
}\label{table}
\end{center}
} 
\end{table}

In the present paper, we continue the study of both
complexity-theoretic and recursion-theoretic compression and ranking
functions.  In particular, the earlier papers often viewed the
compressible sets or the rankable sets as a \emph{class}.  We take
that very much to heart, and seek to learn whether these classes do,
or do not, possess key closure properties.  Our main contributions can
be seen in Table~\ref{table}, where we obtain closure and nonclosure
results for many previously studied variations of compressible and
rankable sets under boolean operations (Section~\ref{s:boolean}).  We
also study the closure of these sets under additional operations, such
as the join, aka disjoint union (Section~\ref{s:join}).  And we
introduce the notion of compression \textit{onto a set} and
characterize the robustness of compression under this notion. In
particular, by a finite-injury priority argument with some interesting
features we show that there exist RE sets that each compress to the
other, yet that nonetheless are not
recursively isomorphic (Section~\ref{s:iso}).

\label{s:definitions}
\section{Definitions}

Throughout this paper, ``$\p$'' when used in a function context (e.g.,
the P-rankable sets) will denote the class of total, polynomial-time
computable functions from $\sigmastar$ to $\sigmastar$.
Additionally, throughout this paper,
$\Sigma = \{0,1\}$.
$\frec$ will denote the class of
total, recursive functions from $\sigmastar$ to $\sigmastar$. $\fre$
will denote the class of
partial recursive functions from $\sigmastar$ to $\sigmastar$.
$\emptystring$ will
denote the empty string. We define the function $\shift(x,n)$ for
$n \in \Z$. If $n \geq 0$, then $\shift(x,n)$ is the string $n$ spots
after $x$ in lexicographical order, e.g.,
$\shift(\emptystring,4)= 01$.  For $n > 0$, define $\shift(x,-n)$ as
the string $n$ spots before $x$ in lexicographical order, or
$\emptystring$ if no such string exists. We define the symmetric
difference $A \bigtriangleup B = (A-B) \cup (B-A)$. The symbol $\N$
will denote the natural numbers $\{0,1,2,3,\dots\}$.

We now define the notions of compressible and rankable sets.
\begin{definition}[Compressible
sets~\cite{hem-rub:jtoappear-with-tr-pointer:rft-compression1}]\label{d:compressible}
~\begin{enumerate}
\item\label{d:compressible:p1} Given a set $A\subseteq \sigmastar$, a (possibly partial) function $f$ is a
\emph{compression function for $A$} exactly if
\begin{enumerate}
\item $\domain(f) \supseteq A$,
\item $f(A) = \sigmastar$, and
\item for all $a$ and $b$ in $A$, if $a \neq b$ then
$f(a) \neq f(b)$.
\end{enumerate}
\item\label{d:compressible:p2} Let $\calf$ be any class of (possibly
partial) functions mapping from $\sigmastar$ to $\sigmastar$.  A
set $A$ is \emph{$\calf$-compressible} if some $f \in \calf$ is a
compression function for A.
\item\label{d:compressible:p3} For each $\calf$ as above,
$\calf\text{-compressible} = \{ A \condition A \text{ is }
\calf\text{-compressible}\}$ and
$\calf\text{-compressible}' = \allowbreak
\calf\text{-compressible} \cup \{A \subseteq \sigmastar\condition
A \text{ is a finite set}\}$.
\item\label{d:compressible:p4} For each $\calf$ as above and each
$\calc \subseteq 2^{\sigmastar}$, we say that $\calc$ is
\emph{$\calf$-compressible} if all infinite sets in $\calc$ are
$\calf$-compressible.
\end{enumerate}
\end{definition}

Note that a compression function $f$ for $A$ can have any behavior on
elements of $\overline{A}$ and need not even be defined. Finite sets
cannot have compression functions as they do not have enough elements
to be mapped onto $\Sigma^*$. Thus part 4 of
Definition~\ref{d:compressible:p2} defines a class to be
$\calf$-compressible if and only if its \emph{infinite} sets are
$\calf$-compressible.

Ranking can be informally
thought of as a sibling of compression that preserves lexicographical
order within the set.
We consider three classes of rankable functions that differ in how
they are allowed to behave on the complement of the set they rank.
Although ever since the paper of
Hemachandra and Rudich~\cite{hem-rud:j:ranking},
which introduced two of the three types, there have been those three
types of ranking classes, different papers have used different (and
sometimes conflicting) terminology for these types.  Here, we use the
(without modifying adjective) terms ``ranking function'' and ``rankable''
in the same way as Hemaspaandra and
Rubery~\cite{hem-rub:jtoappear-with-tr-pointer:rft-compression1} do, for the least
restrictive form of ranking (the one that can even ``lie'' on the
complement).  That is the form of ranking that is most naturally
analogous with compression, and so it is natural that both terms
should lack a modifying adjective.  For the most restrictive form of
ranking, which even for strings $x$ in the complement of the set $A$
being ranked must determine the number of strings up to $x$ that are
in $A$, like Hemachandra and Rudich~\cite{hem-rud:j:ranking} we use
the terms ``strong ranking function'' and ``strong(ly) rankable.''  And
for the version of ranking that falls between those two, since for
strings in the complement it need only detect that they are in the
complement, we use the terms ``semistrong ranking function'' and
``semistrong(ly) rankable.''

\begin{definition}[\cite{all:thesis:invertible,gol-sip:j:compression}]
$\rank_A(y) = \| \{ z \condition z \leq y\, \land \, z \in A \}\|$.
\end{definition}

\begin{definition}[Rankable sets,
\cite{all:thesis:invertible,gol-sip:j:compression}, see also
\cite{hem-rub:jtoappear-with-tr-pointer:rft-compression1}]\label{d:rankable}~\begin{enumerate}
\item\label{part:rfn} Given a set $A\subseteq \sigmastar$, a (possibly partial) function $f$ is a \emph{ranking
function for $A$} exactly if
\begin{enumerate}
\item $\domain(f) \supseteq A$ and
\item if $x \in A$, then $f(x) = \rank_A(x)$.

\end{enumerate}
\item Let $\calf$ be any class of (possibly partial) functions
mapping from $\sigmastar$ to $\sigmastar$.  A set $A$ is
\emph{$\calf$\trank} if some $f \in \calf$ is a ranking function
for $A$.
\item For each $\calf$ as above, $\calf$\trank =
$\{ A \condition A \text{ is } \calf\text{-rankable}\}$.
\item\label{d:rankable:part-class} For each $\calf$ as above and
each $\calc \subseteq 2^{\sigmastar}$, $\calc$ is
\emph{$\calf$-rankable} if all sets in $\calc$ are $\calf$\trank.
\end{enumerate}
\end{definition}

\begin{definition}[Semistrongly rankable sets,
\cite{hem-rud:j:ranking}, see also
\cite{hem-rub:jtoappear-with-tr-pointer:rft-compression1}]\label{d:ssrankable}~\begin{enumerate}
\item Given a set $A\subseteq \sigmastar$, a function $f$ is a \emph{semistrong ranking function
for $A$} exactly if
\begin{enumerate}
\item $\domain(f) =\Sigma^*$,
\item if $x \in A$, then $f(x) = \rank_A(x)$, and
\item if $x \notin A$, $f(x)$ indicates ``not in set'' (e.g., via
the machine computing $f$ halting in a special state; we still
view this as a case where $x$ belongs to $\domain(f)$).
\end{enumerate}
\item Let $\calf$ be any class of functions mapping from
$\sigmastar$ to $\sigmastar$.  A set $A$ is \emph{semistrong-$\calf$-rankable} if some $f \in \calf$ is a
semistrong ranking function for $A$.
\item For each $\calf$ as above,
semistrong-$\calf\text{-rankable} = \{ A \condition A \text{ is
semistrong-} \calf\text{-rankable}\}$.
\item For each $\calf$ as above and each
$\calc \subseteq 2^{\sigmastar}$, we say that $\calc$ is
\emph{semistrong-$\calf$-rankable} if all sets in $\calc$ are
semistrong-$\calf$-rankable.
\end{enumerate}
\end{definition}

\begin{definition}[Strongly rankable sets, \cite{hem-rud:j:ranking},
see also
\cite{hem-rub:jtoappear-with-tr-pointer:rft-compression1}]\label{d:srankable}~\begin{enumerate}
\item Given a set $A\subseteq \sigmastar$,
a function $f$ is a \emph{strong ranking function for
$A$} exactly if
\begin{enumerate}
\item $\domain(f) =\Sigma^*$ and
\item $f(x) = \rank_A(x)$.
\end{enumerate}
\item Let $\calf$ be any class of functions mapping from
$\sigmastar$ to $\sigmastar$.  A set $A$ is
\emph{strong-$\calf$-rankable} exactly if
$(\exists f \in \calf)[f \text{ is a strong ranking function for }
A]$.
\item For each $\calf$ as above,
strong-$\calf\text{-rankable} = \{ A \condition A \text{ is
strong-} \calf\text{-rankable}\}$.
\item For each $\calf$ as above and each
$\calc \subseteq 2^{\sigmastar}$, we say that $\calc$ is
\emph{strong-$\calf$-rankable} if all sets in $\calc$ are
strong-$\calf$-rankable.
\end{enumerate}
\end{definition}

For almost any natural class of functions,
$\calf$, we will have that $\calf$\trank{} is contained in
$\calf$\tcomp{}$'$.  In particular, $\p$, $\fre$, and $\frec$ each
have this property.
If $f$ is a ranking function for $A$ (in the sense of
part~\ref{part:rfn} of Definition~\ref{d:rankable}), for our
same-class compression function for $A$ we can map $x \in \Sigma^*$ to
the
$f(x)$-th string in
$\Sigma^*$ (where we consider $\epsilon$ to be the first string in
$\Sigma^*$) if $f(x) > 0$, and if $f(x) = 0$ what we map to is
irrelevant so map to any particular fixed string (for concreteness,
$\epsilon$).

For each class
$\calc \subseteq 2^{\sigmastar}$,
$\calc^{\complement}$ will denote
the complement of $\calc$, i.e., $2^{\sigmastar} - \calc$.  For
example, \nwpr{} is the class of non-\wpr{} sets.

The class \pr{} is a subset of $\p$ (indeed, 
a strict 
subset unless $\p=\p^{\sharpp}$~\cite{hem-rud:j:ranking}), 
but there exist undecidable sets that are \wpr.
Clearly, 
the class of semistrong-\rec\trank{} sets equals
the class of strong-\rec\trank{} sets.

\section{\boldmath{}Compression onto \texorpdfstring{$B$}{B}:
Robustness with Respect to Target Set}\label{s:iso}

A compression function for a set $A$ is 1-to-1 and onto $\sigmastar$
when
the function's domain is restricted to $A$.
It is natural to wonder what changes when we switch target sets from
$\sigmastar$ to some other set $B \subseteq \sigmastar$. We now define
this notion.  In our definition, we do allow strings in $\overline{A}$
to be mapped to $B$ or to $\overline{B}$, or even, for the case of
$\fre$ maps,
to be undefined.  
In particular, this definition does not require that $f(\sigmastar) = B$.
Recall from Section~\ref{s:definitions} that, throughout this paper, 
$\Sigma = \{0,1\}$.

\begin{definition}[Compressible to
$B$]\label{d:compressible-to-B-175.1-dan}~
\begin{enumerate}
\item Given sets $A\subseteq \sigmastar$ and
$B \subseteq \sigmastar$, a (possibly partial)
function $f$ is a \emph{compression function
for $A$ to $B$}
exactly if
\begin{enumerate}
\item $\domain(f) \supseteq A$,
\item $f(A) = B$, and

\item for all $a$ and $b$ in $A$, if $a \neq b$ then
$f(a) \neq f(b)$.
\end{enumerate}
\item Let $\calf$ be any class of (possibly partial) functions
mapping from $\sigmastar$ to $\sigmastar$.  A set $A$ is
\emph{$\calf$-compressible to $B$} if some $f \in \calf$ is a
compression function for $A$ to $B$.
\end{enumerate}
\end{definition}

The classes $\calf$ of interest to us will be $\frec$ and $\fre$.
Clearly, compression is simply the $B=\sigmastar$ case of this
definition, e.g.,
a function $f$ is a compression function for $A$ if and only if $f$ is
a compression function for $A$ to $\sigmastar$, and set $A$ is
$\calf$\tcomp\ if and only if $A$ is $\calf$\tcomp\ to $\sigmastar$.

A natural first question to ask is whether compression to $B$ is a new
notion, or whether it coincides with our existing notion of
compression to $\sigmastar$, at least for sets $B$ from common classes
such as $\rec$ and $\re$. The following result shows that for $\rec$
and $\re$ this new notion does coincide with our existing one.

\begin{theorem}\label{t:176.2}
Let $A$ and $B$ be infinite sets.
\begin{enumerate}
\item\label{t:176.2:part-rec} If $B \in \rec$, then $A$ is \frc{} to
$B$ if and only if $A$ is \frc{} to $\sigmastar$.
\item\label{t:176.2:part-re} If $B \in \re$, then $A$ is \fprc{} to
$B$ if and only if $A$ is \fprc{} to $\sigmastar$.
\end{enumerate}
\end{theorem}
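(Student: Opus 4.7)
The plan is to reduce compression onto $B$ to compression onto $\sigmastar$ by building a bijection between $\sigmastar$ and $B$ of the appropriate complexity in each case, and then composing with the given compression function. Specifically, I will construct a total recursive bijection $g : \sigmastar \to B$ whose inverse is total recursive when $B \in \rec$ and partial recursive with $\domain(g^{-1}) = B$ when $B \in \re$. Since $B$ is infinite, I enumerate $B$ without repetition as a total recursive function $e : \N \to B$: when $B \in \rec$, take the lex-order enumeration of $B$; when $B \in \re$, run any recursive enumeration of $B$ and discard duplicates on the fly, which is always well-defined because $B$ being infinite guarantees a fresh element is eventually found. Then define $g(x) = e(i)$, where $i$ is the lex-index of $x$ in $\sigmastar$. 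In both cases $g$ is total recursive and bijective; when $B \in \rec$, $g^{-1}$ is total recursive (by decidability of $B$, or equivalently by using lex-enumeration of $B$ to invert), and when $B \in \re$, $g^{-1}$ is partial recursive with domain exactly $B$ (on input $b$, search for the least $i$ with $e(i) = b$; this halts iff $b \in B$).

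With $g$ in hand, both directions of both parts follow by composition. For ($\Rightarrow$), suppose $f \in \calf$ compresses $A$ to $B$, where $\calf \in \{\frec, \fre\}$. Then $g^{-1} \circ f$ is defined on all of $A$ since $f(A) \subseteq B \subseteq \domain(g^{-1})$, is injective on $A$ because both $f$ and $g^{-1}\!\upharpoonright_B$ are, and satisfies $(g^{-1} \circ f)(A) = g^{-1}(B) = \sigmastar$. The composition lies in $\frec$ in part~\ref{t:176.2:part-rec} and in $\fre$ in part~\ref{t:176.2:part-re}. For ($\Leftarrow$), if $f$ compresses $A$ to $\sigmastar$, then $g \circ f$ is defined on $A$, is injective on $A$, and maps $A$ onto $g(\sigmastar) = B$; and it lies in the correct class because $g$ is total recursive, so composing with $g$ preserves both $\frec$ and $\fre$ membership.

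No step is particularly deep. The main subtlety lies entirely in the RE case: one must justify that the deduplicated enumeration $e$ is well-defined as a total recursive function, which rests crucially on $B$ being infinite, and must confirm that using a partial-recursive $g^{-1}$ in the ($\Rightarrow$) composition is legitimate because $g^{-1}$ is only ever applied to elements of $f(A) \subseteq B$, precisely its domain. Once this is checked, the theorem is a straightforward composition argument.
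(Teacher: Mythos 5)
Your proof is correct and follows essentially the same route as the paper: the paper's four constructions (using $\rank_B$ when $B$ is recursive and a repetition-free enumerator of $B$ when $B$ is r.e.) are exactly your bijection $g$ and its (suitably extended or partial) inverse, applied by composition. Packaging the translation as a single bijection $g:\sigmastar\to B$ is just a tidier presentation of the same idea, and your attention to the two subtleties (the deduplicated enumeration exists because $B$ is infinite, and $g^{-1}$ is only applied to $f(A)\subseteq B$) matches the paper's handling.
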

\begin{proof}{}
We first prove part~\ref{t:176.2:part-rec}, beginning with the
``if'' direction.

Suppose $A$ is $\frec$-compressible to $\sigmastar$ by a recursive
function $f$, and suppose $B$ is recursive and infinite. Let $f'(x)$
output the element $y \in B$ such that $\rank_{B}(y) = f(x)$.  Then
$f'$ is recursive, and $A$ is \frc{} to $B$ by $f'$.

For the ``only if'' direction, let $B$ be an infinite recursive
set. Suppose that $A$ is \frc{} to $B$ by a recursive function
$f$. Let $f'(x)=\emptystring$ if $f(x)$ is not in $B$. Otherwise,
let $f'(x)=\rank_B(f(x))$. Then $f'$ is recursive, and $A$ is
\frc{} to $\sigmastar$ by $f'$.

Let us turn to part~\ref{t:176.2:part-re} of the theorem.  Again, we
begin with the ``if'' direction. Let $B$ be an infinite $\re$ set,
and let $E$ enumerate the elements of $B$ without
repetitions. Suppose $A$ is \fprc{} to $\sigmastar$ by a partial
recursive function $f$. Then $f'$ does the following on input $x$.
\begin{enumerate}
\item Simulate $f(x)$. This may run forever if
$x \not \in \domain(f)$.
\item If $f(x)$ outputs a value, simulate $E$ until it enumerates
$f(x)$ strings.
\item Output the $f(x)$-th string enumerated by $E$.
\end{enumerate}

The function $f'$ is partial recursive, and $A$ is \fprc{} to $B$
via $f'$.

For the ``only if'' direction, let $B$ be infinite and $\re$ and let
$E$ be an enumerator for $B$.  Suppose $A$ is \fprc{} to $B$ via a
partial recursive function $f$. On input $x$, our $f'$ will work as
follows.
\begin{enumerate}
\item Simulate $f(x)$.
\item If $f(x)$ outputs a value, run $E$ until it enumerates
$f(x)$. This step may run forever if $f(x) \not \in B$.
\item Suppose $f(x)$ is the $l$th string output by $E$. Then output
the $l$th string in $\sigmastar$.
\end{enumerate}

$f'$ is partial recursive, and $A$ is \fprc{} to $\sigmastar$ by
$f'$.
\end{proof}

Theorem~\ref{t:176.2} covers the two most natural pairings of set
classes with function classes: recursive sets $B$ with $\frec$
compression, and $\re$\ sets $B$ with $\fre$ compression. What about
pairing recursive sets under $\fre$ compression, or $\re$ sets under
recursive compression? We note as the following theorem that one and a
half of the analogous statements hold, but the remaining direction
fails.

\begin{theorem}\label{c-now-t:180.2-and-180.3}

\begin{enumerate}
\item Let $A$ and $B$ be infinite sets and suppose that
$B \in \rec$. Then $A$ is \fprc{} to $B$ if and only if $A$ is
\fprc{} to $\sigmastar$.
\item Let $A$ and $B$ be infinite sets with $B \in \re$. If $A$ is
\fprc{} to $\sigmastar$, then $A$ is \fprc{} to $B$. In fact, we
may even require that the compression function for $A$ to $B$
satisfies $f(\sigmastar) = B$.

\item There are infinite sets $A$ and $B$ with $B \in \re$ such that
$A$ is \frc{} to $B$ but $A$ is not \frc{} to $\sigmastar$.
\end{enumerate}

\end{theorem}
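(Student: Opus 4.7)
The plan splits along the three parts, with parts~1 and~2 reducing directly to Theorem~\ref{t:176.2} and part~3 requiring a small structural observation about total-recursive compression of r.e.\ sets.

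For part~1, since $\rec \subseteq \re$, the biconditional is an immediate instance of Theorem~\ref{t:176.2} part~\ref{t:176.2:part-re}. For part~2, the first assertion (that $A$ is \fprc{} to $B$) again follows from Theorem~\ref{t:176.2} part~\ref{t:176.2:part-re}. To additionally guarantee $f(\sigmastar) = B$, I would reuse the construction from the ``if'' direction of that proof: given a partial recursive compression $g$ of $A$ onto $\sigmastar$ and an enumerator $E$ of $B$ without repetitions, define $f(x)$ to be the $g(x)$-th string enumerated by $E$ (and undefined wherever $g$ is undefined). Then $f(A) = B$ because $g(A) = \sigmastar$ and $E$ hits every element of $B$, while $f(\sigmastar) \subseteq B$ by construction, so $f(\sigmastar) = B$ as required.

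The heart of the theorem is part~3. The plan is to establish the following key observation: \emph{every r.e.\ set that is \frc{} to $\sigmastar$ is already recursive.} Given a total recursive compression function $h$ for an r.e.\ set $A$ to $\sigmastar$ together with an enumerator $E_A$ of $A$, the following procedure decides membership in $A$ on input $x$: compute $y = h(x)$, then enumerate $A$ via $E_A$ until some $a$ is produced with $h(a) = y$. Such an $a$ must appear because $h(A) = \sigmastar$, and $a$ is the \emph{unique} element of $A$ with $h(a) = y$ because $h$ is injective on $A$. Therefore $x \in A$ if and only if $a = x$.

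Granting this observation, part~3 follows by letting $A = B$ be any infinite r.e.\ non-recursive set, e.g., the halting problem $K$. The identity map is a total recursive compression of $A$ onto $B$, witnessing that $A$ is \frc{} to $B$; the key observation rules out any \frc{} compression of $A$ to $\sigmastar$. The main obstacle is the key observation itself---one has to be careful that the unrestricted behavior of $h$ on $\overline{A}$ does not sabotage the procedure, and it is precisely the combination of $h(A) = \sigmastar$ and the injectivity of $h$ on $A$ that lets us safely ignore those collisions.
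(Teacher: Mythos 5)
Your proposal is correct and follows essentially the same route as the paper: parts~1 and~2 are reduced to Theorem~\ref{t:176.2}, part~\ref{t:176.2:part-re} (with the same observation that the constructed function's outputs all lie in $B$ while its image on $A$ is all of $B$, so $f(\sigmastar)=B$), and part~3 takes $A=B$ to be an infinite set in $\re-\rec$ with the identity map as the compression function. The only difference is that the paper simply cites \cite{hem-rub:jtoappear-with-tr-pointer:rft-compression1} for the fact that no set in $\re-\rec$ is \frc{} to $\sigmastar$, whereas you prove it directly, and your argument (given total recursive $h$, enumerate $A$ until the unique $a\in A$ with $h(a)=h(x)$ appears and compare it with $x$) is a correct proof of that cited fact.
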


\begin{proof}
The first part follows immediately from Theorem~\ref{t:176.2},
part~\ref{t:176.2:part-re}.  The second part follows as a corollary
to the proof of Theorem~\ref{t:176.2}, part~\ref{t:176.2:part-re}.
In particular, the proof of the ``$\Leftarrow$'' direction proves
the second part, since it is clear that if $f$ is a recursive
function the $f'$ defined there is also recursive.

The third part follows from~\cite{hem-rub:jtoappear-with-tr-pointer:rft-compression1} in
which it is shown that any set in $\re - \rec$ is not \frc{} to
$\sigmastar$. Thus if we let $A = B$ be any set in $\re-\rec$, then
A is \frc{} to $B$ by the function $f(x) = x$ but $B$ is not \frc{}
to $\sigmastar$.
\end{proof}

Another interesting question is how recursive compressibility to $B$
is, or is not, linked to recursive isomorphism.
Recall two sets $A$ and $B$ are recursively isomorphic if there exists
a recursive bijection $f:\sigmastar \to \sigmastar$ with $f(A) = B$.
Although recursive
isomorphism of sets implies mutual compressibility to each other, we
prove via a finite-injury priority argument that the converse does not
hold (even when restricted to the $\re$ sets). The argument has an
interesting graph-theoretic flavor, and involves 
queuing infinitely
many strings to be added to a set at once.

\begin{theorem}\label{t:183.2-dan-3-28}
If $A \equiv_{\textit{iso}} B$, then $A$ is \frec{}-compressible to $B$ and $B$ is
\frec{}-compressible to $A$.
\end{theorem}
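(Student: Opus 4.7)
The plan is to show that the very bijection witnessing recursive isomorphism already serves as the required compression function, once we check that it satisfies the defining conditions of Definition~\ref{d:compressible-to-B-175.1-dan}.

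Assume $A \equiv_{\textit{iso}} B$, so let $f : \sigmastar \to \sigmastar$ be a recursive bijection with $f(A) = B$. I claim that $f$ itself is an $\frec$-compression function for $A$ to $B$. First, $f$ is total recursive, so $\domain(f) = \sigmastar \supseteq A$, satisfying condition~(a). Second, $f(A) = B$ holds by hypothesis, satisfying condition~(b). Third, since $f$ is a bijection on all of $\sigmastar$, it is certainly injective on $A$, satisfying condition~(c). Hence $A$ is $\frec$-compressible to $B$.

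For the symmetric claim, I would use the inverse map $f^{-1}$. The key observation is that $f^{-1}$ is itself total recursive: given $y \in \sigmastar$, one enumerates strings $x \in \sigmastar$ in lexicographic order, computes $f(x)$ (which halts since $f$ is total recursive), and returns the unique $x$ with $f(x) = y$; the search terminates because $f$ is a bijection of $\sigmastar$. Since $f^{-1}$ is a recursive bijection of $\sigmastar$ carrying $B$ to $A$, the same argument as above (with the roles of $A$ and $B$ swapped, and $f$ replaced by $f^{-1}$) shows that $f^{-1}$ is an $\frec$-compression function for $B$ to $A$. Hence $B$ is $\frec$-compressible to $A$.

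No obstacle of any substance arises: the content of the theorem is merely that $\frec$-compression to $B$ is strictly weaker than recursive isomorphism, because the compression notion imposes only injectivity on $A$ (not on all of $\sigmastar$) and only requires the image of $A$ (not of $\sigmastar$) to equal $B$. The real work in the surrounding development will be the converse separation, which is asserted to follow from a finite-injury priority argument; the present theorem is the easy half and is just a direct verification from the definitions.
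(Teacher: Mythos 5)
Your proof is correct and follows essentially the same route as the paper's: use the isomorphism $f$ itself as the $\frec$-compression function for $A$ to $B$, and its recursive inverse for the other direction. Your explicit search argument just fills in the paper's remark that every recursive isomorphism has a recursive inverse.
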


\begin{theorem}\label{t:priority}
There exist $\re$ sets $A$ and $B$ such that $A$ is
$\frec$-compressible to $B$ and $B$ is $\frec$-compressible to $A$,
yet $A \not\equiv_{\textit{iso}} B$.
\end{theorem}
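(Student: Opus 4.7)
My plan is a finite-injury priority construction that builds RE sets $A$ and $B$ together with total recursive compression functions $f$ and $g$ witnessing mutual $\frec$-compressibility. By Myhill's theorem, $A\equiv_{\textit{iso}}B$ is equivalent to $A\equiv_1 B$, so it suffices to ensure $A\not\leq_1 B$. The requirements are therefore $R_e$: $\phi_e$ is not a $1$-reduction from $A$ to $B$, as $\phi_e$ ranges over a standard listing of partial recursive functions. To satisfy $R_e$ it suffices to find a witness $x$ with either $x\in A$ and $\phi_e(x)\notin B$, or $x\notin A$ and $\phi_e(x)\in B$, or to detect that $\phi_e$ is not a total injection.

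To build mutual compressibility structurally, I partition $\sigmastar$ into disjoint infinite recursive blocks $\{D_e\}_{e\geq 0}$ and split each $D_e$ into halves $D_e^\alpha$ and $D_e^\beta$, both infinite recursive. I confine $A$ to $\bigcup_e D_e^\alpha$ and $B$ to $\bigcup_e D_e^\beta$. Within each block the construction maintains a growing recursive bipartite matching between elements of $D_e^\alpha$ and elements of $D_e^\beta$: at each stage certain pairs $(a_i^e,b_i^e)$ are committed by a recursive schedule, and separately some of those pairs actually get enumerated into $A\times B$. For $x\in D_e^\alpha$, the value $f(x)$ is computed by running the schedule until $x$ receives some committed pair index $i$ and then outputting $b_i^e$; $g$ is defined symmetrically. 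The critical enabling point is that every string in $D_e^\alpha$ receives a pair index recursively regardless of whether it is enumerated into $A$, so $f$ and $g$ are total recursive; the RE-but-not-recursive information is merely which of the committed indices have been enumerated.

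The strategy for $R_e$ picks a fresh large witness $x\in D_e^\alpha$, waits until $\phi_e(x)$ converges to some $y$, and then arranges $x\in A$ while permanently restraining $y$ from $B$; if $y$ lies in a position no longer restrainable (because higher-priority actions have already fixed its fate), the strategy instead enumerates $y$ into $B$ while keeping $x$ out of $A$. Either way, the action disturbs the scheduled pair-matching in the blocks containing $x$ and $y$: the partner promised by the original schedule is now wrong, and restrained $y$ may never serve as some $b_i^{e'}$. To repair the matching while keeping $f$ and $g$ total, the strategy reassigns an infinite tail of pair indices inside the affected blocks, which amounts to queueing infinitely many coordinated future enumerations into $A$ and $B$ simultaneously. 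This is the graph-theoretic feature alluded to in the excerpt: we are locally re-routing an infinite bipartite matching to absorb a single broken edge.

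The main obstacle is proving that this re-routing is simultaneously compatible with $f$ and $g$ staying total recursive, with $A$ and $B$ staying RE (and infinite), and with each $R_e$ eventually being permanently satisfied. The resolution is that re-routings are deterministic and triggered at stages known to the construction, so the pair-index assignment remains a piecewise recursive function of $x$; and because each $R_e$ injures lower-priority strategies at most once, the matching on any particular element stabilizes. Standard finite-injury bookkeeping then shows that $f$ restricted to $A$ bijects $A$ onto $B$ and $g$ restricted to $B$ bijects $B$ onto $A$, while a persistent witness for each $R_e$ certifies $A\not\leq_1 B$, giving $A\not\equiv_{\textit{iso}}B$ by Myhill.
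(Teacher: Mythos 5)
There is a genuine gap here, and it is structural rather than a matter of finite-injury bookkeeping. Your design insists that the pairing between the $A$-side and the $B$-side be a matching assigned by a recursive schedule to \emph{every} string of $D_e^\alpha$, ``regardless of whether it is enumerated into $A$,'' so that $f$ is total recursive; and compression then forces matched pairs to enter or stay out together: if $a\in A$ then $f(a)=b$ must be in $B$, and conversely if $b\in B=f(A)$ and $f$ is injective on all of $D_e^\alpha$ (which a matching defined on every string is), the only possible preimage of $b$ in $A$ is $a$. Hence for every $x$ in an $\alpha$-region you get $x\in A\iff f(x)\in B$ with $f$ total recursive and injective there. Extending $f$ off the $\alpha$-regions by any recursive injection into the $\alpha$-regions (whose elements are never in $B$, and whose images are disjoint from the range of $f$ on the $\alpha$-regions) yields a total recursive injection $h$ with $x\in A\iff h(x)\in B$ for all $x$, i.e., $A\leq_1 B$; the symmetric argument with $g$ gives $B\leq_1 A$. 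By Myhill's theorem --- the very theorem you invoke --- this forces $A\equiv_{\textit{iso}}B$, the opposite of the goal, so the requirements $R_e$ cannot all be met; concretely, the $\phi_e$ equal to $h$ defeats every witness, since $\phi_e(x)$ is always $x$'s own committed partner or a string guaranteed outside $B$, and the pairing constraint makes the desired disagreement impossible. The proposed ``re-routing of an infinite tail of pair indices'' cannot rescue this: since $f$ must be a fixed total recursive function, the value $f(x)$ is frozen the moment the schedule assigns $x$ an index, so whatever final assignment emerges is itself a recursive injective rule and is exactly the reduction $h$.

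What is missing is the key idea in the paper's proof: the single recursive function witnessing compressibility in both directions must be \emph{non-injective off the constructed sets}. The paper fixes one recursive $f$ whose graph is a branching, tree-like structure with a recursive $2$-coloring $\ell$, and then builds a single $\re$ ``path set'' $C$ --- closed under $f$, with exactly one $f$-predecessor inside $C$ for each of its elements --- so that $f$ restricted to $C$ maps each color class bijectively onto the other; $A$ and $B$ are the two color classes $C_0$ and $C_1$. Because several strings share each $f$-image, which branch actually lies in $C$ is not recursively determined, and the finite-injury construction uses exactly that slack (choosing among the branch types according to where $\varphi_n$ sends a base point $\langle 0,0,k\rangle$) to defeat every candidate isomorphism while keeping $C$ a path set. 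Your framework has no analogue of this freedom, and without it mutual $\frec$-compressibility via everywhere-defined injective total functions collapses to recursive isomorphism.
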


\begin{proof}[Proof of Theorem~\ref{t:183.2-dan-3-28}]
Now $A$ is \frc{} to $B$ by simply letting our $\frec$-compression
function be the recursive isomorphism function $f$.  Since each
recursive isomorphism has a recursive inverse, $B$ is \frc{} to $A$
by letting our $\frec$-compression function be the inverse of $f$.
\end{proof}

\noindent{%
\color{darkgray}\sffamily\bfseries{}Proof of
Theorem~\ref{t:priority}.}\ 
Before defining $A$ and $B$, we will define a function $f$ which will
serve as both a compression function from $A$ to $B$ and a compression
function from $B$ to $A$.
First, fix a recursive isomorphism between $\sigmastar$ and
$\{\langle t,j,k \rangle \condition t \in \{0,1,2,3\} \land j,k \in
\N\}$.  Now we will define $f$ as follows. For each $j,k \in \N$, let
$f(\langle 3,j,k\rangle) = \langle 3,j+1,k\rangle$.  For each
$j,k \in \N$, $j>0$, and $t \in \{0,1,2\}$, let
$f(\langle t,j,k\rangle) = \langle t,j-1,k\rangle$.  Finally, for each
$k \in \N$, let $f(\langle 0,0,k\rangle) = \langle 3,0,k\rangle$,
$f(\langle1,0,k\rangle) = \langle0,0,k\rangle$, and
$f(\langle2,0,k\rangle) = \langle3,0,k\rangle$.  Let
$\ell:\sigmastar \to \{0,1\}$ be the unique function such that
$\ell(\langle0,0,k\rangle) = 0$ for all $k \in \N$ and
$\ell(f(x)) = 1-\ell(x)$. Let $D_f$ be the directed graph with edges
$(x,f(x))$. Note that $\ell$ is a 2-coloring of $D_f$ if we treat the
edges as being undirected. See Figure \ref{f:df}.

Call a set $C$ a \emph{path set} if for all $x \in C$, $f(x) \in C$
and there is exactly one $y \in C$ such that $f(y) = x$.  Suppose $C$
is a path set. Let $C_i = \{x \in C \condition \ell(x) = i\}$ for
$i \in \{0,1\}$. By the assumed property of $C$, we have $C_0$ and
$C_1$ are \frc{} to each other by $f$.  Furthermore, if $C$ is $\re$
then so are $C_0$ and $C_1$ since
$C_i = C \cap \{x \condition \ell(x) = i\}$ is the intersection of an
$\re$ set with a recursive set.  If we provide an enumerator for a
path set $C$ such that $C_0 \not \equiv_{\textit{iso}} C_1$, we may
let $A = C_0$ and $B = C_1$ and be done.

Our enumerator for $C$ proceeds in two interleaved types of stages:
printing stages $P_i$ and evaluation stages $E_i$. More formally, we
proceed in stages labeled $E_i$ and $P_i$ for $i \geq 1$, interleaved
as $E_1$,$P_1$,$E_2$,$P_2$,$\dots$,$E_n$,$P_n$,$\dots$ when
running. We also maintain a set $Q$ of elements of the form
$\langle t,k \rangle$, where $t \in \{0,1,2\}$ and $k \in \N$. This
set $Q$ will only ever be added to as the procedure runs.
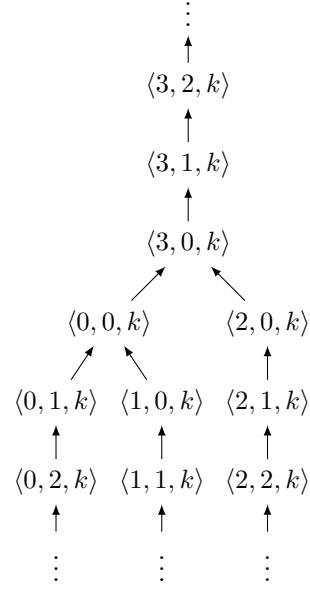
\begin{wrapfigure}{r}{0pt}
\tikzset{edge from parent/.style={draw,latex-}}
\begin{tikzpicture}[ level 4/.style={sibling distance=6em}, level
5/.style={sibling distance=4em}, level distance=3em ]
\node{$\vdots$} child { node{$\langle 3,2,k \rangle$} child {
node{$\langle 3,1,k \rangle$} child {
node{$\langle 3,0,k \rangle$} child {
node{$\langle 0,0,k \rangle$} child {
node{$\langle 0,1,k \rangle$} child {
node{$\langle 0,2,k \rangle$} child {
node{$\vdots$}}}} child {
node{$\langle 1,0,k \rangle$} child {
node{$\langle 1,1,k \rangle$} child {
node{$\vdots$}}}} } child {
node{$\langle 2,0,k \rangle$} child {
node{$\langle 2,1,k \rangle$} child {
node{$\langle 2,2,k \rangle$} child {
node{$\vdots$}}}}} }}} ;
\end{tikzpicture}
\caption{A diagram of $D_f$, for fixed $k$.}
\label{f:df}
\end{wrapfigure}

In the printing stage $P_i$, we do the following for every
$\langle t,k\rangle$ in $Q$. Enumerate $\langle3,j,k\rangle$ and
$\langle t,j,k\rangle$ for all $j \leq i$. If $t=1$, additionally
enumerate $\langle 0,0,k\rangle$.
Adding an element $\langle t,k\rangle$ to $Q$ in some evaluation stage
$E_i$ is essentially adding an infinite path of nodes in $D_f$ to $C$.

In addition to $Q$, we also maintain an integer $b$ and a set $R$ of
elements $\langle n, k \rangle$ where $n,k \in \N$. If
$\langle n,k\rangle \in R$ after stage $i$, it signifies that we have
not yet satisfied the condition that $\varphi_n$, the $n$th partial
recursive function, is not an isomorphism function between $C_0$ and
$C_1$. In stage $E_i$ we perform the following.  Add
$\langle i,b\rangle$ to $R$.  Increment $b$ by one.  For each
$\langle n,k\rangle \in R$, run $\varphi_n$, the $n$th partial
recursive function, on $\langle 0,0,k\rangle$ for $i$ steps.  If none
of these machines halt in their allotted time, end the
stage. Otherwise, let $n_i$ be the smallest number such that
$\varphi_{n_i}$ produced an output $w_i = \langle x_i,y_i,z_i\rangle$
on its respective input $\langle 0,0,k_i\rangle$. We now break into
cases:
\begin{enumerate}
\item If $\ell(w_i) = 0$ add $\langle 0,k_i \rangle$ to $Q$.

\item If $z_i \neq k_i$ and $\ell(w_i) = 1$ and as it stands $w_i$
would not be printed eventually if there were only type $P$ stages
from now on, add $\langle 0,k_i,\rangle$ to $Q$.
\item If $z_i \neq k_i$ and $\ell(w_i) = 1$ and as it stands $w_i$
would be printed eventually if there were only type $P$ stages from
now on, do nothing.
\item If $z_i = k_i$ and $\ell(w_i) = 1$ and $x_i = 0$, add
$\langle 1,k_i\rangle$ to $Q$.
\item If $z_i = k_i$ and $\ell(w_i) = 1$ and either $x_i = 1$ or
$x_i = 2$, add $\langle 0,k_i\rangle$ to $Q$.
\item If $z_i = k_i$ and $\ell(w_i) = 1$ and $x_i = 3$, add
$\langle 2,k_i\rangle$ to $Q$.
\end{enumerate}
Set $b = \max(k_i,z_i)+1$. Remove all pairs $\langle n,k \rangle$ with
$n \geq n_i$ from $R$. Then for each $n$ from $n_i+1$ to $i$, first
add $\langle n,b \rangle$ and subsequently increment $b$ by 1.

We will first prove that $C$ is a path set. If $x \in C$, then it is
printed in some printing stage $P_i$. By tracing the definition of $f$
and the procedure for printing stages, one can verify that both
$f(x)$ and exactly one $y$ such that $f(y) = x$ will be printed in
stage $P_{j}$ for $j \geq i$. This string $y$ will be the only one
ever printed, since no two elements with the same second coordinate
will ever be added to $Q$, as every element added to $Q$ has the
current state of $b$ as its second coordinate, and $b$ only ever
strictly increases between additions to $Q$.

Let $F_n$ be the condition that $\varphi_n$ fails to be a recursive
isomorphism of $C_0$ onto $C_1$. Fix $n$. Say during $E_i$ we have
$n_i = n$.  In cases 1, 2, 4, and 5, we force $\varphi_{n}$ to map
$\langle 0,0,k_i \rangle \in C_0$ to something out of $C_1$.  In cases
3 and 6, we force $\varphi_{n}$ to map
$\langle 0,0,k_i\rangle \notin C_0$ to something in $C_1$. Thus
whenever at stage $i$ we have $n_i = n$, condition $F_n$ becomes
satisfied, though perhaps not permanently. Specifically, in case 2,
$w$ could be printed later to satisfy some other $F_{m}$ and in doing
so ``injure'' $F_n$. However, note that during $E_i$ the variable $b$ is set
to $\max(k_i,z_i)$, thus $F_n$ can only be injured when satisfying
conditions $F_m$ for $m < n$.  Pairs with first coordinate $n$ will
only ever be added to $R$ when after satisfying some such $F_m$, in
addition to once initially, so in total only a finite number of times.
If $\varphi_{n}$ always halts, $F_n$ will eventually be satisfied and
never injured again.

This proves that $C$ is a path set such that
$C_0 \not\equiv_{\mathit{iso}} C_1$. Thus $C_0$ and $C_1$ are $\re$
sets that are \frc{} to each other by $f$, but are not recursively
isomorphic.
\ \hfill{\textcolor{darkgray}{\ensuremath{\blacktriangleleft}}}

For those interested in the issue of isomorphism in the context of
complexity-theoretic functions, which was not the focus above, we
mention that: Hemaspaandra, Zaki, and
Zimand~\cite{hem-zak-zim:c:semi-rankable} prove that the P-rankable
sets are not closed under~$\equiv^p_{\textit{iso}}$; Goldsmith and
Homer~\cite{gol-hom:j:scalability} prove that the strong-P-rankable
sets are closed under $\equiv^p_{\textit{iso}}$ if and only if
$\p=\psharpp$; and \cite{hem-zak-zim:c:semi-rankable} notes that
the semistrong-P-rankable sets similarly
are closed under $\equiv^p_{\textit{iso}}$ if and only if
$\p=\psharpp$.

\section{Closures and Nonclosures under Boolean
Operations}\label{s:boolean}

We now move on to a main focus of this paper, the closure properties
of the compressible and the rankable sets. We explore these properties
both in the complexity-theoretic and the recursion-theoretic
domains. Table \ref{table} on page~\pageref{table} summarizes our
findings.

\begin{lemma}\label{lem10} Let $A$ and $B$ be \spr{}. Then $A \cup B$
is \spr{} if and only if $A \cap B$ is.
\end{lemma}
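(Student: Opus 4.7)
The plan is to use the inclusion--exclusion identity for ranks. For any sets $A, B \subseteq \sigmastar$ and any $x \in \sigmastar$, counting elements $z \leq x$ gives
$$\rank_{A \cup B}(x) + \rank_{A \cap B}(x) = \rank_A(x) + \rank_B(x).$$
This is the only real content; once it is in hand the lemma follows by rearrangement.

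First I would observe that because $A$ and $B$ are \spr{}, there exist polynomial-time total functions $f_A$ and $f_B$ with $f_A(x) = \rank_A(x)$ and $f_B(x) = \rank_B(x)$ for every $x \in \sigmastar$ (not just for $x \in A$ or $x \in B$). This totality is what makes strong rankability, rather than the weaker variants, the natural hypothesis here: we can freely add, subtract, and compare these values on any input.

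For the forward direction, assume $A \cup B$ is \spr{} by some polynomial-time $f_{\cup}$. Define $g(x) = f_A(x) + f_B(x) - f_{\cup}(x)$. The identity above shows $g(x) = \rank_{A \cap B}(x)$ for all $x$, so $g$ is a strong ranking function for $A \cap B$, and it is clearly polynomial-time computable since all four of the rank values are bounded by $2^{|x|+1}$ and hence have polynomial bit-length. The reverse direction is symmetric: assuming $A \cap B$ is \spr{} via $f_{\cap}$, the function $h(x) = f_A(x) + f_B(x) - f_{\cap}(x)$ is a polynomial-time strong ranking function for $A \cup B$.

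There is no real obstacle; the only subtle point is to invoke the fact that strong rankability provides rank values on \emph{every} input in $\sigmastar$, so the arithmetic combination above is well-defined and polynomial-time computable everywhere. The same proof would not go through for semistrong or ordinary P-rankability, since there we lack rank values on the complements; this is exactly why the corresponding closure question for \pr{} is more delicate and gets tied to $\p = \up \cap \coup$ later in the paper.
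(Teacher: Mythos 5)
Your proof is correct and uses exactly the same idea as the paper: the inclusion--exclusion identity $\rank_{A\cup B}(x)+\rank_{A\cap B}(x)=\rank_A(x)+\rank_B(x)$, combined with the fact that strong ranking functions are total, so the arithmetic rearrangement is polynomial-time computable on every input. You have simply written out the details the paper leaves implicit; no further changes are needed.
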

\begin{proof}
The identity
$\rank_{A \cap B}(x) + \rank_{A \cup B} = \rank_A(x) + \rank_B(x)$
allows us to compute either of $\rank_{A \cap B}(x)$ or
$\rank_{A \cup B}(x)$ from the other.
\end{proof}

\begin{theorem}\label{thm11} The following conditions are equivalent:
\begin{enumerate}
\item the classes \spr{} and \pr{} are closed under intersection,
\item the classes \spr{} and \pr{} are closed under union, and
\item $\p = \psharpp$.
\end{enumerate}
\end{theorem}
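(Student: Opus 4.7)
The plan is to establish the cycle $(iii) \Rightarrow (i)$, $(iii) \Rightarrow (ii)$, $(i) \Rightarrow (iii)$, $(ii) \Rightarrow (iii)$.

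For $(iii) \Rightarrow (i)$ and $(iii) \Rightarrow (ii)$: if $\p = \psharpp$, then for each $A \in \p$ the function $\rank_A$, which lies in $\sharpp$ since $A$ is polynomial-time decidable, is polynomial-time computable. Hence every $A \in \p$ is strong-P-rankable, so \spr{} and \pr{} both equal \p{}, and \p{} is trivially closed under intersection and union.

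For $(i) \Rightarrow (iii)$: using just the \spr{} half of $(i)$, I will show that for every $L \in \p$, $\rank_L$ is polynomial-time computable; the Hemachandra--Rudich characterization cited in the introduction (the class \spr{} equals \p{} iff $\p = \psharpp$) then finishes the argument. Fix $L \in \p$ decided by a deterministic polynomial-time TM $M$, let $\mathrm{comp}_M(x)$ denote the unique computation transcript of $M$ on $x$ (padded to a strictly increasing polynomial length in $|x|$ and normalized so that its last bit is $1$ iff $M$ accepts $x$), and fix a polynomial-time-invertible pairing $\langle x, c\rangle$ whose induced lexicographic order on pairs is length-then-lex on $x$, then lex on $c$. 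Define
\[
A = \{\langle x, \mathrm{comp}_M(x)\rangle : x \in \sigmastar\}\quad\text{and}\quad B = \{y \in \sigmastar : y \text{ ends in } 1\}.
\]
Then $A$ is strong-P-rankable, because $\rank_A(\langle x_0, c_0\rangle)$ decomposes into (a)~a count of strings $x'$ with $|x'|<|x_0|$, (b)~a count of same-length $x' < x_0$ in lex, and (c)~the single comparison $[\mathrm{comp}_M(x_0) \leq c_0]$, with inputs not in the image of $\langle\cdot,\cdot\rangle$ handled by rounding down to the largest valid pair $\leq y$. The set $B$ is strong-P-rankable trivially. Hence $A \cap B = \{\langle x, \mathrm{comp}_M(x)\rangle : x \in L\}$, and a direct count gives
\[
\rank_{A \cap B}(\langle x_0, \mathrm{comp}_M(x_0)\rangle) = \rank_L(x_0)\quad\text{for every } x_0 \in \sigmastar.
\]
Closure of \spr{} under intersection then delivers $\rank_{A \cap B}$ in polynomial time, hence $\rank_L$ in polynomial time, so $\p = \psharpp$.

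For $(ii) \Rightarrow (iii)$: apply Lemma~\ref{lem10} to the \spr{} half of $(ii)$. The identity in that lemma converts closure of \spr{} under union into closure under intersection, reducing $(ii)$ to the hypothesis of the previous paragraph.

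The main obstacle is choosing a pairing whose lex order refines length-then-lex order on the first coordinate, so that $\rank_A$ at $\langle x_0, c_0\rangle$ reduces cleanly to lexicographic counts plus the single comparison $[\mathrm{comp}_M(x_0) \leq c_0]$; once a suitable pairing is fixed, the rank computations for $A$, $B$, and $A \cap B$ all collapse to standard polynomial-time arithmetic plus one simulation of $M$. A minor subtlety is that the identity $\rank_{A \cap B}(\langle x_0, \mathrm{comp}_M(x_0)\rangle) = \rank_L(x_0)$ holds for \emph{every} $x_0 \in \sigmastar$, so strong closure gives $\rank_L$ everywhere; if one instead uses the \pr{} half of $(i)$, the same identity restricted to $x_0 \in L$ (the members of $A \cap B$) still makes $L$ semistrong-P-rankable, and the cited characterization of \pr{} again yields $\p = \psharpp$.
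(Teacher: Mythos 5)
Your argument is correct, and while it shares the paper's skeleton (condition~3 gives 1 and~2 because under $\p=\psharpp$ every $\p$ set is rankable and both classes collapse to $\p$, and Lemma~\ref{lem10} reduces the union case to the intersection case), the core construction for the hard direction is genuinely different. The paper builds one fixed pair of \spr{} sets from SAT: $A$ consists of witness strings $x1y1$/$x1y0$ arranged so that exactly one of each pair is present (which makes $A$ easy to rank) together with beacon strings, and $B=\sigmastar 1$; a ranking function for $A\cap B$ then counts satisfying assignments between consecutive beacons, so $\operatorname{\#SAT}$ is polynomial-time computable and $\p=\psharpp$ follows directly, with the Hemachandra--Rudich result~\cite{hem-rud:j:ranking} needed only for the easy direction. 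You instead give a generic construction: for an arbitrary $L\in\p$ you pair each $x$ with the padded, accept-bit-normalized transcript of $M$ on $x$, intersect with the strings ending in $1$, and read $\rank_L(x)$ off from $\rank_{A\cap B}$ at $\langle x,\mathrm{comp}_M(x)\rangle$ (which works for all $x$, not just members, since the ranking of $A\cap B$ is strong); closure under intersection thus yields $\p=\text{\spr{}}$, and you then invoke the full Hemachandra--Rudich equivalence ($\p=\text{\spr{}}$ iff $\p=\psharpp$), which the paper does cite elsewhere, to finish. Your route is more modular and isolates a clean intermediate statement (closure forces every $\p$ set to be strongly rankable), at the cost of using the cited equivalence as a black box where the paper's counting argument is self-contained; your handling of the semistrong variant is also fine. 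One requirement you should state explicitly: the pairing must also place the last bit of the transcript as the last bit of $\langle x,c\rangle$, since otherwise $A\cap B$ need not equal $\{\langle x,\mathrm{comp}_M(x)\rangle \condition x\in L\}$; the concatenation pairing $\langle x,c\rangle=xc$ with $|c|=p(|x|)$ for a strictly increasing polynomial $p$ satisfies this together with your order-compatibility and invertibility demands.
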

\begin{proof} It was proven in~\cite{hem-rud:j:ranking} by Hemachandra
and Rudich that $\p = \psharpp$ implies
$\p= \text{\spr{}}=\allowbreak\text{\pr{}}$. Since $\p$ is closed
under intersection and union, this shows that 3 implies 1 and 2.
To show, in light of Lemma \ref{lem10}, that either 1 or 2 would
imply 3, we will construct two \spr{} sets whose intersection is not
\wpr{} unless $\p = \psharpp$.

Let $A_1$ be the set of $x1y1$ such that $|x| = |y|$, $x$ encodes a
boolean formula, and $y$ (padded with 0s so that it has length
$|x|$) encodes a satisfying assignment for the formula $x$. Let
$A_0$ be the set of $x1y0$ such that $|x| = |y|$, and
$x1y1\notin A_1$.
Let $A_2$ be the set of strings $x0^{|x|+1}1$. Let
$A = A_0 \cup A_1 \cup A_2$.  For every $x$, and every $y$ such that
$|x|=|y|$, exactly one of $x1y0$ and $x1y1$ is in $A$.
Thus, for any $x$, we can find $\rank_{A_0\cup A_1}(x)$ in
polynomial time. Clearly $A_2$ is \spr{}. Since $A_0\cup A_1$ and
$A_2$ are disjoint,
$\rank_{A_0\cup A_1 \cup A_2}(x)=\rank_{A_0\cup
A_1}(x)+\rank_{A_2}(x)$, so $A$ is \spr{}.

Let $B = \sigmastar1$.
Then $A \cap B = A_1 \cup A_2$ is the set of $x1y1$ such that $y$
encodes a satisfying assignment for $x$, along with all strings
$x0^{|x|+1}1$. If $A_1 \cup A_2$ were \wpr{}, then we could count
satisfying assignments of a formula $x$ in polynomial time by
computing
$\rank_{A \cap B}(\shift(x,1)0^{|\shift(x,1)|+1}1) - \rank_{A \cap
B}(x0^{|x|+1}1)-1$. Thus $\operatorname{\#SAT}$ is polynomial-time
computable and so $\p = \psharpp$.
\end{proof}

\begin{proposition}\label{prop-was-lem-lem13} \spr{} is closed under
complementation.
\end{proposition}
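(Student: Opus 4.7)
The plan is to give the obvious counting identity: for every string $x$, $\rank_{\overline{A}}(x) = \rank_{\Sigma^*}(x) - \rank_A(x)$, where $\rank_{\Sigma^*}(x)$ denotes the position of $x$ in the lexicographic enumeration of $\Sigma^*$ (with $\emptystring$ in first position). So given a strong P-ranking function $f$ for $A$, the function $f'(x) = \rank_{\Sigma^*}(x) - f(x)$ will be a strong P-ranking function for $\overline{A}$.

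The only thing to check is that $\rank_{\Sigma^*}(x)$ itself is computable in polynomial time. I would just observe that there are $2^{|x|}-1$ strings in $\Sigma^*$ strictly shorter than $x$, and that among strings of length $|x|$ the position of $x$ is $1$ plus the integer $x$ represents when read in binary. Thus $\rank_{\Sigma^*}(x) = 2^{|x|} + \mathrm{val}(x)$, and this can clearly be produced in time polynomial in $|x|$. (One could equally observe, without a closed form, that listing all strings up to length $|x|$ and incrementing a counter is polynomial, but the closed form is cleaner.)

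There is essentially no obstacle here: the strong P-ranking model is designed precisely so that the rank is defined on all of $\Sigma^*$ (not just on $A$), which is exactly what is needed for this subtraction trick to work. The argument would break down for \pr{} or \wpr{}, where $f$ is only required to behave correctly on elements of $A$, and indeed the table in the introduction reflects the fact that those classes have weaker or conjectural closure behavior under complementation.
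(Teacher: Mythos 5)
Your proof is correct and uses essentially the same argument as the paper: the identity $\rank_A(x)+\rank_{\overline{A}}(x)=\rank_{\sigmastar}(x)$, with subtraction giving a strong ranking function for $\overline{A}$. The extra detail you supply (that $\rank_{\sigmastar}(x)=2^{|x|}+\mathrm{val}(x)$ is polynomial-time computable) is fine and merely makes explicit what the paper leaves implicit.
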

\begin{proof} The identity
$\rank_{A}(x) + \rank_{\overline{A}}(x) =\rank_{\sigmastar}(x)$
allows us to compute either of $\rank_{A}(x)$ or
$\rank_{\overline{A}}(x)$ from the other.
\end{proof}
\begin{corollary}\label{cor14} The class \nsprit{} is also closed under
complementation.
\end{corollary}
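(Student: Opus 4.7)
The plan is to derive Corollary~\ref{cor14} as an immediate consequence of Proposition~\ref{prop-was-lem-lem13} via contrapositive reasoning. The underlying observation is purely formal: whenever a class $\calc \subseteq 2^{\sigmastar}$ is closed under an involution (a self-inverse operation) on $2^{\sigmastar}$, its complement class $\calc^{\complement}$ is automatically closed under that same operation. Set complementation is an involution since $\overline{\overline{A}} = A$, so closure of \spr{} under complementation transfers to \nspr{}.

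Concretely, I would argue as follows. Let $A \in \nspr$; the goal is to show $\overline{A} \in \nspr$. Suppose, for contradiction, that $\overline{A} \notin \nspr$, i.e., $\overline{A}$ is strong-P-rankable. By Proposition~\ref{prop-was-lem-lem13}, the class \spr{} is closed under complementation, so the complement of $\overline{A}$ is also strong-P-rankable. But that complement is just $A$, contradicting the assumption that $A \in \nspr$. Hence $\overline{A} \in \nspr$, completing the proof.

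There is essentially no technical obstacle: the substantive content lives in Proposition~\ref{prop-was-lem-lem13}, whose witness is the identity $\rank_A(x) + \rank_{\overline{A}}(x) = \rank_{\sigmastar}(x)$, which lets us convert a strong-P-ranking function for $A$ into one for $\overline{A}$ (and vice versa) in polynomial time. The only mildly delicate point to keep clear when writing up the corollary is to avoid conflating ``$\overline{A}$ is not strong-P-rankable'' (the membership condition for $\nspr$) with ``$A$ is not strong-P-rankable''; the contradiction step is what legitimately ties the two together through the involution property.
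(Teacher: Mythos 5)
Your argument is correct and is exactly the intended one: the paper states Corollary~\ref{cor14} without proof as an immediate consequence of Proposition~\ref{prop-was-lem-lem13}, and your contrapositive step (if $\overline{A}$ were strong-P-rankable, then so would be $A$ by the proposition, since complementation is an involution) is the evident reasoning behind it. No gaps.
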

\begin{lemma}\label{lem15} The class \pr{} is closed under complementation if
and only if $\allowbreak$\pr{} = \spr{}.
\end{lemma}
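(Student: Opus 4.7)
The plan is to prove both directions by exploiting the fact that a semistrong ranker can detect membership (via its special ``not in set'' signal), combined with the arithmetic identity $\rank_A(x) + \rank_{\overline{A}}(x) = \rank_{\sigmastar}(x)$ already used in Proposition~\ref{prop-was-lem-lem13}.

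For the ``if'' direction, suppose \pr{} $=$ \spr{}. Then by Proposition~\ref{prop-was-lem-lem13}, \spr{} is closed under complementation, and so \pr{} inherits this closure immediately.

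For the ``only if'' direction, I first observe the easy containment \spr{} $\subseteq$ \pr{}: given a strong ranking function $f$ for $A$, define $g(x)$ by computing $f(x)$ and $f(\shift(x,-1))$; if $x = \emptystring$ compare $f(\emptystring)$ with $0$. If the two values differ (equivalently, if $f(x) > f(\shift(x,-1))$), output $f(x)$; otherwise output ``not in set''. This $g$ is polynomial-time and is a semistrong ranker for $A$. Now suppose \pr{} is closed under complementation, and let $A \in$ \pr{} with semistrong ranking function $f$. By hypothesis, $\overline{A} \in$ \pr{} via some semistrong ranking function $h$. I construct a strong ranking function $F$ for $A$ as follows: on input $x$, run $f(x)$; if $f(x)$ returns a rank value, output that value (which equals $\rank_A(x)$). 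Otherwise $f(x)$ signals ``not in set,'' i.e., $x \in \overline{A}$, so $h(x)$ must return a numerical rank equal to $\rank_{\overline{A}}(x)$; output $\rank_{\sigmastar}(x) - h(x)$, which by the identity above equals $\rank_A(x)$. Since $\rank_{\sigmastar}(x)$ is polynomial-time computable and both $f$ and $h$ run in polynomial time, $F$ is in $\p$ and is a strong ranking function for $A$, so $A \in$ \spr{}. This gives \pr{} $\subseteq$ \spr{}, and combined with \spr{} $\subseteq$ \pr{} yields equality.

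There is no serious obstacle here; the only thing to get right is that the semistrong ranker genuinely \emph{reports} non-membership (rather than merely being unconstrained on $\overline{A}$ as an ordinary ranking function would be), which is precisely what lets us decide which of the two cases to invoke and thereby convert the pair $(f, h)$ into a total, strong ranker.
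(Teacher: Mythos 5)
Your proof is correct and follows essentially the same route as the paper's: the ``if'' direction via Proposition~\ref{prop-was-lem-lem13}, and the ``only if'' direction by combining the semistrong rankers for $A$ and $\overline{A}$ through the identity $\rank_A(x) = \rank_{\sigmastar}(x) - \rank_{\overline{A}}(x)$, using the ``not in set'' signal to decide which case applies. Your explicit verification of the containment \spr{} $\subseteq$ \pr{} is a detail the paper leaves implicit, but it is the same argument in substance.
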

\begin{proof}
The ``if'' direction follows directly from Proposition
\ref{prop-was-lem-lem13}. For the ``only if'' direction, let $A$ be
a \pr{} set with ranking function $r_A$, and suppose $\overline{A}$
is \pr{} with semistrong ranking function $r_{\overline{A}}$. Then
	$\rank_A(x) = r_A(x)$ if $x \in A$, and equals $\rank_{\sigmastar}(x)-r_{\overline{A}}(x)$
otherwise. The function $r_A$ decides membership in $A$, so we can
compute $\rank_A(x)$ in polynomial time.
\end{proof}

\begin{theorem}\label{thm16} If \pr{} is closed under complementation,
then $\p = \up\cap\coup$.
\end{theorem}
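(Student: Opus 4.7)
The plan is to reduce the problem to showing $\up \cap \coup \subseteq \p$ (the reverse inclusion is trivial) by combining Lemma~\ref{lem15} with a concrete $L$-encoding trick. By Lemma~\ref{lem15}, if \pr{} is closed under complementation then \pr{}~=~\spr{}. So I want, for each $L \in \up \cap \coup$, to exhibit a single set $A_L$ that is manifestly semistrong-P-rankable while its \emph{strong} rank function would yield a polynomial-time algorithm for membership in $L$.

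The set I would use: fix UP-machines $M_1$ for $L$ and $M_2$ for $\overline{L}$ with a common polynomial bound $p$, and define
\[
A_L \;=\; \{\,0^{|x|}1\,x\,0\,w : w \text{ is the unique accepting path of } M_1 \text{ on } x\,\}\;\cup\;\{\,0^{|x|}1\,x\,1\,w : w \text{ is the unique accepting path of } M_2 \text{ on } x\,\},
\]
where $|w| = p(|x|)$. The self-delimiting prefix $0^{|x|}1$ ensures that strings in $A_L$ with $|x|$-component of length $n$ are exactly those of length $\ell_n := 2n+2+p(n)$, and these lengths are distinct for distinct $n$. By the UP assumption, for each $x$ exactly one string with that $x$-prefix lies in $A_L$, and the bit following $x$ flags whether $x \in L$ (bit~0) or $x \notin L$ (bit~1).

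To verify $A_L \in {}$\pr{}: membership is in $\p$ by simulating $M_1$ (resp.\ $M_2$) along the putative witness, and for any member $y = 0^n1x e w$ a direct count gives
\[
\rank_{A_L}(y) \;=\; \sum_{m<n} 2^m \;+\; \bigl(\text{lex rank of } x \text{ in } \{0,1\}^n\bigr) \;=\; 2^n + x_{\text{int}},
\]
where $x_{\text{int}}$ denotes $x$ read as a binary number; this is polynomial-time computable. So the function that outputs this value on members of $A_L$ and ``not in set'' otherwise witnesses $A_L \in {}$\pr{}.

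Now suppose \pr{} is closed under complementation, so $A_L \in {}$\spr{} via some total polynomial-time $g$. For any $x$ with $n = |x|$, let
\[
u \;=\; 0^n 1\, x\, 0\, 1^{p(n)}, \qquad u' \;=\; \shift\bigl(0^n 1\, x\, 0\, 0^{p(n)},\,-1\bigr).
\]
Then $(u', u]$ is exactly the block of strings with prefix $0^n 1 x 0$, which by construction contains precisely one element of $A_L$ if $x \in L$ and zero if $x \notin L$. Hence $g(u) - g(u') \in \{0,1\}$ is the characteristic value $[x \in L]$, placing $L \in \p$. The main obstacles are purely bookkeeping: ensuring the self-delimiting encoding makes the closed-form rank on members actually polynomial-time (handled by the $0^{|x|}1$ prefix, which sends different $|x|$'s to different length strata), and handling the boundary case $x = 0^n$ in the $\shift(\cdot,-1)$ query, which is fine because the predecessor always exists at length $\ell_n \ge 2$.
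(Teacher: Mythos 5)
Your proof is correct and follows essentially the same route as the paper's: encode the unique accepting paths of the UP machines for $L$ and $\overline{L}$ into a P-decidable set that is manifestly \pr{}, use Lemma~\ref{lem15} to upgrade it to \spr{} under the closure hypothesis, and then decide $L$ in polynomial time by querying the strong ranking function. The only difference is the final extraction step---the paper binary-searches on the strong rank to recover the witness path and then simulates the machines on it, whereas you read membership directly off a rank difference across the flag-$0$ block---a minor variation rather than a different argument.
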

\begin{proof}
Suppose \pr{} is closed under complementation. Let $A$ be in
$\up\cap\coup$. Then there exists a UP machine $U$ recognizing $A$,
and a UP machine $\hat{U}$ recognizing $\overline{A}$. If $x \in A$,
let $f(x)$ be the unique accepting path for $x$ in $U$. Otherwise,
let $f(x)$ be the unique accepting path for $x$ in $\hat{U}$. Choose
a polynomial $p$ such that, without loss of generality, $p(x)$ is
monotonically increasing and $|f(x)| = p(|x|)$ (we may pad accepting
paths with 0s to make this true).

The language
$B = \{xf(x)1 \condition x \in \sigmastar\} \cup \{x0^{p(|x|)+1}
\condition x \in \sigmastar\}$ is \pr{} since
$\rank_{B}(x0^{p(|x|)+1}) = 2 \rank_{\sigmastar}(x)-1$ and
$\rank_B(xf(x)1) = 2 \rank_{\sigmastar}(x)$.  Since \pr{} is closed
under complementation, and $B$ is \pr{}, $B$ is also \spr{} by
Lemma~\ref{lem15}. Let $x$ be a string, and let $y =
\shift(x,1)$. We can binary search on the value of $\rank_B$ in the
range from $x0^{p(|x|)+1}$ to $y0^{p(|y|)+1}$ to find the first
value $xz$ where $|z| = p(|x|)+1$ and
$\rank_{B}(xz) = 2\rank_{\sigmastar}(x)$.  See that $f(x)$ must
equal $z$. We then simulate $U$ on the path $z$ and $\hat{U}$ on the
path $z$. Now $z$ must be an accepting path for one of these
machines, so either $U$ accepts and $x \in A$, or $\hat{U}$ accepts
and $x \notin A$.
\end{proof}

\begin{definition}\label{def17}
A set is \emph{nongappy} if there exists a polynomial $p$ such that,
for each $n \in \N$, there is some element $y \in A$ such that
$n \leq |y| \leq p(n)$.
\end{definition}

\begin{theorem}\label{changed-thm18}
If $\p = \up\cap\coup$ then
each nongappy \pr{} set is \spr{}.
\end{theorem}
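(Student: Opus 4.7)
The plan is to extract $\rank_A(x)$ bit by bit by showing that each bit lies in $\up \cap \coup$, then invoking the hypothesis. Let $f$ be the polynomial-time semistrong ranking function for $A$ and let $p$ be the polynomial witnessing the nongappy property. Since $\rank_A(x) \leq 2^{|x|+1}$, only $O(|x|)$ bits matter, so it suffices to show that
\[
  B \;=\; \bigl\{\langle x, k\rangle \bigm| \text{the } k\text{-th bit of } \rank_A(x) \text{ equals } 1 \bigr\}
\]
lies in $\up \cap \coup$; then by hypothesis $B \in \p$, which lets us assemble $\rank_A(x)$ in polynomial time and exhibit a strong \p{}-ranking function, placing $A$ in \spr{}.

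The key observation is that $\rank_A(x)$ is pinned down by the two elements of $A$ that straddle $x$. Setting $j = \rank_A(x) + 1$, write $y_j$ for the $j$-th element of $A$ (the successor of $x$ in $A$); then $y_j > x$, and if $j \geq 2$ the $(j{-}1)$-st element $y_{j-1}$ satisfies $y_{j-1} \leq x$. Conversely, if any tuple $(j, y_j, y_{j-1})$ satisfies $f(y_j)=j$, $f(y_{j-1})=j-1$, and $y_{j-1} \leq x < y_j$, then the predecessor forces $\rank_A(x) \geq j-1$ and the successor forces $\rank_A(x) < j$, so $\rank_A(x) = j-1$ is determined. Each verification runs in polynomial time through $f$; the witness is unique because $j = \rank_A(x)+1$ is determined by $x$ and $f$ is injective on $A$; and the witness is polynomial-length because $|y_{j-1}| \leq |x|$ (since $y_{j-1} \leq x$) and, crucially invoking nongappiness with $n = |x|+1$, some $y \in A$ has $|x|+1 \leq |y| \leq p(|x|+1)$ and hence exceeds $x$ in length-lex order, forcing $|y_j| \leq p(|x|+1)$.

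From this I would build the UP machine for $B$: it guesses a tuple $(j, y_j, y_{j-1})$ of polynomially-bounded length, runs the three verifications above, and accepts iff the $k$-th bit of $j-1$ equals $1$. The uniqueness argument gives at most one accepting path. The UP machine for $\overline{B}$ is analogous, accepting when the $k$-th bit of $j-1$ equals $0$, with the boundary case $\rank_A(x) = 0$ (so $j=1$ and $y_{j-1}$ does not exist) handled by a second witness form $(1, y_1)$ verified by $f(y_1)=1$ and $y_1 > x$, with $|y_1| \leq p(|x|+1)$ again by the nongappy property. This gives $B \in \up \cap \coup$, hence $B \in \p$, completing the argument. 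The main obstacle I anticipate is avoiding the naive witness ``$y^*$ is the smallest element of $A$ above $x$,'' whose verification would require ruling out the exponentially many strings in the interval $(x, y^*)$; presenting both neighbors $y_{j-1}$ and $y_j$ with their exact rank-values $j-1$ and $j$ sidesteps that and certifies $\rank_A(x)$ pinpointedly with a unique, polynomially-bounded witness.
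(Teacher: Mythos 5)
Your proof is correct and follows essentially the same route as the paper's: the crux in both is the unique certificate consisting of the pair of elements of $A$ that tightly bracket $x$, verified by consecutive values of the semistrong ranking function, with the successor's length bounded via nongappiness, after which $\p=\up\cap\coup$ is invoked. The only (cosmetic) difference is the intermediate language handed to the hypothesis---you extract the bits of $\rank_A(x)$ directly, whereas the paper uses prefixes of the predecessor of $x$ in $A$, reconstructs that predecessor bit by bit, and then applies the semistrong ranker to it.
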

\begin{proof}
Let $A$ be a nongappy \pr{} set, and let $p$ be a polynomial such
that, for each $n \in \N$, there is $y$ in $A$ such that
$n \leq |y| \leq p(y)$.  Let $r$ be a polynomial-time semistrong
ranking function for $A$.  The coming string comparisons of course
will be lexicographical.  Let $L$ be the set of $\langle x,b\rangle$
such that there exists at least one string in $A$ that is
less than or equal to $x$ and $b$ a prefix of the greatest string in
$A$ that is lexicographically less than or equal to $x$.  $L$ is in
$\up \cap \coup$ by the following procedure.  Let $x_0$ be the
lexicographically first string in $A$.  If
$x < x_0$ output 0.  Otherwise, guess a string $z > x$ such that
$|z| \leq p(|x|+1)$. Then guess a $y \leq x$. If $y$ and $z$ are in
$A$ and $r(y)+1 = r(z)$,
then we know that
and $y$ and $z$ are the (unique) strings in $A$ that most tightly
bracket $x$ in the $\leq$ and the $>$ directions.
We can in our current case build the greatest string less than or
equal to $x$ that is in $A$ bit by bit, querying potential prefixes,
in polynomial time. Since $\rank_A(x) = \rank_A(y)$, we can compute
$\rank_A(x)$ in polynomial time for arbitrary $x$.
\end{proof}

{}From Proposition~\ref{prop-was-lem-lem13} and
Theorem~\ref{changed-thm18}, we obtain the following corollary.
\begin{corollary}\label{c:new-cite-for-old-thm18}
If $\p = \up\cap\coup$ then the complement of each nongappy
\pr{} set is \spr{} (and so certainly is \pr).
\end{corollary}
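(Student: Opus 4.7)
The plan is to compose the two cited results in the obvious way, so this proof is essentially a one-liner once we have Theorem~\ref{changed-thm18} and Proposition~\ref{prop-was-lem-lem13} in hand. Assume $\p = \up \cap \coup$ and let $A$ be a nongappy \pr{} set; our goal is to show $\overline{A}$ is \spr.

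First I would invoke Theorem~\ref{changed-thm18}: under the hypothesis $\p = \up \cap \coup$, every nongappy \pr{} set is in fact \spr. Applied to $A$, this upgrades our assumed semistrong rankability of $A$ to strong rankability. Next I would apply Proposition~\ref{prop-was-lem-lem13}, which says that \spr{} is closed under complementation, to conclude that $\overline{A}$ is \spr{} as well. This gives the main claim.

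For the parenthetical ``(and so certainly is \pr)'' I would simply observe that every \spr{} set is \pr: given a strong ranking function $r$ for $\overline{A}$, we can define a semistrong ranking function $r'$ by setting $r'(x) = r(x)$ when $r(x) > r(\shift(x,-1))$ (or when $x = \emptystring$ and $r(x) = 1$), and signaling ``not in set'' otherwise. Since $r$ itself decides membership in $\overline{A}$ by this prefix-difference test, this rewrapping is polynomial time.

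The only minor subtlety worth flagging is that Theorem~\ref{changed-thm18} is the non-trivial ingredient; the corollary does no real additional work beyond observing that complementation preserves \spr{} and that \spr{} $\subseteq$ \pr. There is no obstacle to speak of — the entire weight is carried by the two results being combined.
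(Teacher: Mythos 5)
Your proof is correct and is exactly the paper's argument: the paper derives this corollary by combining Theorem~\ref{changed-thm18} (which upgrades the nongappy \pr{} set to \spr{} under $\p=\up\cap\coup$) with Proposition~\ref{prop-was-lem-lem13} (closure of \spr{} under complement), and the parenthetical follows since \spr{} $\subseteq$ \pr, just as you observe.
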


\begin{theorem}\label{yellow}
There exist \wpr{} sets $A$ and $B$ such that $A \cap B$ is infinite
but not $\fre$-compressible.
\end{theorem}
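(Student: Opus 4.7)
The plan is to construct \wpr{} sets $A$ and $B$ whose intersection equals a pre-chosen infinite set $S \subseteq \{0^n : n \geq 1\}$ that is not $\fre$-compressible. The key observation is that the ``ranking function may lie on the complement'' flexibility of \wpr{} lets me hide arbitrary information inside a set while keeping ranking trivial, provided the set contains exactly one string of each length $n \geq 1$: in that situation the polynomial-time function $r(x) = |x|$ is automatically a ranking function for the set, since for any member $x$ of length $n$ there are exactly $n-1$ shorter members and exactly one length-$n$ member (namely $x$ itself).

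Concretely, I would first fix an infinite $S \subseteq \{0^n : n \geq 1\}$ that is not $\fre$-compressible (existence discussed below), and then define
\[
A \;=\; \{0^n : n \geq 1\}, \qquad B \;=\; \{0^n : 0^n \in S\} \;\cup\; \{10^{n-1} : 0^n \notin S,\ n \geq 1\}.
\]
The set $A$ is trivially \spr{}. The set $B$ contains exactly one of the two strings $\{0^n, 10^{n-1}\}$ for each $n \geq 1$, hence exactly one member of each length; by the observation above, $r(x) = |x|$ (output in binary) is a polynomial-time ranking function for $B$, so $B\in{}$\wpr. Intersecting, the only common candidates at length $n$ are the strings $0^n$ (since $A$ consists only of powers of $0$), and $0^n \in B$ precisely when $0^n \in S$; hence $A \cap B = S$. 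This set is infinite by construction, and if it were $\fre$-compressible via some partial recursive $f$, then $f$ would, unchanged, be an $\fre$-compression function for $S$, contradicting the choice of $S$.

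The one step that merits separate justification is the existence of an infinite non-$\fre$-compressible $S \subseteq \{0^n : n \geq 1\}$. A pure cardinality argument does not immediately suffice, since a single partial recursive function can compress continuum-many subsets of $\{0^n\}$ whenever its equivalence classes on $\{0^n\}$ are often of size at least two. The clean route is a light priority diagonalization against the partial recursive functions $\varphi_0, \varphi_1, \ldots$: at stage $e$, I would use fresh ``large'' indices and try in parallel either to exhibit a colliding pair $0^{m_1}, 0^{m_2}$ with $\varphi_e(0^{m_1}) = \varphi_e(0^{m_2})$ (both defined) and commit both to $S$, destroying injectivity of $\varphi_e$ on $S$, or, failing that, to find a fresh $0^m$ on which $\varphi_e$ converges and commit $0^m$ to the complement of $S$, which in the absence of collisions forces $\varphi_e(S)$ to miss $\varphi_e(0^m)$. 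With an outer interleaving that commits further fresh indices to $S$ to keep it infinite, this yields the required non-$\fre$-compressible $S$, completing the proof.
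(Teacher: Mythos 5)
Your reduction is correct and takes a genuinely different route from the paper's. The paper builds $A$ and $B$ simultaneously by a stage construction that maintains the invariant that exactly one of $x0,x1$ lies in each set (so both are \wpr{} via $x0,x1\mapsto\rank_{\sigmastar}(x)$) and diagonalizes directly so that $A\cap B$ defeats every partial recursive compressor, with a three-way case analysis at each stage (no collision partner anywhere, which kills surjectivity; partner already present or $\varphi_i$ undefined; partner lying above, which kills injectivity). You instead decouple the two concerns: the one-string-per-length structure of $B$ makes $r(x)=|x|$ a correct ranking function on $B$ no matter what $S$ is, so all of the diagonalization is pushed into producing a single infinite non-\fprc{} subset $S$ of the tally set, and $A\cap B=S$ holds by design. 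This buys modularity and a reusable fact (every infinite tally set has an infinite non-\fprc{} subset); the paper's integrated construction pays for its explicitness with a more delicate case analysis.

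One step of your existence-of-$S$ sketch needs tightening. First, note that $B$ is \wpr{} via $r(x)=|x|$ regardless of the complexity of $S$, so $S$ need not be recursive or r.e.; hence no priority/injury machinery is needed at all, and you may run a plain non-effective stage construction, deciding each case with full knowledge of $\varphi_e$'s behavior. Second, and more substantively, in your ``failing that'' branch the absence of collisions \emph{among fresh strings} does not by itself force $\varphi_e(S)$ to miss $\varphi_e(0^m)$: a string already committed into $S$ at an earlier stage (or by your infinitude-interleaving) may map to the same value, and such collisions are not caught by a search over fresh pairs. The standard fix: if $\varphi_e$ diverges on some fresh tally string, commit that string into $S$; if two fresh strings collide, or a fresh string collides with an already-committed member of $S$, commit the offending fresh string(s) into $S$ (killing injectivity); otherwise $\varphi_e$ is injective on the infinitely many fresh tally strings, so it takes infinitely many distinct values there, and you can choose the fresh $0^m$ so that $\varphi_e(0^m)$ avoids the finite set of $\varphi_e$-images of strings already committed into $S$ before committing $0^m$ out; injectivity on fresh strings then guarantees that no later addition to $S$ can cover that value. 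With that adjustment, your argument is complete.
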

\begin{proof}
We will define a set $A$ not containing the empty string and
satisfying the condition that for all $x \in \sigmastar$, exactly
one of $x0$ and $x1$ is in $A$. Then clearly $A$ is \wpr{} by a
compression function sending $x1$ and $x0$ to
$\rank_{\sigmastar}(x)$. Let $A_0$ and $B_0$ be empty, and let
$m_0 = \emptystring$. We will define $A_i$, $B_i$, and $m_i$
inductively for $i > 0$. Let $\varphi_i$ be the $i$th Turing machine
in some enumeration of all Turing machines.
\begin{enumerate}
\item Suppose that $\varphi_i$ is defined on $m_{i-1}0$, and that
for all
$x \in \left(A_{i-1} \cap B_{i-1}\right) \cup \{y \condition y >
m_{i-1}0\}$ we have $\varphi_i(x) \neq \varphi_i(m_{i-1}0)$.
In this case, we set
$A_{i} = A_{i-1} \cup \{m_{i-1}0,\shift(m_{i-1},1)0\}$ and
$B_{i} = B_{i-1} \cup \{m_{i-1}1,\shift(m_{i-1},1)0\}$ and set
$m_i = \shift(m_{i-1},2)$, so that neither $m_{i-1}0$ nor
$m_{i-1}1$ is in $A_{i} \cap B_{i}$. Note that
$\shift(m_{i-1},1)0 \in A_i \cap B_i$ but
$\shift(m_{i-1},1)0 \notin A_{i-1} \cap B_{i-1}$.
\item Suppose $\varphi_i$ is either undefined on $m_{i-1}0$, or that
for some $x \in A_{i-1} \cap B_{i-1}$ we have
$\varphi_i(x) = \varphi_i(m_{i-1}0)$. In this case, set
$A_i = A_{i-1} \cup \{m_{i-1}0\}$,
$B_i = B_{i-1} \cup \{m_{i-1}0\}$, and
$m_i=\shift(m_{i-1},1)$. Note in particular that $x$ and
$m_{i-1}0$ are both in $A_i \cap B_i$ and lexicographically less
than $m_i0$, and take the same value under $\varphi_i$.
\item Suppose that the above cases do not hold and there is some
$x > m_{i-1}1$ such that $\varphi_i(x) = \varphi_i(m_{i-1}0)$. Let
$y$ be the lexicographically largest string such that $y0 \leq x$,
and let $m_i = \shift(y,1)$. Set
$A_i = A_{i-1} \cup \{z0\condition m_{i-1} \leq z < y\} \cup
\{x\}$ and
$B_i = B_{i-1} \cup \{z0 \condition m_{i-1} \leq z < y\} \cup
\{x\}$. Note in particular that $x$ and $m_{i-1}0$ are both in
$A_i \cap B_i$ and lexicographically less than $m_i0$, and take
the same value under $\varphi_i$.
\end{enumerate}

Finally, let $A = \bigcup_{i \geq 0} A_i$ and
$B = \bigcup_{i \geq 0} B_i$. Notice that stage $i$ only adds
elements to $A_i$ or $B_i$ that are lexicographically greater than
or equal to $m_{i-1}0$, so if $x < m_i0$ and
$x \notin A_i \cap B_i$, then $x \notin A \cap B$. In case 1, we see
that $\varphi_i$ fails to be surjective (i.e, onto $\sigmastar$)
when restricted to $A \cap B$, since there is no $x < m_i0$ in
$A \cap B$ mapping to $\varphi_i(m_{i-1}0)$, and also no
$x > m_{i-1}1$ mapping to $\varphi_i(m_{i-1}0)$, and neither
$m_{i-1}0$ nor $m_{i-1}1$ is in $A \cap B$.  In case 2, we see
either that $\varphi_i$ is undefined on an element of $A \cap B$ or
that two elements of $A \cap B$ map to the same element. In case 3,
we see that two elements in $A \cap B$ map to the same element under
$\varphi_i$. Thus $\varphi_i$ fails to compress $A \cap B$, and no
partial recursive function can compress $A \cap B$. The set
$A \cap B$ is infinite since at least one new element is added to
$A_i \cap B_i$ during stage $i$. We also maintain the condition
that, for all $x < m_i$, exactly one of $x0$ and $x1$ is in $A_i$
(resp.,\ $B_i$). Each $A_i$ (resp. $B_i$) consists of exactly all
strings in $A$ (resp. $B$) lexicographically less than $m_i0$, and
so clearly since this statement holds for each $A_i$ (resp. $B_i$)
it holds for all of $A$ (resp. $B$) as well.  Thus $A$ and $B$
are \wpr{}, but their intersection is not $\fre$\tcomp.
\end{proof}
\begin{theorem}\label{red}
There exist infinite \wpr{} sets $A$ and $B$ such that $A \cup B$ is
not $\fre$-compressible.
\end{theorem}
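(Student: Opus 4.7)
The plan is to mirror the construction used for Theorem~\ref{yellow}, building $A$ and $B$ in stages so that, for every $x \in \sigmastar$, exactly one of $x0$ and $x1$ lies in $A$, and similarly for $B$. By the same reasoning as in Theorem~\ref{yellow}, this ``exactly one of $x0, x1$'' invariant makes both sets \wpr{} via the fixed polynomial-time compression function $xb \mapsto \rank_{\sigmastar}(x)$, independently of what the actual choices are. The combinatorial freedom to be exploited is that, at each $x$, one may force $A \cup B \cap \{x0, x1\}$ to be $\{x0\}$, $\{x1\}$, or $\{x0, x1\}$, the only constraint being that $A \cup B$ must contain at least one of the two.

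I would diagonalize through a standard enumeration $\varphi_1, \varphi_2, \ldots$ in stages, maintaining a marker $m_i$ such that $A$ and $B$ are committed on all prefixes $x < m_i$. At stage $i$, do a case analysis on $\varphi_i$. (1)~If there exist distinct $u, v$ with prefixes $\geq m_{i-1}$ and $\varphi_i(u) = \varphi_i(v)$, commit $A$ and $B$ so that both $u, v \in A \cup B$, making $\varphi_i$ non-injective on $A \cup B$. (2)~Else, if some such fresh $u$ satisfies $\varphi_i(u) = \varphi_i(w)$ for an already committed $w \in A_{i-1} \cup B_{i-1}$, commit $u$ into $A \cup B$ to obtain the same collision. (3)~Else, if $\varphi_i$ is undefined on some fresh $u$, commit $u$ into $A \cup B$. (4)~Otherwise $\varphi_i$ is total and injective on every string with prefix $\geq m_{i-1}$, and its image on those strings is disjoint from $\varphi_i(A_{i-1} \cup B_{i-1})$; in this case commit both $A$ and $B$ to choose $1$ at $m_{i-1}$, so $m_{i-1}0 \notin A \cup B$ while $m_{i-1}1 \in A \cup B$. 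Then the global injectivity from case~(4) ensures that no other string with prefix $\geq m_{i-1}$---in particular none committed at any later stage $j > i$---maps to $\varphi_i(m_{i-1}0)$, and case~(4) also rules out collisions with strings in $A_{i-1} \cup B_{i-1}$, so $\varphi_i(m_{i-1}0) \notin \varphi_i(A \cup B)$ and $\varphi_i$ fails surjectivity. After each case, advance $m_i$ past every prefix just committed, filling in any intervening prefixes arbitrarily.

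The main obstacle is case~(4), which has no direct analog in Theorem~\ref{yellow}: for intersection one can simply exclude a suspect position entirely, but for union at least one of $m_{i-1}0, m_{i-1}1$ is forced into $A \cup B$. The crucial fact that makes excluding $m_{i-1}0$ block $\varphi_i(m_{i-1}0)$ from the image is the global injectivity supplied by that case, which prevents any later fresh addition from restoring the missing target. The remaining verifications are routine: at least one string is added to $A \cup B$ at each stage so $A \cup B$ is infinite; the ``exactly one'' invariant is preserved so $A$ and $B$ are \wpr{}; and by construction no $\varphi_i$ is a compression function for $A \cup B$, so $A \cup B$ is not $\fre$-compressible.
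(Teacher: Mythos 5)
Your proposal is correct and follows essentially the same strategy as the paper's proof: a stagewise diagonalization against each $\varphi_i$ that maintains the exactly-one-of-$x0,x1$ invariant (so $A$ and $B$ stay \wpr{} via the fixed ranking function $xb \mapsto \rank_{\sigmastar}(x)$), kills $\varphi_i$ by forcing a collision or undefinedness inside $A \cup B$ when possible, and otherwise excludes $m_{i-1}0$ from the union so that $\varphi_i(m_{i-1}0)$ is never attained. The paper organizes the cases around the single value $\varphi_i(m_{i-1}0)$ rather than around global injectivity/totality on fresh strings, but this is only a cosmetic difference in the case split, not a different argument (one terminological nit: the map $xb \mapsto \rank_{\sigmastar}(x)$ is being used as a ranking function, not a compression function).
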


\begin{theorem}\label{blue}
There exists an infinite \wpr{} set whose complement is infinite but
not $\fre$-compressible.
\end{theorem}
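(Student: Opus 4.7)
The plan is to adapt the construction of Theorem~\ref{yellow}. Build $A$ inductively in stages, enforcing the invariant that $\emptystring \notin A$ and, for every $x \in \sigmastar$, exactly one of $\{x0, x1\}$ lies in $A$. Under this invariant $A$ is \wpr{} via the polynomial-time ranking function $f(xa) = \rank_{\sigmastar}(x)$ for $a \in \{0,1\}$: for $w = ya \in A$ each prior pair contributes exactly one element to the count, yielding $\rank_A(w) = \rank_{\sigmastar}(y)$. Both $A$ and $\overline{A}$ are therefore infinite, with $\overline{A}$ containing $\emptystring$ together with the unused string from each pair.

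At stage $i$, with marker $m_{i-1}$, I would diagonalize against $\varphi_i$ being an $\fre$-compression function for $\overline{A}$, using cases mirroring Theorem~\ref{yellow} with $\overline{A}$ playing the role of $A \cap B$. First, if $\varphi_i(m_{i-1}0)$ is defined and $\varphi_i(x) \neq \varphi_i(m_{i-1}0)$ for every $x \in \overline{A_{i-1}} \cup \{y : y > m_{i-1}0\}$, commit $m_{i-1}0 \in A$, which forces $\varphi_i(m_{i-1}0) \notin \varphi_i(\overline{A})$ and thus non-surjectivity. Second, if $\varphi_i(m_{i-1}0)$ is undefined or matches some $x \in \overline{A_{i-1}}$, commit $m_{i-1}0 \in \overline{A}$ to produce either undefinedness on an element of $\overline{A}$ or a collision with a previously committed witness. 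Third, if some $x > m_{i-1}1$ satisfies $\varphi_i(x) = \varphi_i(m_{i-1}0)$, commit $m_{i-1}0 \in \overline{A}$ and decide the intermediate pairs so that $x$ also lands in $\overline{A}$, producing the required collision.

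The hard part will be the pair-degenerate edge case $\varphi_i(m_{i-1}0) = \varphi_i(m_{i-1}1)$ with no other matching witness, which is not addressed by the three cases above. My plan here is to iterate within stage $i$: arbitrarily commit the pair at $m_{i-1}$ (say $m_{i-1}0 \in A$), advance the marker to $\shift(m_{i-1},1)$, and reapply the case analysis. For adversarial $\varphi_i$ whose behavior is uniformly pair-degenerate on a cofinite tail (possibly engineered so that its restriction to pair representatives bijects $\sigmastar$ onto $\sigmastar \setminus \{\varphi_i(\emptystring)\}$), the pure pair structure cannot defeat $\varphi_i$ on its own. I would therefore strengthen the construction by reserving a polynomial-time identifiable infinite set of positions (for instance $\{0^{2i} : i \geq 1\}$) at which the pair invariant is deliberately broken by placing both pair elements into $\overline{A}$. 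The rank formula then acquires only a polynomial-time computable correction that counts such anomaly positions below the input, and at any such anomaly position a uniformly pair-degenerate $\varphi_i$ is forced into an unavoidable collision on $\overline{A}$, completing the diagonalization.
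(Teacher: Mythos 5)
Your overall strategy (stagewise diagonalization against $\varphi_i$ with a moving marker, and your three cases) matches the paper's, and your diagnosis of the sticking point is exactly right: with the pair structure the only uncovered situation is $\varphi_i(m_{i-1}0)=\varphi_i(m_{i-1}1)$ with no other witness. But your repair does not close that gap. Because your anomaly positions (say $\{0^{2k}\condition k\geq 1\}$) are fixed in advance and polynomial-time identifiable, an adversary can be built around them: take a total recursive $\varphi_i$ that on every non-anomaly pair satisfies $\varphi_i(x0)=\varphi_i(x1)=u_x$, on every anomaly pair takes two fresh distinct values, and whose values, together with $\varphi_i(\emptystring)$, enumerate $\sigmastar$ without repetition. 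Such a $\varphi_i$ is not uniformly pair-degenerate, so your anomaly pairs never force a collision on it, while at every non-anomaly marker position you are stuck in the degenerate case, so your within-stage iteration never terminates and all later requirements are starved. Worse, no choice you make at the free pairs can hurt it: $\overline{A}$ then contains exactly one element of each free pair (sent to $u_x$) and both elements of each anomaly pair (sent to two distinct values), so $\varphi_i$ is injective on $\overline{A}$ and onto $\sigmastar$---it genuinely compresses $\overline{A}$ whatever you do, and the requirement is unsatisfiable within your framework. Making the anomaly positions adaptive instead would break the polynomial-time rank formula, since the correction term would then depend on the noncomputable course of the diagonalization.

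The paper sidesteps this by enlarging the cells rather than perturbing the pairs: it builds $A$ so that for every $x$ exactly one of $x00$, $x01$, $x10$, $x11$ is in $A$ (so $A$ is still \wpr{} by mapping all four strings to $\rank_{\sigmastar}(x)$). Each block then contributes three elements to $\overline{A}$, and---crucially---the construction chooses which ones. In the within-block collision case $\varphi_i(m_{i-1}00)=\varphi_i(x)$ with $x\in\{m_{i-1}01,m_{i-1}10,m_{i-1}11\}$, it puts some third element of the block into $A$, keeping both $m_{i-1}00$ and $x$ in $\overline{A}$ and forcing the collision; for a collision with an $x$ beyond the block it decides the intervening blocks and the block containing $x$ so that $x$ too stays in $\overline{A}$. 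The fix you need is thus not sporadic, pre-announced violations of the pair invariant, but a block structure in which any two designated elements of a block can simultaneously be kept in the complement while the rank of members of $A$ stays trivially computable; the paper's four-element blocks with exactly one member in $A$ accomplish precisely that, and with it your cases 1--3 go through essentially unchanged.
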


\begin{theorem}\label{green}
There exist sets $A$ and $B$ that are not \fprc{}, yet
$A \cup B$ is \spr{}. In addition, there exist sets $A$ and $B$ that
are not \fprc{},
yet $A \cap B$ is \spr{}.
\end{theorem}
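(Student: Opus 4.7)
Both parts of Theorem~\ref{green} can be derived from a single construction. I plan to build an infinite set $A\subseteq\sigmastar$ such that both $A$ and $\overline{A}$ are infinite and neither is \fprc{}, and then set $B := \overline{A}$. With this choice $A\cup B=\sigmastar$ and $A\cap B=\emptyset$, both trivially \spr{}, so the single pair $(A,B)$ witnesses both statements simultaneously.

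To construct $A$, I would run a finite-injury priority argument against the standard enumeration $\varphi_0,\varphi_1,\ldots$ of partial recursive functions, with requirements $R_i^+$: ``$\varphi_i$ is not a compression function for $A$'' and $R_i^-$: ``$\varphi_i$ is not a compression function for $\overline{A}$'', plus low-priority density requirements enforcing $|A|=|\overline{A}|=\aleph_0$. Strings not explicitly committed default to $\overline{A}$. For each $i$, if $\varphi_i(\domain(\varphi_i))\neq\sigmastar$ then $\varphi_i$ compresses no set and both requirements are vacuously met. If $\varphi_i$ is 1-to-1 on its domain with range $\sigmastar$, then $\varphi_i$ compresses $C$ iff $C=\domain(\varphi_i)$; reserving fresh strings $u_i,v_i$, waiting until $\varphi_i(u_i)$ and $\varphi_i(v_i)$ both converge, and then committing $u_i\in A$ (while $v_i$ remains in the default $\overline{A}$) forces $A\ne\domain(\varphi_i)\ne\overline{A}$. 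Otherwise $\varphi_i$ has a convergent collision, found by dovetailed simulation as distinct $x_0,x_1$ with $\varphi_i(x_0)=\varphi_i(x_1)=y_i$. I would then use a bounded-commitment scheme for preimages of $y_i$: commit $x_0,x_1\in A$ permanently, commit a third observed preimage $x_2$ of $y_i$ tentatively to $A$, and if a fourth preimage $x_3$ ever appears, revoke the tentative commitment and explicitly commit $x_2,x_3\in\overline{A}$. A short case analysis on $|\varphi_i^{-1}(y_i)|\in\{2,3,\ldots,\aleph_0\}$ shows the resulting split $(|A\cap\varphi_i^{-1}(y_i)|,|\overline{A}\cap\varphi_i^{-1}(y_i)|)$ is always one of $(2,0)$, $(3,0)$, $(2,2)$, or $(2,k)$ with $k\ge 2$; in every case neither side has exactly one preimage of $y_i$, defeating both $R_i^+$ and $R_i^-$.

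The crucial feature is that each $R_i^\pm$ takes only a bounded number of actions, so finite-injury bookkeeping is routine. A higher-priority $R_i$ preempts any conflicting lower-priority commitment, and when injured, the lower-priority $R_k^\pm$ restarts on fresh witnesses, dovetailing a check that these witnesses avoid the finitely many specific strings each higher-priority requirement has claimed (namely $\{x_0^j,x_1^j,x_2^j\}$ for each $j<k$ with an established collision value $y_j$, together with the 1-to-1 case witnesses $u_j,v_j$). Such safe witnesses always exist, since each higher-priority requirement claims only finitely many strings. The main obstacle will be verifying the consistent interaction of these restraints---particularly the tentative/revocable commitment of $x_2$ and the way it can temporarily move a string between $A$ and $\overline{A}$---but standard stage-by-stage bookkeeping handles this, and the density requirements ensure that both $A$ and $\overline{A}$ end up infinite, completing the construction.
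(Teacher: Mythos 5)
Your overall plan---build a single set $A$ such that neither $A$ nor $\overline{A}$ is \fprc{} and take $B=\overline{A}$---is viable, but the strategy you give for the case that $\varphi_i$ is 1-to-1 on its domain with range $\sigmastar$ has a genuine hole. You reserve fresh strings $u_i,v_i$ and act only after $\varphi_i$ converges on \emph{both}. If $u_i$ and $v_i$ both lie outside $\domain(\varphi_i)$, you never act; the defaults leave $u_i,v_i\in\overline{A}$, which does secure $R_i^-$ (since then $\overline{A}\not\subseteq\domain(\varphi_i)$), but nothing in your construction rules out the outcome $A=\domain(\varphi_i)$, and in that event $\varphi_i$ \emph{is} a compression function for $A$, so $R_i^+$ fails and the conclusion that $A$ is not \fprc{} collapses. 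Excluding exactly this coincidence is the job of the requirement, so it cannot be left to luck. The repair is easy but must be made: in this case $\domain(\varphi_i)$ is infinite, so dovetail its enumeration until two currently uncommitted, unreserved strings appear in it, and put one into $A$ and one into $\overline{A}$. Better yet, note that $A$ need not be r.e.\ or recursive, so you may simply decide questions such as ``is $x\in\domain(\varphi_i)$?'' and ``does $\varphi_i$ have a collision?'' at the meta-level, stage by stage, as the paper's appendix constructions (Theorems~\ref{red} and~\ref{blue}) do; then no convergence-waiting, no injuries, and none of the priority bookkeeping you invoke are needed at all. (If you keep your bookkeeping, also protect the fourth preimage $x_3$, which you explicitly commit to $\overline{A}$, alongside $x_0,x_1,x_2$.)

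Your route also differs from the paper's in a way worth flagging. The paper takes one diagonally constructed non-\fprc{} set of the special form $A=C0\cup\sigmastar 1$ and obtains the other witnesses $B=C1\cup\sigmastar 0$, $A'=C00\cup\sigmastar 1$, $B'=C10\cup\sigmastar 1$ as recursively isomorphic (hence still non-\fprc{}) copies, so that $A\cup B$ and $A'\cap B'=\sigmastar 1$ are \spr{}. Your choice $B=\overline{A}$ gives $A\cap B=\emptyset$, which does satisfy the literal statement of Theorem~\ref{green} (the empty set is \spr{}), but it is a degenerate witness: Table~\ref{table} cites this theorem for non-closure under intersection of the complements of the \emph{compressible} classes as well, and $\emptyset$ is not compressible, so your pair would not support those entries, whereas the paper's infinite intersection $\sigmastar 1$ does. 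So even after fixing the gap above, you should either adopt the paper's padding-plus-isomorphism trick or otherwise arrange an infinite, easily rankable and compressible intersection.
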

The proofs of these three theorems are
in the appendix.

\section{Additional Closure and Nonclosure Properties}\label{s:join}

How robust are the polynomial-time and recursion-theoretically
compressible and the rankable sets?  Do sets lose these properties
under join, or subtraction, addition, or (better yet) symmetric
difference with finite sets?  Or even with sufficiently nice infinite
sets?  The following section addresses these questions.

\subsection{Complexity-Theoretic Results}

We focus on the join (aka disjoint union), giving a full
classification of the closure properties (or lack thereof) of the
\wpr{}, \pr{}, and \spr{} sets, as well as their complements, under
this operation.  The literature is not consistent as to whether the
low-order or high-order bit is the ``marking'' bit for the join.
Here, we
follow the classic computability texts of Rogers~\cite{rog:b:rft} and
Soare~\cite{soa:b:degrees} and the classic structural-complexity text
of Balc{\'{a}}zar, D{\'{\i}}az,
Gabarr{\'{o}}~\cite{bal-dia-gab:b:sctI-2nd-ed}, and define the join
using low-order-bit marking: The join of $A$ and $B$, denoted
$A \join B$, is $A0 \cup B1$, i.e.,
$\{x0 \condition x \in A\} \cup \{x1 \condition x \in B\}$.
For classes invariant under
reversal,
which end is used for the marking bit is not important (in the sense
that the class itself is closed under upper-bit-marked join if and
only if it is closed under lower-bit-marked join).  However,
the placement of the marking bit potentially matters for ranking-based
classes, since those classes are based on lexicographical order.

The join is such a basic operation that it seems very surprising that
any class would not be closed under it, and it would be even more
surprising if the join of two sets that lack some nice organizational
property (such as being \wpr) can have that property (can be \wpr, and
we indeed show in this section that that happens)---i.e., the join of
two sets can be ``simpler'' than either of them (despite the fact that
the join of two sets is the least upper bound for them with respect to
$\leq_m^p$~\cite{sch:b:complexity-structure}, and in the sense of
reductions captures the power-as-a-target of both sets).  However,
there is a precedent for this in the literature,
and it regards a rather important complexity-theoretic structure.
It
is known that $(\rm EL_2)^\complement$ is not closed under the
join~\cite{hem-jia-rot-wat:j:join-lowers}, where $\rm EL_2$ is the
second level of the extended low hierarchy~\cite{bal-boo-sch:j:low}.

\begin{theorem}\label{thm30}
If $\p\neq\psharpp$ then there exist sets $A\in\p$ and $B\in\p$ that
are not \wpr{} yet $A\cap B$, $A\cup B$, and $A \join B$ are \spr{}.
\end{theorem}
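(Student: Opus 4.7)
The plan is to recycle the three sets $A_0$, $A_1$, $A_2$ constructed in the proof of Theorem~\ref{thm11}, and to take $A := A_1 \cup A_2$ and $B := A_0 \cup A_2$. Both sets lie in $\p$, and because $A_0, A_1, A_2$ are pairwise disjoint we immediately get $A \cap B = A_2$ and $A \cup B = A_0 \cup A_1 \cup A_2$. The latter is \emph{exactly} the set Theorem~\ref{thm11} already establishes to be \spr{}, and $A_2 = \{x0^{|x|+1}1 : x \in \sigmastar\}$ is patently \spr{}, so the intersection and union cases of the conclusion are immediate.

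To rule out \wpr{}-ness of $A$, I would re-run the $\#\mathrm{SAT}$-encoding argument from Theorem~\ref{thm11}, now with the $A_2$-elements playing the role of ``sentinels'' that are guaranteed to lie in $A$. Fix $x$ such that $\shift(x,1)$ has the same length as $x$. Because $A_2 \subseteq A$, any polynomial-time ranking function for $A$ must return the correct value at both $x0^{|x|+1}1$ and $\shift(x,1)0^{|x|+1}1$. Between these two sentinels, the only additional length-$(2|x|+2)$ elements of $A$ are the one new $A_2$-string and the $A_1$-strings $x 1 y 1$ with $\phi_x(y)=1$; hence
\[
\rank_A\bigl(\shift(x,1)0^{|x|+1}1\bigr) - \rank_A\bigl(x0^{|x|+1}1\bigr) \;=\; \#\mathrm{SAT}(\phi_x) + 1,
\]
so $A$ being \wpr{} would put $\#\mathrm{SAT}$ in $\p$, contradicting $\p \neq \psharpp$. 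The same bracketing applied to $B$ picks up $1 + (2^{|x|} - \#\mathrm{SAT}(\phi_x))$ elements in the same interval, again encoding $\#\mathrm{SAT}(\phi_x)$, so $B$ is not \wpr{} either.

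For the join, recall $A \join B = A0 \cup B1$. Thus for each string $w$, $w0 \in A \join B$ iff $w \in A_1 \cup A_2$, and $w1 \in A \join B$ iff $w \in A_0 \cup A_2$. Because $A_0, A_1, A_2$ are disjoint, the number of elements contributed by a single prefix $w$ equals $\mathbf{1}[w \in A_0 \cup A_1 \cup A_2] + \mathbf{1}[w \in A_2]$. Summing this over all $w$ with $w < u$ (for a query $y = uc$) and performing a one-bit case analysis on $c$ for $w = u$, the value $\rank_{A \join B}(y)$ becomes a polynomial-time combination of the \spr{} functions $\rank_{A_0 \cup A_1 \cup A_2}$ and $\rank_{A_2}$ together with simple P-membership tests at $u$. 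Hence $A \join B$ is \spr{}.

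The main delicate point is ensuring that the non-\wpr{}-ness arguments only query the putative ranking function at strings genuinely in $A$ (respectively $B$), which is precisely why the $A_2$-strings, rather than arbitrary strings, must supply the bracketing sentinels, and why the boundary case $x = 1^{|x|}$ (where $\shift(x,1)$ changes length) must be excluded when comparing two consecutive sentinels.
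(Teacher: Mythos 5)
Your proposal is correct and takes essentially the same approach as the paper's proof: both arguments take $A$ to be the satisfying-assignment strings and $B$ the non-satisfying-assignment strings, each unioned with a common, easily \spr{} ``sentinel'' set (the paper's $\mathit{Beacons}$ plays exactly the role of your $A_2$) whose consecutive elements bracket each formula's block, derive non-\wpr{}-ness of $A$ and of $B$ from $\operatorname{\#SAT}$ via rank differences at the sentinels, and rank the join through the identity $\rank_{A \join B}(x1) = \rank_{A\cup B}(x) + \rank_{A\cap B}(x)$, $\rank_{A\join B}(x0)=\rank_{A\join B}(x1)-\delta_B(x)$. The only differences are cosmetic---you recycle the sets of Theorem~\ref{thm11} rather than building a fresh construction, and the boundary case $x=1^{|x|}$ that you exclude is handled in the paper by using the sentinel $\shift(x,1)0^{|\shift(x,1)|+1}1$ instead of $\shift(x,1)0^{|x|+1}1$.
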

\begin{proof}
In this proof we construct a set $A_1$ whose members represent
satisfying assignments of boolean formulas. When we force certain
elements, or beacons, into $A_1$ we obtain a set $A$ such that if we
were able to rank $A$, we could count the number of satisfying
assignments to a boolean formula by comparing the rank of these
beacons. The set $B$ is constructed similarly, but in a way that
$A \cup B$, $A \cap B$, and $A \join B$ are easily \spr{}.

As in the rest of the paper, $\Sigma=\{0,1\}$. Let
$A_1=\{\alpha 01 \beta\condition \alpha, \beta\in\Sigma^* \land
|\alpha|=|\beta| \land \alpha$ is a valid encoding of boolean formula $F$
that has (without loss of generality) $k\leq|\alpha|$ variables, the
first $k$ bits of $\beta$ encode a satisfying assignment of $F$, and
the rest of the $|\beta|-k$ bits of $\beta$ are 0$ \}$.
Note that given a string $x = \alpha 01 \beta \in A_1$, we can unambiguously extract $\alpha$ and $\beta$ because they must have length $(|x|-2)/2$.
Let
$B_1=\{\alpha01\beta\condition \alpha,\beta\in\Sigma^* \land
|\alpha|=|\beta| \land \alpha01\beta\notin A_1\}$.  Let
$\mathit{Beacons}=\{\alpha000^{|\alpha|} \condition
\alpha\in\Sigma^*\}\cup\{\alpha110^{|\alpha|}
\condition\alpha\in\Sigma^*\}$.
Similarly to $A_1$, strings in $B_1$ and $\mathit{Beacons}$ can be parsed unambiguously.
Let $A=A_1\cup
\mathit{Beacons}$. Let $B=B_1\cup \mathit{Beacons}$.  Note that $A$
and $B$ are both in $\p$ because checking if an assignment satisfies
a boolean formula is in $\p$ and $\mathit{Beacons}$ is clearly in
$\p$.

We will now demonstrate that if either $A$ or $B$ were \wpr{}, then
$\operatorname{\#SAT}$ would be in $\p$. Suppose that $A$ is \wpr{}
and let $f$ be a polynomial-time ranking function for $A$. Let
$\alpha$ be a string encoding a boolean formula $F$. Then we can
compute $j=f(\alpha110^{|\alpha|})-f(\alpha000^{|\alpha|})$
in polynomial time. Both $\alpha110^{|\alpha|}$ and
$\alpha000^{|\alpha|}$ are in $\mathit{Beacons}$ and thus in $A$,
so $f$ gives a true ranking for these values. Every string in $A$
between (and not including) these $\mathit{Beacons}$ strings is from
$A_1$ and thus represents a satisfying assignment for $F$, and every
satisfying assignment for $F$ is represented by a string between
these $\mathit{Beacons}$ strings. Because the last $|\beta|-k$ bits
of $\beta$ are 0, where $k$ is the number of variables in $F$, each
satisfying assignment for $F$ is represented exactly once between the
two $\mathit{Beacons}$ strings. Thus $j-1$ is the number of
satisfying assignments of $F$. We can compute $j$ in polynomial
time, so $\operatorname{\#SAT}$
is polynomial-time computable and thus $\p=\psharpp$, contrary to
our $\p\neq\psharpp$ hypothesis.

Now suppose that $B$ is \wpr{} and similarly to before we will let
$f$ be the $\p$-time ranking function for it. Again we will let
$\alpha$ be the encoding for some boolean formula $F$ and
$j=f(\alpha110^{|\alpha|})-f(\alpha000^{|\alpha|})$. In this
case the strings in $B$ between $\alpha110^{|\alpha|}$ and
$\alpha000^{|\alpha|}$ are the strings of the form
$\alpha01\Sigma^{|\alpha|}$ except for those that are in $A_1$
(and recall that those that are in $A_1$ are precisely the
padded-with-0s satisfying assignments for $F$).
Because we know the number of strings of the form
$\alpha01\Sigma^{|\alpha|}$, we can again find the number of
satisfying assignments for $F$. Namely, we have that
$j=1+2^{|\alpha|}-s$, where $s$ is the number of satisfying
assignments of $F$. Thus if $B$ is \wpr{}, then we can find $s$ in
polynomial time and thus $\p=\psharpp$, contrary to our
$\p\neq\psharpp$ hypothesis.

Finally, we show that $A\cup B$, $A \cap B$, and $A \join B$ are
\spr{}.
The set $A\cap B$ is simply $\mathit{Beacons}$, which is
\spr{} as follows. 
Any string lexicographically below $00$ has rank 0.
For any $\alpha\in\Sigma^*$, the rank of
$\alpha000^{|\alpha|}$ is $2\rank_{\sigmastar}(\alpha)-1$ and
the rank of $\alpha110^{|\alpha|}$ is
$2\rank_{\sigmastar}(\alpha)$.
For every other string, it is easy to
find the lexicographically greatest
string in $\mathit{Beacons}$ that is lexicographically less than
the given string in polynomial time, and so it is possible to rank the string in polynomial time.

The set
$A\cup B= \{\alpha01\beta\condition \alpha\in\Sigma^* \land
\beta\in\Sigma^*\land|\alpha|=|\beta|\}
\cup\{\alpha000^{|\alpha|}\condition \alpha\in\Sigma^*\}
\cup\{\alpha110^{|\alpha|}\condition \alpha\in\Sigma^*\}$,
and is also \spr{}, as follows.
Any string lexicographically below $00$ has
rank 0. For any $\alpha\in\Sigma^*$, the rank of
$\alpha000^{|\alpha|}$ is
$1+\sum_{x<_{lex}\alpha} (2^{|x|}+2)$, where $x<_{lex}\alpha$ denotes
that $x$ is lexicographically less than $\alpha$. Note that although
the sum is over an exponentially sized set, it still can be computed
in polynomial time because the summands depend only on the length of
the element in the set. Let $b(x)$ be the number of strings lexicographically less than $\alpha$ but with the same length as $\alpha$.
Then we have that
$1+\sum_{x<_{lex}\alpha}(2^{|x|}+2)=1+b(\alpha)(2^{|\alpha|}+2) + \sum_{i=0}^{|\alpha|-1}(2^i(2^i+2))$.

The rank of $\alpha110^{|\alpha|}$ is
$\sum_{x\leq_{lex}\alpha}(2^{|x|}+2)$, where $x\leq_{lex}\alpha$ denotes that $x$ is lexicographically less than or equal to $\alpha$.  For any
$\alpha, \beta \in \Sigma^*$ where $|\alpha|=|\beta|$, the rank of
$\alpha01\beta$ is $b(\beta) + 2 + \sum_{x<_{lex}\alpha}(2^{|x|}+2)$,
where $n$ is the integer such that $\beta$ is the $n$th string of
its length. As above, each term is only dependent on the length of
$x$, and is computable in polynomial time. For any other not string in
$A\cup B$, it is easy to
find the greatest string in $A \cup B$ lexicographically less than
the given string in polynomial time, and thus it is easy to rank that string.

We can show that
$A \join B=\{a0\condition a\in A\}\cup\{b1\condition b\in B\}$ is
\spr{} using the fact that $A\cup B$ and $A \cap B$ are \spr{}, and
both $A$ and $B$ are in $\p$.
The rank of $\epsilon$ is 0.  The rank of 0 is
1 if $\epsilon\in A$ and otherwise is $0$.
For $x \in \sigmastar$, we have $\rank_{A \join B}(x1)= \rank_{A \cup B}(x)+\rank_{A \cap B}(x)$. For $x \neq \emptystring$, we have $\rank_{A \join B}(x0) = \rank_{A \join B}(x1) - \delta_B(x)$, where $\delta_B(x) = 1$ if and only if $x \in B$.
\end{proof}
\begin{theorem}
The following are equivalent:
\begin{enumerate}
\item \nsprit{} is closed under join,
\item \nprit{} is closed under join, and
\item $\p=\psharpp$.
\end{enumerate}
\end{theorem}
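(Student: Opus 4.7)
The plan is to prove $3 \Rightarrow 1$, $3 \Rightarrow 2$, and then $1 \Rightarrow 3$ and $2 \Rightarrow 3$, with the two forward directions both leveraging Theorem~\ref{thm30}.

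First I would establish a small lemma that will do most of the work for $3 \Rightarrow 1, 2$: for any two sets $A,B \subseteq \sigmastar$, $A \oplus B \in \p$ if and only if both $A \in \p$ and $B \in \p$. One direction is the routine observation that testing the low-order bit and then dispatching to a $\p$-algorithm for $A$ or $B$ decides $A \oplus B$ in polynomial time. The other is the even easier observation that $x \in A$ iff $x0 \in A \oplus B$, giving an $\leq_m^p$-reduction from each of $A$ and $B$ to $A \oplus B$. Consequently $\p^{\complement}$ is closed under the join. Assuming $\p = \psharpp$, the Hemachandra--Rudich result cited in Theorem~\ref{thm11} gives $\p = \spr = \pr$, so $\nsprit = \nprit = \p^{\complement}$, and both classes inherit closure under join from this observation. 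This handles $3 \Rightarrow 1$ and $3 \Rightarrow 2$.

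For the reverse directions I would argue the contrapositive: assuming $\p \neq \psharpp$, produce concrete $A,B$ witnessing the failure of closure for both $\nsprit$ and $\nprit$ simultaneously. The crucial input here is Theorem~\ref{thm30}, which under the hypothesis $\p \neq \psharpp$ supplies sets $A, B \in \p$ that are not even \wpr{}, yet whose join $A \oplus B$ is \spr{}. Since \spr{} $\subseteq$ \pr{} $\subseteq$ \wpr{}, any set that fails to be \wpr{} automatically fails to be \pr{} and \spr{}; so $A$ and $B$ lie simultaneously in $\nsprit$ and $\nprit$. On the other hand, $A \oplus B \in \spr \subseteq \pr$, so $A \oplus B$ lies in neither complement class. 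Thus $A$ and $B$ witness non-closure of both $\nsprit$ and $\nprit$ under join, proving $1 \Rightarrow 3$ and $2 \Rightarrow 3$ in one stroke.

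There is no real obstacle beyond invoking the earlier results; the theorem is essentially a repackaging of Theorem~\ref{thm30} together with the trivial closure of $\p$ under join. The only thing worth being careful about is the direction of the inclusions between the three flavors of ranking when passing to complements, and checking that the single pair $(A,B)$ from Theorem~\ref{thm30} simultaneously refutes closure for both complement classes---which it does precisely because Theorem~\ref{thm30} gives non-\wpr{}-ness (the strongest possible non-rankability assumption) while the \spr{}-ness of $A \oplus B$ is the strongest possible rankability conclusion.
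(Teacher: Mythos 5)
Your proposal is correct and follows essentially the same route as the paper: the direction $1\Rightarrow 3$ and $2\Rightarrow 3$ is exactly the intended application of Theorem~\ref{thm30} (via \spr{} $\subseteq$ \pr{} $\subseteq$ \wpr{}), and your direction $3\Rightarrow 1,2$ uses the same two ingredients as the paper---the Hemachandra--Rudich equivalence $\p=\psharpp \Leftrightarrow \p=\text{\spr{}}=\text{\pr{}}$ and the trivial polynomial-time reduction of $A$ (and $B$) to $A\join B$---merely phrased directly (identifying \nsprit{} and \nprit{} with $\p^{\complement}$) rather than contrapositively as in the paper.
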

\begin{proof}
Theorem~\ref{thm30} shows that either of 1 or 2 would imply 3. Now
we show that 3 implies 1 and 2, or equivalently the negation of
either 1 or 2 would imply the negation of 3.  Suppose that \nspr{}
(resp.,\ \npr{}) is not closed under join.  Then there are two sets
$A$ and $B$ that are in \nspr{} (resp.,\ \npr{}) but $A\join B$ is
\spr{} (resp.,\ \npr{}).  Then both $A$ and $B$ are in $\p$.  This
is because $A\join B\in \p$ and to test $x$ for membership in $A$,
for example, we can just test $x0$ for membership in $ A\join B$. It
was shown by Hemachandra and Rudich~\cite{hem-rud:j:ranking} that
$\p=\psharpp$, $\p=\text{\pr}$, and $\p=\text{\spr{}}$ are
equivalent.
Since $A$ and $B$ are in $\p$ but not \spr{} (resp.,\ \pr{}),
$\p\neq\text{\spr{}}$ (respectively $\p\neq\text{\pr}$) and thus
$\p\neq\psharpp$.
\end{proof}
\begin{theorem}\label{t:cpo}
The class \nwprit{} is not closed under join.
\end{theorem}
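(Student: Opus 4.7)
The plan is two-pronged. First I will verify that for every set $A \subseteq \sigmastar$, the join $A \join \overline{A}$ admits an explicit polynomial-time ranking function depending only on the input string, not on $A$. Second, by a direct diagonalization, I will construct a set $A$ such that neither $A$ nor $\overline{A}$ is \wpr. Setting $B := \overline{A}$, the sets $A$ and $B$ both lie in \nwpr, yet $A \join B \in \wpr$, which is exactly what it means for \nwpr{} to fail to be closed under the join.

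For the structural observation, $A \join \overline{A} = \{x0 \condition x \in A\} \cup \{x1 \condition x \in \overline{A}\}$, so for each $x \in \sigmastar$ exactly one of $x0, x1$ lies in $A \join \overline{A}$. Hence $A \join \overline{A}$ contains exactly $2^n$ strings of each length $n+1 \geq 1$, and a direct count gives, for $z = xb \in A \join \overline{A}$ with $b \in \{0,1\}$ and $n = |x|$,
\[
\rank_{A \join \overline{A}}(z) \;=\; (2^{n} - 1) + L(x),
\]
where $L(x)$ is the lex-position of $x$ among length-$|x|$ strings (so $L(\emptystring) = 1$, and otherwise $L(x)$ is one plus the numerical value of $x$ read as a binary integer). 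The right-hand side is computable in polynomial time from $z$ alone, so the formula witnesses $A \join \overline{A} \in \wpr$ for every $A$.

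For the diagonalization, fix an enumeration $M_1, M_2, \ldots$ of clocked polynomial-time Turing machines that represents every total polynomial-time function $\sigmastar \to \sigmastar$. I build $A$ in stages, arranging that after stage $i$ the membership of the first $2i$ strings of $\sigmastar$ (in length-first lex order) has been decided. Stages $2i-1$ and $2i$ target $M_i$ as a potential $A$-ranker and as a potential $\overline{A}$-ranker, respectively. At stage $2i-1$, let $y_1, y_2$ be the two fresh strings with $y_1 < y_2$, and let $c := \rank_A(\pred(y_1))$, which is well-defined from previous commitments. A three-case choice works: if $M_i(y_1) \neq c+1$, commit $y_1 \in A$ and $y_2 \notin A$; otherwise if $M_i(y_2) \neq c+1$, commit $y_2 \in A$ and $y_1 \notin A$; otherwise $M_i(y_1) = M_i(y_2) = c+1$, so commit both $y_1, y_2 \in A$, whence $\rank_A(y_2) = c+2 \neq M_i(y_2)$. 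In every case some committed element of $A$ witnesses $M_i(x) \neq \rank_A(x)$. Stage $2i$ proceeds symmetrically with the roles of ``in $A$'' and ``not in $A$'' exchanged, producing a committed element of $\overline{A}$ witnessing $M_i(x) \neq \rank_{\overline{A}}(x)$.

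Because each stage commits membership only of strings lex-above all previously committed strings, the ranks of earlier witnesses are not disturbed, so every witness established remains permanently valid; hence $A, \overline{A} \notin \wpr$, and taking $B := \overline{A}$ completes the proof using the ranking formula above. I expect the main obstacle to be the case analysis in the diagonalization: one must verify that two fresh strings per requirement always suffice, even when the candidate $M_i$ happens to output values ``suspiciously consistent'' with several possible commitments. The explicit three-case argument handles precisely that scenario.
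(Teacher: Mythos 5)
Your proposal is correct, and its key structural idea coincides with the paper's: take $B=\overline{A}$, so that for every $x$ exactly one of $x0,x1$ lies in $A\join\overline{A}$, making $\rank_{A\join\overline{A}}(xb)=\rank_{\sigmastar}(x)$ a polynomial-time ranking function independent of $A$; hence the join is \wpr{} no matter how complicated $A$ is. Where you differ is in producing a set $A$ with both $A\notin{}$\wpr{} and $\overline{A}\notin{}$\wpr{}. The paper does this by coding an arbitrary undecidable set $B$ into $A$ via fixed ``marker'' suffixes, so that a ranking function for $A$ (or for $\overline{A}$) would decide $B$ by comparing ranks at two always-present strings; this is shorter and yields the stronger conclusion that $A$ and $\overline{A}$ are not even \frr{}. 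You instead diagonalize directly against an enumeration of clocked polynomial-time machines, spending two fresh strings per requirement, with the three-case analysis correctly covering the situation $M_i(y_1)=M_i(y_2)=c+1$ (committing both strings shifts the rank of $y_2$ to $c+2$, so $y_2\in A$ witnesses the failure, and later stages cannot disturb this since they only decide lexicographically larger strings). Your argument is self-contained within complexity theory and, incidentally, makes $A$ decidable, which the paper's $A$ is not; the only loose end is the degenerate first stage, where $\pred(y_1)$ does not exist and one should simply set $c=0$ --- a trivial fix, not a gap.
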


\begin{theorem}\label{t:tyef}
The class \wpr{} is not closed under join.
\end{theorem}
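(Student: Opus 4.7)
The plan is to exploit the gap between \wpr{} and \spr{}: the former is known to contain undecidable sets (as noted in the paragraph following the definitions), while the latter is easily seen to lie inside~$\p$. I will take $A$ to be any undecidable \wpr{} set and $B = \sigmastar$, so that both sets are \wpr{} (indeed, $B$ is even \spr{} via $\rank_{\sigmastar}$).

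Suppose for contradiction that $A \join B$ is \wpr{} via some polynomial-time ranking function~$r$. Because $y \in B = \sigmastar$ holds for \emph{every} $y \in \sigmastar$, every string $y1$ belongs to $A \join B$, and so $r(y1)$ must equal the true rank $\rank_{A \join B}(y1)$ for every $y$. A routine length-then-lex comparison shows that, for fixed~$y$, both $x0 \leq y1$ and $x1 \leq y1$ are equivalent to $x \leq y$; consequently $\rank_{A \join B}(y1) = \rank_A(y) + \rank_B(y) = \rank_A(y) + \rank_{\sigmastar}(y)$. Rearranging, $\rank_A(y) = r(y1) - \rank_{\sigmastar}(y)$ is polynomial-time computable for every $y$, so $A$ is \spr{}. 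But any \spr{} set lies in $\p$: membership of $x$ is decided by comparing $\rank_A(x)$ with $\rank_A(\pred(x))$ (treating the rank ``before'' $\emptystring$ as~$0$). This contradicts the undecidability of $A$, so $A \join B$ cannot be \wpr{}.

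The real work is done by invoking the existence of an undecidable \wpr{} set; the rest is just the short rank calculation and the standard fact that strong ranking decides membership. The subtlety I need to be careful about is that $r$ is only guaranteed to be correct on actual members of $A \join B$, which is precisely why I choose $B = \sigmastar$: it forces every $y1$ to lie in $A \join B$, and hence forces $r$ to expose $\rank_A(y)$ for every $y \in \sigmastar$.
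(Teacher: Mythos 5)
Your proof is correct and takes essentially the same approach as the paper's: both arguments join an undecidable \wpr{} set with $\sigmastar$ and exploit the fact that any polynomial-time ranking function for the join must be truthful on the $\sigmastar$-marked half. The only cosmetic difference is that the paper places $\sigmastar$ on the $0$-marked side and recovers a membership test through consecutive anchor queries, whereas you place it on the $1$-marked side and read off $\rank_A$ directly from the identity $\rank_{A\join\sigmastar}(y1)=\rank_A(y)+\rank_{\sigmastar}(y)$, concluding that $A$ would be \spr{} and hence decidable.
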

\begin{theorem}\label{t:221}
The class \spr{} is closed under join.
\end{theorem}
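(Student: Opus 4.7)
The plan is to convert the given strong polynomial-time ranking functions $r_A$ and $r_B$ for $A$ and $B$ directly into a strong polynomial-time ranking function for $A \join B$, via a brief case analysis on the last bit of the input string. Since $A \join B = \{x0 \condition x \in A\} \cup \{x1 \condition x \in B\}$, every string in $A \join B$ is identified by its final bit, and this partitioning meshes well with length-lexicographic order.

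Fix $y \in \sigmastar$ and let $n = |y|$. First I would count the strings of $A \join B$ of length strictly less than $n$: these are the $x \in A$ together with the $x \in B$ satisfying $|x| \le n-2$, contributing $r_A(1^{n-2}) + r_B(1^{n-2})$ (interpreted as $0$ when $n \le 1$). I would then count the strings of length exactly $n$ in $A \join B$ that are $\le y$, splitting by the last bit of $y$: for $y = z1$, strings $x0$ and $x1$ of length $n$ are $\le y$ iff $x \le z$, whereas for $y = z0$ we have $x0 \le y$ iff $x \le z$ but $x1 \le y$ only when $x < z$.

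Adding everything up, the $1^{n-2}$ offsets cancel and my proposed ranking function for $A \join B$ reduces to
\[
r(y) \;=\; \begin{cases} 0 & \text{if } y = \epsilon,\\ r_A(z) + r_B(z) & \text{if } y = z1, \\ r_A(z) + r_B(\pred(z)) & \text{if } y = z0, \end{cases}
\]
where $\pred(z)$ denotes the immediate length-lex predecessor of $z$, with the convention that $r_B(\pred(\epsilon)) := 0$. This is clearly computable in polynomial time from $r_A$ and $r_B$, making at most two oracle calls on strings of length at most $n$, plus $O(n)$ work to form $\pred(z)$. I do not anticipate a substantive obstacle; the only delicate point is checking that the degenerate cases $y \in \{0,1\}$ are consistent with the general formula, which one verifies by direct inspection (and can sanity-check on a small example such as $A = \{\epsilon, 00\}$, $B = \{1\}$, where $A \join B = \{0, 11, 000\}$ and $r$ gives ranks $1, 2, 3$, respectively).
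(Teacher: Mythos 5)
Your proposal is correct and is essentially the paper's own proof: the paper likewise ranks $x1$ as $\rank_A(x)+\rank_B(x)$ and $x0$ as $\rank_A(x)+\rank_B(\shift(x,-1))$, which is exactly your $\pred$-based formula (and your explicit convention $r_B(\pred(\epsilon)):=0$ even handles the $y=0$ edge case a bit more carefully than the paper's $\shift(\epsilon,-1)=\epsilon$ convention does).
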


\begin{theorem}\label{t:903}
The class \pr{} is closed under complement if and only if it is closed under
join.
\end{theorem}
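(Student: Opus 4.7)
The plan is to route the equivalence through Lemma~\ref{lem15} (which says \pr{} is closed under complement if and only if $\pr = \spr$) together with Theorem~\ref{t:221} (which says \spr{} is closed under join). The ``closure-under-complement $\Rightarrow$ closure-under-join'' direction is then essentially free: closure under complement yields $\pr = \spr$, so closure of \spr{} under join transfers immediately to \pr{}.

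The substantive direction is the converse. Assume \pr{} is closed under join; I want to deduce $\pr = \spr$, at which point Proposition~\ref{prop-was-lem-lem13} will give closure of \pr{} under complement. For any $A \in \pr$, my plan is to exploit closure under join by combining $A$ with the trivially strongly P-rankable set $\sigmastar$ to obtain $A \join \sigmastar \in \pr$. Let $s$ be a semistrong P-ranking function for $A \join \sigmastar$. Because the join uses low-order-bit marking, the string $x1$ lies in $\sigmastar 1 \subseteq A \join \sigmastar$ for every $x \in \sigmastar$, so $s(x1) = \rank_{A \join \sigmastar}(x1)$ is defined and polynomial-time computable for every $x$---crucially, including those in $\overline{A}$, which is where the semistrong ranking of $A$ itself would have bailed out. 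A short check in the standard (length-then-lex) ordering shows that $y0 \leq x1 \iff y \leq x$ and $y1 \leq x1 \iff y \leq x$, which yields the decomposition $\rank_{A \join \sigmastar}(x1) = \rank_A(x) + \rank_{\sigmastar}(x)$. Solving gives $\rank_A(x) = s(x1) - \rank_{\sigmastar}(x)$, polynomial-time computable for every $x$, so $A \in \spr$. Since $\spr \subseteq \pr$ trivially, $\pr = \spr$ follows.

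The main subtlety to handle carefully is the interaction between the low-order-bit marking convention of the join and the length-lex ordering underlying ranks; I need to verify that the two halves $A0$ and $\sigmastar 1$ of $A \join \sigmastar$ each contribute exactly the right count (namely $\rank_A(x)$ and $\rank_{\sigmastar}(x)$, respectively) to $\rank_{A \join \sigmastar}(x1)$. This is the only place in the argument that could go wrong---pick upper-bit marking and the convenient equation breaks---but once the ordering check is in hand, the rest is bookkeeping, and the theorem follows by chaining with Lemma~\ref{lem15}, Proposition~\ref{prop-was-lem-lem13}, and Theorem~\ref{t:221}.
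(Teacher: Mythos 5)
Your proof is correct and follows essentially the same route as the paper's: the complement-implies-join direction is handled identically via Lemma~\ref{lem15} and Theorem~\ref{t:221}, and the converse uses the same key trick of joining with $\sigmastar$ so that the semistrong ranking function of the join is forced to reveal true ranks on strings guaranteed to be members, from which $\rank_A$ is recovered by subtraction. The only (cosmetic) difference is that you place $\sigmastar$ on the $1$-marked side and conclude $A\in{}$\spr{} (hence \pr{}~$=$~\spr{}), whereas the paper uses $\sigmastar\join A$ and directly assembles a semistrong ranking function for $\overline{A}$; your ordering check $\rank_{A\join\sigmastar}(x1)=\rank_A(x)+\rank_{\sigmastar}(x)$ matches the formula used in the paper's proof of Theorem~\ref{t:221}.
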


\begin{theorem}\label{t:2243}
The class \pc{} is closed under join.
\end{theorem}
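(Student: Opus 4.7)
The plan is to show directly that if $A$ and $B$ are \pc{}, then $A\join B = A0 \cup B1$ is \pc{}, proceeding by a case split on whether each of $A$ and $B$ is finite or infinite. Since \pc{} contains all finite sets by definition, the case in which both $A$ and $B$ are finite is immediate, as $A \join B$ is then finite. The remaining work is to construct an explicit polynomial-time compression function in the other two cases.

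For the principal case, assume both $A$ and $B$ are infinite with polynomial-time compression functions $f$ and $g$ onto $\sigmastar$. I would combine $f$ and $g$ by appending a marking bit to the compressed image and then rescaling onto $\sigmastar$: let $\phi : \sigmastar \setminus \{\emptystring\} \to \sigmastar$ be the polynomial-time bijection obtained by identifying strings with natural numbers via the length-then-lex ordering and subtracting one. Define $h(x0) = \phi(f(x)0)$ for $x \in A$ and $h(y1) = \phi(g(y)1)$ for $y \in B$. Because $\{f(x)0 : x \in A\} \cup \{g(y)1 : y \in B\} = \sigmastar 0 \cup \sigmastar 1 = \sigmastar \setminus \{\emptystring\}$ and $\phi$ is a bijection onto $\sigmastar$, the restriction of $h$ to $A \join B$ is a bijection onto $\sigmastar$, so $h$ witnesses that $A \join B$ is P-compressible.

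For the mixed case, say $A$ is finite of size $k$ and $B$ is infinite with polynomial-time compression $g$; the other mixed case is symmetric. Enumerate $A$ in length-lex order as $a_1 <_{\mathit{lex}} \cdots <_{\mathit{lex}} a_k$, hardcode $h(a_i 0) = s_{i-1}$ where $s_0, s_1, \ldots$ is the length-lex enumeration of $\sigmastar$, and set $h(y1) = s_{k+m}$ whenever $g(y) = s_m$. This shift by $k$ accommodates the finite initial segment of $\sigmastar$ already consumed by $h(A0)$, and the resulting $h$ is a polynomial-time bijection of $A \join B$ onto $\sigmastar$.

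I expect no major obstacle here; the work is essentially bookkeeping to verify in each case that $h$ is total, polynomial-time, injective on $A \join B$, and surjective onto $\sigmastar$. The one mildly subtle point is the choice of $\phi$ that converts the ``loss'' of $\emptystring$ from the marked-image trick back into a bijection onto all of $\sigmastar$, but any concrete description of the standard length-lex numbering of $\sigmastar$ suffices, and the arithmetic involved (increment or decrement by one, and by the constant $k$) is clearly polynomial-time.
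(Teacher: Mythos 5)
Your proof is correct and is essentially the paper's own construction: both arguments append the marking bit to the compressed value and then shift by one position in the length-lex ordering so that $\emptystring$ is covered (the paper shifts only the image of the $B$-side, mapping $x1$ to $\shift(g(x)0,-1)$, while you shift the entire combined image $\sigmastar 0 \cup \sigmastar 1$ via your bijection $\phi$). Your explicit case analysis for finite $A$ or $B$ is a small extra point of care, since \pc{} contains the finite sets and the paper's proof tacitly treats only the case in which both sets have compression functions.
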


The proofs of Theorems
\ref{t:cpo}--\ref{t:2243}
are in the appendix.

\subsection{Recursion-Theoretic Results}
\begin{theorem}\label{t:rft:delta}
\begin{enumerate}
\item If $A$ is an \frr{} set,
$B_1 \subseteq A$ is a recursive set, and
$B_2 \subseteq \overline{A}$ is a recursive set, then
$A \bigtriangleup (B_1 \cup B_2)$ (equivalently,
$(A - B_1) \cup B_2$) is \frr{}.

\item   If $A$ is \frc{}, $B_1 \subseteq A$ is recursive, and $A-B_1$ contains an
infinite \re{} subset, then $A - B_1$ is \frc{}.

\item If $A$ is an \frc{} set and $B_2 \subseteq \overline{A}$ is a
recursive set, then $A \cup B_2$ is an \frc{} set.
\end{enumerate}
\end{theorem}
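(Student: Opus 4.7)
The plan for each part is to combine the given total recursive ranking or compression function for $A$ with the recursive characteristic functions of $B_1$ and $B_2$ to produce a total recursive function for the target set. Part 1 rests on a ranking identity, while Parts 2 and 3 construct explicit recursive bijections of $\sigmastar$ that shuffle $A$ with $B_1$ or $B_2$ appropriately.

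For Part 1, let $r_A$ be a total recursive ranking function for $A$ and set $C := (A - B_1) \cup B_2$. Because $B_2 \subseteq \overline{A}$ makes $(A-B_1) \cap B_2 = \emptyset$ and $B_1 \subseteq A$, the identity
\[
\rank_C(x) \;=\; \rank_A(x) + \rank_{B_2}(x) - \rank_{B_1}(x)
\]
holds for every $x$. The correction terms $\rank_{B_1}$ and $\rank_{B_2}$ are total recursive. The candidate $r_C(x) := r_A(x) + \rank_{B_2}(x) - \rank_{B_1}(x)$ is immediately correct on $x \in A - B_1$, since such $x$ lie in $A$ and therefore $r_A(x) = \rank_A(x)$. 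For $x \in B_2 \subseteq \overline{A}$, however, $r_A(x)$ may disagree with $\rank_A(x)$, so I would patch $r_C$ on $B_2$ (which is recursively detectable) by recomputing $\rank_A(x)$ through a separate subroutine that walks lexicographically backward through $B_2$ (across which $\rank_A$ is constant, since all such elements lie in $\overline{A}$) to certify an $A$-element $y \le x$ via $r_A$ and the recursive structure of $B_1 \cup B_2$, then sets the answer to $r_A(y) = \rank_A(y) = \rank_A(x)$. The main obstacle is precisely justifying this certification step as total recursive; this is what makes Part 1 subtler than Parts 2 and 3.

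For Part 2, let $f_A$ be the given $\frec$-compression function for $A$, and let $R \subseteq A - B_1$ be the assumed infinite \re{} subset. First replace $R$ by an infinite recursive subset $R' \subseteq R$, which exists by the classical theorem that every infinite \re{} set contains an infinite recursive subset. Enumerate $R'$ in lex order as $r'_1 < r'_2 < \cdots$ and $B_1$ recursively as $b_1, b_2, \ldots$ (the case $B_1$ finite is handled by a finite patch of $f_A$). Define a total recursive $h : \sigmastar \to \sigmastar$ by
\[
h(r'_{2k-1}) = b_k, \quad h(r'_{2k}) = r'_k, \quad h(x) = x \text{ for } x \in \sigmastar \setminus (R' \cup B_1),
\]
with any fixed value on $B_1$. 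Since $R'$ and $B_1$ are recursive, $h$ is total recursive, and $h|_{A-B_1}$ is a bijection onto $A$: the identity part covers $(A-B_1) \setminus R'$ while the parity swap sends $R'$ bijectively to $B_1 \cup R'$. Then $g := f_A \circ h$ is the desired $\frec$-compression of $A - B_1$ onto $\sigmastar$.

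For Part 3, fix a recursive partition of $\sigmastar$ into two disjoint infinite recursive pieces $E$ and $O$, together with recursive bijections $e : \sigmastar \to E$ and $o : \N \to O$ (standard, using for instance lex-indexed even and odd positions). Since $B_2 \cap A = \emptyset$ and $B_2$ is recursive, define
\[
g(x) := \begin{cases} e(f_A(x)) & \text{if } x \notin B_2, \\ o(\rank_{B_2}(x)) & \text{if } x \in B_2. \end{cases}
\]
This $g$ is total recursive, bijects $A$ onto $E$ (using $f_A(A) = \sigmastar$), bijects $B_2$ onto $O$ (using that $\rank_{B_2}$ is a recursive bijection from $B_2$ to $\N$ when $B_2$ is infinite, with a small finite patch when $B_2$ is finite), and its images partition $\sigmastar$, so $g$ compresses $A \cup B_2$ onto $\sigmastar$.
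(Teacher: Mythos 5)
Your Parts~2 and~3 follow essentially the same route as the paper's proof: for Part~2 the paper likewise passes to an infinite recursive subset of the given \re{} subset, splits it by parity so that it absorbs an infinite $B_1$ (treating finite $B_1$ separately via the cofinite image), and composes with the compression function for $A$; for Part~3 the paper likewise sends $A$ via its compression function into one infinite recursive piece of $\sigmastar$ (there $1\sigmastar$) and sends $B_2$ via its own rank into the complementary piece. Two small slips in your Part~3 are worth fixing: for infinite $B_2$ the function $\rank_{B_2}$ maps $B_2$ onto the \emph{positive} integers, not onto $\N$, so $o(0)$ is never hit (use $o(\rank_{B_2}(x)-1)$); and when $B_2$ is finite a ``small finite patch'' of $g$ cannot work, since all but finitely many elements of $O$ would then be missed---in that case you need the shift-style construction (map $B_2$ to the first $\|B_2\|$ strings and $A$, shifted, to the rest), which is how the paper handles it.

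The genuine gap is Part~1, and you have put your finger on exactly the right spot without closing it. The identity $\rank_C(x)=\rank_A(x)+\rank_{B_2}(x)-\rank_{B_1}(x)$ is fine, and $r_A(x)+\rank_{B_2}(x)-\rank_{B_1}(x)$ is correct on $A-B_1$; the trouble is the points of $B_2$, where the weak ranking function $r_A$ may lie. Your proposed certification subroutine---walking backward from $x\in B_2$ until an $A$-element $y\le x$ with $\rank_A(y)=\rank_A(x)$ is certified---cannot be made total recursive in general: the strings just below $x$ need not lie in $B_1\cup B_2$, and whether they lie in $A$ can be undecidable. Concretely, let $T$ be an undecidable set of integers, let $u_n<v_n<w_n$ be the first three strings of length $n$ (for $n\ge 2$), put $u_n$ in $A$ when $n\in T$ and $w_n$ in $A$ when $n\notin T$, and nothing else; take $B_1=\emptyset$ and $B_2=\{v_n\condition n\ge 2\}$. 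Then $A$ is \wpr{} (hence \frr) via $f(u_n)=f(v_n)=f(w_n)=n-1$, and $B_2$ is recursive with $B_2\subseteq\overline{A}$, yet $\rank_{A\cup B_2}(v_n)=2(n-2)+1+[n\in T]$ and $v_n\in A\cup B_2$, so any ranking function for $A\cup B_2$ would decide $T$. Hence no recursive patch on $B_2$ exists for this $A$: with the paper's weak notion of rankability, the infinite-$B_2$ case of Part~1 cannot be proved along these lines at all. For comparison, the paper's own proof of Part~1 is exactly the one-line formula $f'(x)=f(x)+\rank_{B_2}(x)-\rank_{B_1}(x)$, which silently assumes $f(x)=\rank_A(x)$ on $B_2$; that assumption is justified when $f$ is a strong ranking function or when $B_2$ is finite (the finite case being all that the subsequent corollary uses). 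So your worry is well-founded and not merely a defect of your write-up, but as submitted your Part~1 is not a proof.
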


Theorem~\ref{t:rft:delta}'s proof is in the
appendix.

\begin{corollary}
\begin{enumerate}
\item The class of $\frec$-rankable sets is closed under symmetric
difference with finite sets (and thus also under removing and
adding finite sets).
\item The class of $\frec$-compressible sets is closed addition and
subtraction of finite sets.
\end{enumerate}
\end{corollary}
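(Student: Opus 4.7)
The plan is to derive each part of the corollary by invoking the matching part of Theorem~\ref{t:rft:delta}, in each case decomposing the given finite set $S$ as $S = (S \cap A) \cup (S \setminus A)$ so that its two pieces are (finite, hence) recursive subsets of $A$ and of $\overline{A}$, respectively. This is the only idea needed; everything else is bookkeeping to match the shape of the theorem.

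For part~1, let $A$ be $\frec$-rankable and $S$ finite. I take $S_1 = S \cap A \subseteq A$ and $S_2 = S \setminus A \subseteq \overline{A}$; both are finite and thus recursive. Part~1 of Theorem~\ref{t:rft:delta} then yields that $A \bigtriangleup S = A \bigtriangleup (S_1 \cup S_2)$ is $\frec$-rankable. The parenthetical assertion about removing (respectively, adding) finite sets is then the special case $S \subseteq A$ (respectively, $S \cap A = \emptyset$).

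For part~2, I would handle addition and subtraction separately. For addition, let $A$ be $\frec$-compressible and $S$ finite; set $S_2 = S \setminus A$, finite and hence recursive, so that part~3 of Theorem~\ref{t:rft:delta} gives $\frec$-compressibility of $A \cup S = A \cup S_2$. For subtraction, set $B_1 = S \cap A$, finite and hence recursive. Applying part~2 of Theorem~\ref{t:rft:delta} requires in addition that $A - B_1$ contain an infinite $\re$ subset. I would discharge this by invoking the fact (from Hemaspaandra and Rubery~\cite{hem-rub:jtoappear-with-tr-pointer:rft-compression1}, already tacitly used in Theorem~\ref{c-now-t:180.2-and-180.3}) that every $\frec$-compressible set is infinite and recursive; hence $A - B_1$ is itself an infinite recursive set and serves as the required $\re$ subset. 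Part~2 of Theorem~\ref{t:rft:delta} then gives that $A - S = A - B_1$ is $\frec$-compressible.

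The only step in the argument that is not purely formal is supplying the $\re$-subset hypothesis in the subtraction case, and even that is a one-line appeal. If one prefers to avoid the external recursiveness fact, the subtraction case can be carried out directly instead: given a recursive compression $f$ of $A$ and $T = f(B_1)$ (finite), composing $f$ with a recursive bijection $\sigmastar \setminus T \to \sigmastar$ (extended arbitrarily on $T$) yields a recursive compression of $A - S$. Either way the corollary follows.
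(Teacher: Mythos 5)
Your reduction of both parts to Theorem~\ref{t:rft:delta} via the decomposition $S=(S\cap A)\cup(S\setminus A)$ is exactly what is intended, and parts~1 and the ``addition'' half of part~2 are fine. The genuine problem is your main justification of the infinite-\re{}-subset hypothesis in the subtraction case: the claim that every \frc{} set is infinite and recursive is false, and the fact you cite does not say that. What is shown in~\cite{hem-rub:jtoappear-with-tr-pointer:rft-compression1} (and used in Theorem~\ref{c-now-t:180.2-and-180.3}) is only that no set in $\re-\rec$ is \frc; it leaves open that nonrecursive sets outside $\re$ are \frc, and indeed the paper's Appendix~\ref{appendix:B:underdeveloped-parts} notes that all \core{} cylinders are \frc{} (and such sets need not be recursive). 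So you cannot conclude that $A-B_1$ is recursive, and your primary route to the hypothesis of part~2 of Theorem~\ref{t:rft:delta} collapses; it is not even clear that an arbitrary \frc{} set contains an infinite \re{} subset, so the hypothesis cannot simply be waved in.

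Fortunately, the fallback you sketch at the end is correct and should be promoted to the actual argument: if $f$ is a recursive compression function for $A$ and $B_1=S\cap A$ is finite, then injectivity of $f$ on $A$ gives $f(A-B_1)=\sigmastar-f(B_1)$, a cofinite set, and composing $f$ with a recursive bijection from $\sigmastar-f(B_1)$ onto $\sigmastar$ yields a recursive compression function for $A-S$. This is precisely the observation the paper itself makes inside the proof of Theorem~\ref{t:rft:delta} (the finite-$B_1$ case, which it explicitly notes needs no \re-subset hypothesis), so the corollary does go through---but as written your proof's main line rests on a false recursiveness claim, and either the fallback argument or a citation of that finite-$B_1$ remark must replace it.
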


\section{Conclusions}
Taking to heart the work in earlier papers that views as classes the
collections of sets that have (or lack) rankability/compressibility
properties, we have studied whether
those classes
are closed under the most important boolean and other
operations.
For the studied classes,
we in almost every case were able to
prove that they are closed under the operation,
or
to prove that they are not closed under the operation,
or to prove that
whether they are closed depends on well-known questions
about the equality of standard complexity classes.
Additionally, we have
introduced the notion of compression onto a set and
have showed the
robustness of compression under this
notion, as well as the limits of that robustness.
Appendix~\ref{appendix:B:underdeveloped-parts} provides some
additional directions and some preliminary results on them.

\bibliography{gryMoreOn}

\appendix
\section{Appendix}\label{appendix}

In this section, we include proofs omitted from earlier sections.

\begin{proof} [Proof of Theorem~\ref{red}]Let a set $A$ not containing
the empty string satisfy the condition that for all
$x \in \sigmastar$, exactly one of $x0,~x1$ is in $A$. Then clearly
$A$ is \wpr{} by a function sending $x1$ and $x0$ to
$\rank_{\sigmastar}(x)$.

Let $A_0$ and $B_0$ be empty, and let $m_0 = \emptystring$. We will
construct $A_i$, $B_i$, and $m_i$ inductively for $i > 0$.  Let
$\varphi_i$ be the $i$th Turing machine in some enumeration of all
Turing machines.
\begin{enumerate}
\item Suppose that $\varphi_i$ is defined on $m_{i-1}0$, and that
for all $x$ in
$A_{i-1} \cup B_{i-1} \cup \{y \condition y > m_{i-1}0\}$ we have
$\varphi_i(x) \neq \varphi_i(m_{i-1}0)$.
In this case, we set $A_{i} = A_{i-1} \cup \{m_{i-1}1\}$ and
$B_{i} = B_{i-1} \cup \{m_{i-1}1\}$, and we set
$m_i = \shift(m_{i-1},1)$.
\item Suppose $\varphi_i$ is either undefined on $m_{i-1}0$, or that
for some $x \in A_{i-1} \cup B_{i-1}$ we have
$\varphi_i(x) = \varphi_i(m_{i-1}0)$. In this case, set
$A_i = A_{i-1} \cup \{m_{i-1}0\}$,
$B_i = B_{i-1} \cup \{m_{i-1}0\}$, and $m_i = \shift(m_i,1)$.
\item Suppose that the above cases do not hold and there is some
$x \geq m_{i-1}0$ such that $\varphi_i(x) =
\varphi_i(m_{i-1}0)$. Let $m_i$ be the lexicographically smallest
string such that $m_i0 > x$. Set
$A_i = A_{i-1} \cup \{y0 \condition m_{i-1} \leq y < m_i\}$ and
$B_i = B_{i-1} \cup \{y1 \condition m_{i-1} \leq y < m_i\}$. Note
that both $m_{i-1}0$ and $x$ are in $A_i \cup B_i$.
\end{enumerate}

Finally, let $A = \bigcup_{i \geq 0} A_i$ and
$B = \bigcup_{i \geq 0} B_i$. Stage $i$ only adds elements to $A_i$
or $B_i$ that are lexicographically greater than or equal to
$m_{i-1}0$, so if $x < m_i0$ and $x \notin A_i \cup B_i$, then
$x \notin A \cup B$.  In case 1, we see that $\varphi_i$ fails to be
surjective (i.e., onto $\sigmastar$) when restricted to $A \cup B$,
since there is no $x < m_{i-1}0$ in $A \cup B$ mapping to
$\varphi_i(m_{i-1}0)$, and also no $x > m_i0$ mapping to
$\varphi_i(m_{i-1}0)$, so no element in $A \cup B$ compresses to
$\varphi_i(m_{i-1}0)$.  In case 2, we see either that $\varphi_i$ is
undefined on $m_{i-1}0 \in A \cup B$ or that $m_{i-1}0$ and some
other element in $A \cup B$ map to the same element, so injectivity
when restricted to $A \cap B$ fails.  Similarly, in case 3, we see
that $m_{i-1}0$ and some other element in $A \cup B$ will map to the
same value under $\varphi_i$.  Thus for all $i$ we see that
$\varphi_i$ fails to compress $A \cup B$, and so no partial
recursive function compresses $A \cup B$. Note that we maintain the
condition that for all $x < m_i$, exactly one of $x0$ and $x1$ is in
$A_i$ (resp.,\ $B_i$). This condition holds in $A$ (resp.,\ $B$),
and this property carries over to $A$ and $B$ as well. Each $A_i$
(resp. $B_i$) consists of exactly all strings in $A$ (resp. $B$)
lexicographically less than $m_i0$, and so clearly since this
statement holds for each $A_i$ (resp. $B_i$) it holds for all of $A$
(resp. $B$) as well. Thus $A$ and $B$ are \wpr{}, but $A \cup B$ is
not \fprc{}.
\end{proof}

\begin{proof}[Proof of Theorem~\ref{blue}]
We will construct a set $A$ consisting of strings with length at
least 2, with the property that for every $x \in \sigmastar$,
exactly one of $x00$, $x01$, $x10$, and $x11$ is in $A$. Clearly $A$
will be infinite, and its complement is infinite as well. Also, $A$
will be \wpr{} by sending $x00$, $x01$, $x10$ and $x11$ to
$\rank_{\sigmastar}(x)$. Let $A_0 = 0$ and $m_0 = \emptystring$. We
will construct $A_i$ and $m_i$ inductively for $i > 0$.  Let
$\varphi_i$ be the $i$th Turing machine in some enumeration of all
Turing machines.
\begin{enumerate}
\item Suppose $\varphi_i$ halts on $m_{i-1}00$, and there is no
$x \in \overline{A_{i-1}}$ where $x < m_{i-1}00$ such that
$\varphi_i(x) = \varphi(m_{i-1}00)$, and that there is no
$x > m_{i-1}00$ such that $\varphi_i(x) =\varphi(
m_{i-1}00)$. Then set $A_i = A_{i-1} \cup \{m_i00\}$ and set
$m_i = \shift(m_{i-1},1)$.
\item Suppose $\varphi_i$ is undefined on $m_{i-1}00$, or that there
is some $x < m_{i-1}00$ where $x \in \overline{A_{i-1}}$ and
$\varphi(x) = \varphi(m_{i-1}00)$. Then set
$A_i = A_{i-1} \cup \{m_{i-1}01\}$ and set
$m_i = \shift(m_{i-1},1)$.
\item Suppose that the above cases do not hold, $\varphi_i$ is
defined on $m_{i-1}00$, and $\varphi_i(m_{i-1}00) = \varphi(x)$
for some $x \in \{m_{i-1}01$, $m_{i-1}10$, $m_{i-1}11\}$. Then set
$A_i = A_{i-1} \cup \{z\}$, where $z$ is a fixed arbitrary element
in $\{m_{i-1}01$, $m_{i-1}10$, $m_{i-1}11\}-\{x\}$, and set
$m_{i} = \shift(m_{i-1}$, $1)$. Note that $m_{i-1}00$ and $x$ are
both in $\overline{A}$ and below $m_i00$, and take the same value
under $\varphi_i$.
\item Suppose the above cases do not hold and $\varphi_i$ is defined
on $m_{i-1}00$ and $\varphi_i(m_{i-1}00) = \varphi(x)$ for some
$x > m_i11$. Let $y$ be equal to $x$ without its last two
characters, and set $m_i = \shift(y,1)$. Let
$A_i = A_{i-1} \cup\{m_{i-1}01\} \cup \{z11 \condition m_i < z <
y\} \cup \{w\}$, where $w$ is some element in
$\{y00,y01,y10,y11\} - \{x\}$. Note that $m_{i-1}00$ and $x$ are
both in $\overline{A_i}$ and lexicographically less than $m_i00$,
and take the same value under $\varphi_i$.
\end{enumerate}

Finally, let $A = \bigcup_{i \geq 0} A_i$. Notice that stage $i$
adds to $A$ only elements that are lexicographically greater than or
equal to $m_{i-1}00$, so if $x < m_i00$ and $x \in \overline{A_i}$,
then $x \in \overline{A}$. In case 1, we see that $\varphi_i$ fails
to be surjective (i.e., onto $\sigmastar$) when restricted to
$\overline{A}$, since there is no $x < m_i00$ in $\overline{A}$
mapping to $\varphi_i(m_{i-1}00)$, and also no $x \geq m_i0$ mapping
to $\varphi_i(m_{i-1}00)$.  In case 2, either $\varphi_i$ does not
halt on $m_i00 \in \overline{A}$ or there are two elements in
$\overline{A}$ that take the same value under $\varphi_i$.  In cases
3 and 4, we see that there are two elements in $\overline{A}$ that
take the same value under $\varphi_i$. Thus in all cases $\varphi_i$
fails to compress $\overline{A}$ and so $\overline{A}$ is not
$\fre$\tcomp{}.
\end{proof}

\begin{proof}[Proof of Theorem~\ref{green}]
Let $C$ be a set such that $A = C0 \cup \sigmastar1$ is not
\fprc{}. Such a set can be constructed using a similar method to
those of Theorems \ref{red}, \ref{blue}, and \ref{green}. Then
$B = C1 \cup \sigmastar0$, $A' = C00 \cup \sigmastar1$, and
$B' = C10 \cup \sigmastar1$ are all also not \fprc{}, since clearly
they are all recursively isomorphic. Note that
$A \cup B = \sigmastar$ and $A' \cap B' = \sigmastar1$, both of
which are \spr{}.
\end{proof}

\begin{proof}[Proof of Theorem~\ref{t:cpo}]
Let $A$ by any language that is not \wpr{} and whose complement is
not \wpr{}. An example of such a set is
$A=\{x000\condition x\in \Sigma^*\}\cup\{x001\condition x\in
B\}\cup\{x010\condition x\in \Sigma^*\}\cup\{x100\condition x\in
B\}$, where $B$ is any undecidable set. This set is not even
\frr{}. Note $x\in B$ if and only if
$\rank_A(x010)-\rank_A(x000)>1$, so if $A$ were \frr{}, we could
decide $B$. Similarly, $x\notin B$ if and only if
$\rank_{\overline{A}}(x101)- \rank_{\overline{A}}(x011)>1$, so if
$\overline{A}$ were \frr{}, $B$ would be decidable, but this is a
contradiction.

Then $A \join \overline{A}=A0\cup\overline{A}1$ is \wpr{}. It can be
ranked by any function mapping $x0$ and $x1$ to
$\rank_{\sigmastar}(x)$.
\end{proof}

\begin{proof}[Proof of Theorem~\ref{t:tyef}]
Let $A$ be some undecidable set. Let $A'=A\join \overline{A}$.  Then
$A'$ is \wpr{} by any function mapping $x0$ and $x1$ to
$\rank_{\sigmastar}(x)$. Now let $B=\sigmastar \join A'$. Then $B$
is the join of two \wpr{} sets. Suppose $B$ were $\p$\trank, then we
can query $\rank_B(x0)$ for all strings $x$. If
$\rank_B(x0)+2 = \rank_B(\shift(x,1)0)$, we know that $x1 \in B$,
and thus $x \in A'$. Otherwise, $x1 \notin B$ so $x1 \notin
A'$. Since we can test membership in $A'$, we can test membership of
$x$ in $A$ by asking whether $x0 \in A'$. This is a contradiction as
$A$ was assumed undecidable; thus $B$ cannot be $\p$\trank.
\end{proof}

\begin{proof}[Proof of Theorem~\ref{t:221}]
Let $A$ and $B$ be \spr{}. The rank of $x0$ in $A\join B$ is
$\rank_A(x)+\rank_B(\shift(x,-1))$, and the rank of $x1$ is
$\rank_A(x)+\rank_B(x)$. The rank of $\emptystring$ is 0. All of
these values can clearly be computed in polynomial time so
$A\join B$ is \spr{}.
\end{proof}

\begin{proof}[Proof of Theorem~\ref{t:903}]
Suppose \pr{} is closed under complement. Then \pr{} is equal to
\spr{}, so \pr{} is closed under join.

Now suppose \pr{} is closed under join. Let set $A$ be \pr{} by
ranking function $h$. Let $X=\sigmastar\join A$. Then $X$ is the
join of two \pr{} sets and thus is \pr{} by some ranking function
$f$. The ranking function for $\overline{A}$ does the following.
Given $x$, if $h(x)$ returns a rank (rather than an indication that
$x\notin A$) then return an indication that $x \notin \overline
A$. Otherwise let $y = \shift(x,1)$ and return
$2\rank_{\sigmastar}(x)+1-f(y0)$. There are a total of
$2\rank_\sigmastar(x)+2$ strings lexicographically less than or
equal to $y0$ in $\sigmastar$.  All those missing in $X$ correspond
to either $\emptystring$ or strings not in $A$ that are strictly
less than $y$.  Since $y0\in X$, we know that $f(y0)$ of these are
in $X$. The rest are in
$\sigmastar- X=\{x1\condition x\in \overline{A}\}\cup \{\epsilon\}$.
Thus the number of strings in $\overline{A}$ below $x$ is
$2\rank_\sigmastar(x)+1-f(y0)$. Thus $\overline{A}$ is \pr{}, so
\pr{} is closed under complement.
\end{proof}
\begin{proof}[Proof of Theorem~\ref{t:2243}]
Let $A$ and $B$ be two \pc{} sets. Let $f$ and $g$ be the
compression functions for $A$ and $B$ respectively. Let
$h(x0) = f(x)0$ and $h(x1) = \shift(f(x)0,-1)$. Now $h$ is a
compression function for $A \join B$ since the image of $h$
restricted to $A0$ is $\sigmastar0$, and the image of $h$ restricted
to $B1$ is $\sigmastar1 \cup \{\emptystring\}$. Each of $A0$ and
$B0$ maps injectively because $f$ and $g$ are compression functions,
and together they map injectively on all of $A \join B$ to all of
$\sigmastar$. The function $h$ is clearly polynomial time, and so
$A \join B$ is \pc{}.
\end{proof}
\begin{proof}[Proof of Theorem~\ref{t:rft:delta}]~
For the first part of this theorem, let $f$ be an
$\frec$-ranking function for $A$. Since $B_1$ and $B_2$ are
recursive, their ranking functions $\rank_{B_1}$ and $\rank_{B_2}$
are in $\frec$.
Our $\frec$ ranking function for $A \bigtriangleup (B_1 \cup B_2)$
is $f'(x) = f(x) + \rank_{B_2}(x) - \rank_{B_1}(x)$.  This directly
accounts for the additions and deletions done by $B_1$ and $B_2$.

We now prove the second part of the theorem.
The statement is clearly true if $B_1$ is finite, even in the case
that $A-B_1$ does not contain an infinite \re{} subset (as long as
$A-B_1$ is still infinite).  This is because the image of $A-B_1$
under a compression function for $A$ is cofinite, and cofinite sets
are compressible.  Thus composing a compression function for the
cofinite image of $A-B_1$ with a compression function for $A$, we
obtain a compression function for $A-B_1$.

So from this point on we assume that $B_1$ is infinite. Let $h$ be an
$\frec$-compression function for $A$. By the hypothesis of the theorem,
there is an infinite \re{} subset of $A-B_1$, call it $C$. Since every
infinite \re{} set contains an infinite recursive subset, let $B_2
\subseteq C$ be infinite and recursive. Let $b_1 < b_2 < b_3 < \cdots$ be
the elements in $B_1$, and let $c_1 < c_2 < c_3 < \cdots$ be the elements
in $B_2$. Consider the following function.

\begin{equation*}
g(x) =
\begin{cases}
x & \text{if}\ x \not\in B_1 \cup B_2,\\
\emptystring & \text{if}\ x \in B_1,\\
b_{\lceil i/2 \rceil} & \text{if}\ x = c_i \text{ and }i\text{ is odd, and}\\
c_{ i/2 } & \text{if}\ x = c_i \text{ and }i\text{ is even}.
\end{cases}
\end{equation*}
Let $f(x) = h(g(x))$. We claim that $f$ is a compression function for
$A-B_1$. We do this by showing $g$ is a compression function for $A-B_1$
onto $A$, since we already know that $h$ compresses $A$ to $\sigmastar$.
See that $g$ is the identity on $A-(B_1 \cup B_2)$.
See also that $g(B_2) = B_1 \cup B_2$ injectively and surjectively.
Since $(A-B_1)-B_2$ and $B_2$ are disjoint and have disjoint images, and since $g$ is injective and surjective on both these domains onto their respective images, it follows that $g$ is injective and surjective on $A-B_1$ to the image
$g((A-B_1)-B_2) \cup g(B_2) = A$.
Thus $g$ is a compression function for $A-B_1$ to $A$, and $h$ is a compression function for $A$ to $\sigmastar$, so $f$ is a compression function for $A-B_1$ to $\sigmastar$. In other words, $A-B_1$ is $\frec{}$\tcomp{}.

We now prove the third part of the theorem. Let $f$ be  an
$\frec$-compression function for $A$.
If $B_2$ is finite,
our $\frec$ compression function for $A \cup B_2$ is
$f'(x) = \shift(f(x),\|B_2\|)$ for $x \not \in B_2$ and
$f'(x) = \shift(\emptystring,\rank_{B_2}(x))$ for $x \in B_2$.

On other hand, if $B_2$ is infinite, let $g$ be an $\frec$
compression function for $B_2$, e.g., $g$ can be taken to be (recall
that $B_2$ is recursive) defined by $g(x)$ being the
$\max(\rank_{B_2}(x), 1)$-st string in $\sigmastar$.
We define $f'(x)$ as follows.  (Recall that for us $\Sigma$ is
always fixed as being $\{0,1\}$.)
If $x \not\in B_2$ then $f'(x) = 1f(x)$ (i.e., $f(x)$ prefixed with
a one).  If $x \in B_2$ and $g(x) = \emptystring$ then
$f'(x) = \emptystring$.  And, finally, if $x \in B_2$ and
$g(x) \neq \emptystring$ then $f'(x) = 0\shift(g(x),-1)$.  (The
shift-by-one treatment of the $x\not\in B$ case is because we must
ensure that $\emptystring$ is mapped to by some string in
$A\cup B_2$.) Now, $f'$ maps $A\cup B$ bijectively onto $\Sigma^*$, so $f'$ is an $\frec{}$ compression function for $A\cup B$, so $A\cup B$ is $\frec{}$\tcomp{}.
\end{proof}

\section{Appendix}\label{appendix:B:underdeveloped-parts}
\subsection{Relativization}

The results of
\cite{hem-rub:jtoappear-with-tr-pointer:rft-compression1} all relativize in a straightforward
manner. In this section, we include a few examples. This justifies our
limitation to $\frec$ and $\fre$:
By relativization, we get analogous results about more powerful
function classes, such as 
${\rm F}_{\Delta_{2}}$.\footnote{${\rm F}_{\Delta_{i+1}}$ will 
denote the class of total functions
computed by Turing machines given access to a $\Sigma_i$ oracle.
Equivalently, ${\rm F}_{\Delta_{i+1}} = (\frec)^{\Sigma_i}$.  Note
that $\frec = {\rm F}_{\Delta_{1}}$.  The class of partial functions
computed by Turing machines given access to a $\Sigma_i$ oracle will
be denoted $(\fre)^{\Sigma_i}$ or simply as
${\rm F}_\mathrm{PR}^{\Sigma_i}$.}

\begin{theorem}
For each $ i \geq 1$,
$\Delta_{i} = \Sigma_{i} \cap {\rm F}_{\Delta_i}$\tcomp$'$.
\end{theorem}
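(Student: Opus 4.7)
The plan is to reduce the statement to the base case $i=1$, which is the identity $\rec = \re \cap \frec\tcomp'$, and then to observe that this base case relativizes in a completely uniform way. Under the convention $\Delta_i = \rec^{\Sigma_{i-1}}$, $\Sigma_i = \re^{\Sigma_{i-1}}$, and ${\rm F}_{\Delta_i} = \frec^{\Sigma_{i-1}}$ (as introduced in the footnote just above the theorem), the general statement is exactly the $\Sigma_{i-1}$-relativization of the base case, and the base case itself is essentially due to~\cite{hem-rub:jtoappear-with-tr-pointer:rft-compression1} and is already invoked in our Theorem~\ref{c-now-t:180.2-and-180.3}(3).

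For the forward inclusion of the base case, $\rec \subseteq \re$ is trivial, finite sets lie in $\frec\tcomp'$ by definition, and every infinite $A \in \rec$ is $\frec$-compressible via the recursive map sending each $x$ to the $\max(1, \rank_A(x))$-th string of $\sigmastar$; this is computable because the characteristic function of $A$ is. For the reverse inclusion, given an infinite $A \in \re$ with $\frec$-compression function $f$, I would decide membership in $A$ by first computing $y = f(x)$, then enumerating $A$ recursively, evaluating $f$ on each element produced, until an $a \in A$ with $f(a) = y$ is discovered. Such an $a$ must exist because $f(A) = \sigmastar$, and is unique because $f$ is injective on $A$, so the search halts on every input; then $x \in A$ iff $x = a$, yielding a recursive decision procedure for $A$.

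To obtain the general theorem, I would simply install a $\Sigma_{i-1}$ oracle throughout. For the forward inclusion, the lex-rank compression function for an infinite $A \in \Delta_i$ lies in $\frec^{\Sigma_{i-1}} = {\rm F}_{\Delta_i}$ because the characteristic function of $A$ does. For the reverse inclusion, the decision procedure above evaluates the ${\rm F}_{\Delta_i}$ function $f$ via the $\Sigma_{i-1}$ oracle, enumerates the $\Sigma_i$ set $A$ via the same oracle, and tests string equality, yielding a $\rec^{\Sigma_{i-1}} = \Delta_i$ algorithm.

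The main---indeed essentially the only---thing to verify is that the base-case argument contains no ingredient that fails to relativize. This is immediate, since the argument uses only $\re$-enumeration, total recursive function evaluation, and equality testing on strings, each of which relativizes transparently; in particular, no priority argument, diagonalization, or other nonrelativizing device is employed.
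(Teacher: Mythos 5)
Your proposal is correct and takes essentially the same approach as the paper: the paper's proof is literally ``relativization of'' the unrelativized statement $\rec = \re \cap \frec\tcomp'$ (cited to the earlier Hemaspaandra--Rubery work), which is exactly the $i=1$ base case you identify and relativize. The only difference is that you supply a self-contained proof of that base case (both inclusions), whereas the paper obtains it by citation.
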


\begin{proof}
Relativization of
\cite[Theorem~5.3]{%
  hem-rub:t6OutByJournal:rft-compression1}
(see
also~\cite{gol-hem-kun:j:address,hem-rub:jtoappear-with-tr-pointer:rft-compression1}).
\end{proof}

Since, for $i \geq 1$ ${\rm F}_{\Delta_{i}} \supseteq \frec$, we get
the following easy corollary.

\begin{corollary}
For each $i \geq 1$, $\Sigma_{i} \cap
\frec$\tcomp$'\subseteq \Delta_i$.
\end{corollary}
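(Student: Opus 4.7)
The plan is to derive this directly from the preceding Theorem, whose relativized proof gives $\Delta_i = \Sigma_i \cap {\rm F}_{\Delta_i}\text{-compressible}'$. The only additional ingredient needed is a containment of function classes, from which the containment of compressibility classes follows immediately.

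First I would note the inclusion $\frec \subseteq {\rm F}_{\Delta_i}$ for every $i \geq 1$. This is clear from the footnote's definition: $\frec = {\rm F}_{\Delta_1}$, and for $i \geq 2$ the class ${\rm F}_{\Delta_i}$ consists of functions computed by (total) oracle Turing machines with a $\Sigma_{i-1}$ oracle, so it contains in particular those total recursive functions that ignore their oracle.

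Second, I would lift this to a containment of the compression classes. If $A$ is $\frec$-compressible, witnessed by some total recursive compression function $f$, then $f$ is also an ${\rm F}_{\Delta_i}$ function (taking an oracle machine that simply ignores its oracle); thus $A$ is ${\rm F}_{\Delta_i}$-compressible. The primed versions are defined by adjoining the finite sets in both cases, and the same adjunction applies, so $\frec\text{-compressible}' \subseteq {\rm F}_{\Delta_i}\text{-compressible}'$.

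Finally, intersecting with $\Sigma_i$ on both sides and invoking the Theorem yields
\[
\Sigma_i \cap \frec\text{-compressible}' \;\subseteq\; \Sigma_i \cap {\rm F}_{\Delta_i}\text{-compressible}' \;=\; \Delta_i,
\]
which is the claimed inclusion. There is really no obstacle here: the content of the result is entirely absorbed into the Theorem, and the Corollary is simply the observation that weakening the function class on the right-hand side (from ${\rm F}_{\Delta_i}$ down to $\frec$) preserves the containment into $\Delta_i$, even though equality no longer need hold.
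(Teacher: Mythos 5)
Your argument is correct and is exactly the paper's: the paper derives the corollary immediately from the preceding theorem together with the single observation that ${\rm F}_{\Delta_i} \supseteq \frec$ for $i \geq 1$, which is precisely your route (you merely spell out the routine lifting from the function-class containment to the compressibility classes and their primed versions). Nothing further is needed.
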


\begin{theorem}
For each $i \geq 1$,
$\Pi_{i} \cap {\rm F}_{\Delta_{i}}\trank = \Pi_{i} \cap {\rm
  F}_\mathrm{PR}^{\Sigma_{i-1}}\trank$.
\end{theorem}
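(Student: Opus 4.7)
The plan is to prove the two inclusions separately. The $\subseteq$ direction is immediate: every total $\Sigma_{i-1}$-oracle machine computing a function from $\sigmastar$ to $\sigmastar$ is also a partial $\Sigma_{i-1}$-oracle machine (one that happens to halt everywhere), so ${\rm F}_{\Delta_i} = (\frec)^{\Sigma_{i-1}} \subseteq (\fre)^{\Sigma_{i-1}} = {\rm F}_{\rm PR}^{\Sigma_{i-1}}$; intersecting each side of the induced inclusion of rankable classes with $\Pi_i$ gives the $\subseteq$ direction.

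For the nontrivial $\supseteq$ direction, let $A \in \Pi_i$ be ${\rm F}_{\rm PR}^{\Sigma_{i-1}}\trank$ via some partial function $f$. Since $A \in \Pi_i$ implies $\overline{A} \in \Sigma_i$, the set $\overline{A}$ is r.e.\ in a $\Sigma_{i-1}$ oracle; fix a $\Sigma_{i-1}$-oracle enumerator $E$ of $\overline{A}$. The central idea is to race the simulation of $f$ against the enumeration of $\overline{A}$ in order to manufacture a total $\Sigma_{i-1}$-oracle ranking function $f'$ for $A$. Concretely, on input $x$, I would have $f'$ proceed in stages $t = 1, 2, \ldots$, at each stage simulating $f(y)$ for $t$ steps for every $y \leq x$ and running $E$ for $t$ steps, halting the moment one of two events occurs: (a)~$x$ is enumerated by $E$ into $\overline{A}$, in which case $f'(x) := \emptystring$; or (b)~every $y \leq x$ has either had $f(y)$ halt or been enumerated into $\overline{A}$, in which case $f'(x) := f(x)$.

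The main verification step, and the place where I would watch most carefully for subtleties, is showing that $f'$ is total and correctly ranks $A$. If $x \in A$, then every $y \leq x$ in $A$ eventually has $f(y)$ halt (because $A \subseteq \domain(f)$) and every $y \leq x$ in $\overline{A}$ is eventually enumerated, so event (b) must eventually fire, yielding $f'(x) = f(x) = \rank_A(x)$; if $x \notin A$, then event (a) eventually fires. The one point requiring care is that event (a) cannot fire while $x \in A$, which is immediate because $E$ only enumerates elements of $\overline{A}$. Since every step of $f'$ uses only oracle access to $\Sigma_{i-1}$ (for simulating $f$ and $E$), we obtain $f' \in {\rm F}_{\Delta_i}$, establishing the $\supseteq$ inclusion.
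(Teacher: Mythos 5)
Your proof is correct. The paper gives no argument of its own for this statement---it simply cites it as a relativization of Theorem~4.6 of the earlier Hemaspaandra--Rubery work---and your construction (racing the partial $\Sigma_{i-1}$-oracle ranking function against a $\Sigma_{i-1}$-oracle enumeration of $\overline{A}$, which exists because $A \in \Pi_i$) is precisely the relativized argument that citation stands in for; the only point to tidy is that when events (a) and (b) fire at the same stage you should give (a) priority (equivalently, observe that only the case $y=x$ matters in (b)), since otherwise the assignment $f'(x) := f(x)$ could be undefined for some $x \notin \overline{\overline{A}}$'s complement, i.e., some $x \notin A$ with $f(x)$ divergent, threatening totality---a trivial fix.
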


\begin{proof}
Relativization of
\cite[Theorem~4.6]{%
  hem-rub:t6OutByJournal:rft-compression1}
(see
also~\cite{hem-rub:jtoappear-with-tr-pointer:rft-compression1}).

\end{proof}

\subsection{Compressibility, Honesty, and Selectivity}

If we restrict our attention to honest functions, we can prove some
very clean results. There is a little subtlety here, since there are
many nonequivalent definitions of honesty. We use the following:

A (possibly partial) function $f$ is honest on $B$ if there is a
recursive function $g: \mathbb{N} \rightarrow \mathbb{N}$ such that
for any $x \in \domain(f) \cap B$, $g(|f(x)|) \geq |x|$. If $f$ is
honest on $\sigmastar$, we say $f$ is honest.

This gives two potential ways to define honest compressibility. For a
given set $A$, we can require the compression function to be honest on
$\sigmastar$, or only on $A$.  We call the former notion
honestly-$F$\tcomp\ , and the latter honestly-on-A-$F$\tcomp.  The
following theorem asserts that these two notions are equivalent for
$\fre$ and $\frec$ functions.

\begin{theorem}\label{t:115.1}
\begin{enumerate}
\item For each set $A$, $A$ is honestly-$\fre$\tcomp\ if and only if
$A$ is honestly-on-A-$\fre$\tcomp.
\item For each set $A$, $A$ is honestly-$\frec$\tcomp\ if and only
if $A$ is honestly-on-A-$\frec$\tcomp.
\end{enumerate}
\end{theorem}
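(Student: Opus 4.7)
The plan is to prove the forward direction by triviality and to focus on the reverse direction, where the task is to convert a compression function that is honest only on $A$ into one that is honest on all of $\sigmastar$ while still compressing $A$. In both parts, since $A \subseteq \domain(f)$ and the honesty of $f$ on $A$ is witnessed by some recursive function $g$, we can afford to either restrict or replace $f$'s behavior on $\overline{A}$ without losing the compression property, because the compression definition is indifferent to what a compression function does outside $A$.

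For part~1 (the $\fre$ case), given an $\fre$ compression function $f$ for $A$ that is honest on $A$ with witness $g$, I would define $f'$ as follows: on input $x$, simulate $f(x)$; if it halts with output $y$ and $g(|y|) \geq |x|$, output $y$; otherwise, diverge. This $f'$ is plainly partial recursive. For every $x \in A$ we have $f(x)$ defined and $g(|f(x)|) \geq |x|$, so $f'(x) = f(x)$, which shows $A \subseteq \domain(f')$, $f'(A) = f(A) = \sigmastar$, and $f'$ remains injective on $A$. By construction, the same $g$ witnesses honesty of $f'$ on $\sigmastar$.

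For part~2 (the $\frec$ case), the same idea must yield a total function. Given a total recursive $f$ that is honest on $A$ with witness $g$, I would define $f'(x) = f(x)$ whenever $g(|f(x)|) \geq |x|$, and $f'(x) = 1^{|x|}$ otherwise. This is total and recursive because $f$ and $g$ are. As above, for $x \in A$ we always fall into the first branch, so $f'$ agrees with $f$ on $A$ and remains a compression function for $A$. The function $g'(n) = \max(g(n), n)$ is recursive and witnesses honesty of $f'$ on all of $\sigmastar$: in the first branch, $g'(|f'(x)|) \geq g(|f(x)|) \geq |x|$; in the second, $g'(|f'(x)|) \geq |1^{|x|}| = |x|$.

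There is really no main obstacle here beyond noticing that the definition of compression function for $A$ places no demands on $\overline{A}$, which gives the freedom needed to either shrink the domain (in the partial case) or to redefine the output to a safe, length-honest value (in the total case). The only subtle point worth flagging in the write-up is that, in part~1, we must not accidentally convert an honest-on-$A$ function into one that fails to be defined on all of $A$; honesty on $A$ with respect to $g$ is exactly what guarantees that the domain restriction used in the construction leaves $A$ untouched.
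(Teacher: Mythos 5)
Your proof is correct and follows essentially the same route as the paper: override $f$ on the inputs where the honesty check $g(|f(x)|)\geq|x|$ fails (which, by honesty on $A$, never happens on $A$, so the compression property is untouched), and certify honesty on $\sigmastar$ via $g'(n)=\max(g(n),n)$. The only cosmetic difference is that on the bad inputs the paper sets $f'(x)=x$ in both parts (so one construction covers $\fre$ and $\frec$ at once), whereas you diverge in the partial case and output $1^{|x|}$ in the total case; both choices work for the same reason.
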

\begin{proof}
The ``only if'' direction is trivial, since every honest on
$\sigmastar$ function is honest on $A$.

For the ``if'' direction, let $f$ be an honest on $A$ compression
function for $A$, and let $g$ be a recursive honesty-bound function
for $f$. Define $f'$ as follows, where $\domain(f') = \domain(f)$.
Over the domain of $f$, if $g(|f(x)|) \geq |x|$, then
$f'(x) = f(x)$.  If $g(|f(x)|) < |x|$, let $f'(x) = x$. Since $f$
was honest on $A$, for any $x \in A$, $f'(x) = f(x)$.  Thus $f'$ is
still a compression function for $A$.  The recursive function
$g'(n) = \max(g(n),n)$ satisfies, for all $x \in \domain(f')$,
$g'(|f'(x)|) \geq |x|$, and thus proves that $f'$ is honest (on
$\sigmastar$).

Note that when $f$ is recursive, so is $f'$, giving us the second
part of the theorem.
\end{proof}

The following proof uses $F$-selectivity, which was very rarely
useful. A set $A$ is $F$-selective if there is a function $f \in F$ of
two arguments such that the following hold:
\begin{enumerate}
\item For any $x,y \in \sigmastar$, either $f(x,y) = x$ or
$f(x,y) = y$.
\item If $x \in A$ or $y \in A$, $f(x,y) \in A$.
\end{enumerate}

Intuitively, $f$ selects the ``more likely'' of its two inputs. When
$x,y \not \in A$, or $x,y \in A$, $f$ can choose either input. It's
only restricted when one input is in $A$, and the other is not. Both
$\frec$-selectivity and honestly-$\frec$-compressibility are fairly
strong claims. Only the infinite recursive sets satisfy both.  Let
INFINITE denote the infinite sets over the alphabet
$\Sigma$.

\begin{theorem}\label{t:114.1}
$\frec$-selective $\cap$ honestly-$\frec$\tcomp\
$ = \rec \cap \INFINITE$.
\end{theorem}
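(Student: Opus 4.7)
The plan is to prove the two set inclusions defining the claimed equality.

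Reverse inclusion, $\rec\cap\INFINITE\subseteq\frec$-selective $\cap$ honestly-$\frec\tcomp$: Let $A$ be any infinite recursive set. The function $s(x,y)=x$ if $x\in A$ and $s(x,y)=y$ otherwise lies in $\frec$ and satisfies the selector axioms. For honest compressibility, recursively enumerate $A=\{a_1<a_2<\cdots\}$ in lexicographic order, and define $f(x)$ to be the $\rank_A(x)$-th string of $\sigmastar$ when $x\in A$ and $f(x)=x$ when $x\notin A$. Then $f\in\frec$ is a compression function for $A$, and $g(m)=\max(m,\max\{|a_j|\condition j\leq 2^{m+1}\})$ is a recursive honesty bound on $\sigmastar$, the inner maximum being recursive because $A\in\rec$.

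Forward inclusion, $\frec$-selective $\cap$ honestly-$\frec\tcomp\subseteq\rec\cap\INFINITE$: Infiniteness is built into compressibility. Assume $A$ has a recursive selector $s$ and a recursive compression function $f$ with recursive honesty bound $g$; by Theorem~\ref{t:115.1} we may take $f$ honest on all of $\sigmastar$. To decide $x\in A$, compute $y=f(x)$ and $N=g(|y|)$, enumerate the finite set $S=\{z\in\sigmastar\condition |z|\leq N\land f(z)=y\}$, iteratively apply $s$ across the elements of $S$ to obtain a single string $a^\ast$, and return whether $x=a^\ast$. Since $f(A)=\sigmastar$, some $a\in A$ has $f(a)=y$; honesty forces $|a|\leq N$ and injectivity of $f$ on $A$ forces $a$ to be unique, so $A\cap S=\{a\}$. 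The selector axiom (if either input lies in $A$ then so does the output) then propagates $a$ through the iteration: once $a$ first appears as an argument to $s$, every subsequent intermediate value lies in $A\cap S=\{a\}$, so $a^\ast=a$. Finally, $x\in A\iff x=a^\ast$ by injectivity of $f$ on $A$, and the whole procedure is recursive.

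The hard part is conceptual rather than computational: compressibility alone leaves wide latitude for $f$ outside $A$ and need not give decidability, and selectivity alone is similarly insufficient. The combination works only because honesty concentrates the $A$-preimage of $f(x)$ into a finite, effectively listable set $S$ (so the search space is bounded), and because selectivity, applied inside $S$, pins down the unique member of $A\cap S$ without ever performing an $A$-membership query. Verifying each ingredient is routine; the content of the theorem is arranging them together so that the selector's fold lands on the honest preimage.
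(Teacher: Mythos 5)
Your proof is correct and follows essentially the same route as the paper's: honesty bounds the preimage set $Q_{f(x)}$ of $f(x)$, injectivity plus surjectivity onto $\sigmastar$ guarantee it contains exactly one element of $A$, and the selector is used inside that finite set to isolate that element and thereby decide membership. The only difference is cosmetic: you fold the selector across the set, while the paper picks the unique element $y$ with $h(y,z)=y$ for all $z$ in the set.
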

\begin{proof}
Every infinite recursive set is easily $\frec$-selective and
honestly-$\frec$\tcomp\, giving the $\supseteq$ inclusion.

For the $\subseteq$ inclusion, let $A$ be honestly-$\frec$\tcomp\ by
$f$, with honesty bound $g$, and let $h$ be a $\frec$ selector
function for $A$. Then, for any $z$, by the definition of
compressibility and honesty:
$$ \| \{ w \condition f(w) = z \wedge |w| \leq g(|z|) \} \cap A \| = 1. $$

So define the finite set
$Q_z = \{ w \condition f(w) = z \wedge |w| \leq g(|z|) \}$. We know
this set contains exactly one element of $A$, and this will allow us
to decide $A$.

On input $x$, compute $f(x)$ and $Q_{f(x)}$. Then use the selector
function to find the unique element $y \in Q_{f(x)}$ such that for any
$z \in Q_{f(x)}$, $h(y,z) = y$. Such a $y$ exists and is unique
because there is exactly one element of $A$ in $Q_{f(x)}$.

If $x = y$, then $x \in A$. Otherwise, $x \not \in A$, so $A$ is
recursive. Since $A$ was compressible, it is infinite as well.
\end{proof}

In fact, honestly-$\frec$\tcomp\ is much stronger than
$\frec$\tcomp. While all $\core{}$ cylinders
are $\frec$\tcomp\ (see~\cite{hem-rub:jtoappear-with-tr-pointer:rft-compression1}),
no set in $\rm \re{} - REC$ is honestly-$\frec$\tcomp.  This was first
stated, without proof, in the conclusion section
of~\cite{gol-hem-kun:j:address}.

\begin{theorem}[See~\cite{gol-hem-kun:j:address}]
honestly-$\frec\mbox{\tcomp} \cap \core = \rec \cap \INFINITE$.
\end{theorem}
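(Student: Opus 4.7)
The plan is to prove the two inclusions separately, with the $\supseteq$ direction being essentially immediate and the $\subseteq$ direction being the substantive content. For $\supseteq$, I would observe that every infinite recursive set $A$ is certainly in $\core$, and it is honestly-$\frec$-compressible: the ranking function $\rank_A$ is recursive (since $A$ is), sends the $i$-th element of $A$ to the $i$-th string of $\sigmastar$, hence is a compression function, and it is honest because the $i$-th string in $\sigmastar$ has length $\Theta(\log i)$ while the $i$-th element of $A$ has length at most the length of the $i$-th string in $\sigmastar$ adjusted by a recursive bound (more concretely, one can modify it using the trick in the proof of Theorem~\ref{t:115.1} if needed).

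For the harder $\subseteq$ direction, let $A \in \core \cap$ honestly-$\frec\tcomp$, with honest $\frec$-compression function $f$ having recursive honesty bound $g$. Note that since $A$ is compressible, $A$ is automatically infinite. The key observation, as in the proof of Theorem~\ref{t:114.1}, is that for every $z \in \sigmastar$, the finite set
\[
Q_z \;=\; \{ w \condition f(w) = z \,\land\, |w| \leq g(|z|) \}
\]
contains \emph{exactly one} element of $A$: surjectivity of $f$ on $A$ together with honesty guarantees at least one, and injectivity of $f$ on $A$ forbids two.

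To decide membership in $A$ on input $x$: compute $z = f(x)$ and then compute the finite set $Q_z$ (which is possible because $g$ is recursive and $f$ is at least defined on $A$; on inputs outside $A$ where $f$ may be undefined, either $x \notin A$ trivially, or $f(x)$ does halt). Since $A \in \core$, we have a recursive enumerator $E$ for $\overline{A}$. Dovetail the enumeration $E$ against the elements of $Q_z$: we know that exactly $\|Q_z\| - 1$ elements of $Q_z$ will eventually appear in $E$, so this process terminates, leaving one distinguished element $y \in Q_z$ that is not listed. Then $y$ is the unique element of $Q_z \cap A$, and we output ``$x \in A$'' if and only if $x = y$. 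This decides $A$, hence $A \in \rec$.

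The main obstacle is handling the possible partiality of $f$ on $\overline{A}$: on such inputs $f(x)$ may not halt, so we cannot blindly compute $f(x)$ before starting the enumeration. The cleanest way to handle this, which I would use, is to run the computation of $f(x)$ in parallel with the enumeration $E$ of $\overline{A}$; as soon as $E$ lists $x$ itself, we halt and report $x \notin A$, and otherwise as soon as $f(x)$ converges we carry out the procedure above. Because $A \in \core$, whichever of the two outcomes is correct will eventually manifest, so the decision procedure always halts. Everything else (finiteness of $Q_z$, correctness of the selection step) is immediate from honesty and the compression property.
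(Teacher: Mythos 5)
Your proof is correct and follows essentially the same route as the paper's: the same sets $Q_z$ (from the proof of Theorem~\ref{t:114.1}) containing exactly one element of $A$, followed by dovetailing a machine/enumerator for $\overline{A}$ over $Q_{f(x)}$ until a single survivor remains and comparing it to $x$. Your extra machinery for possible partiality of $f$ is unnecessary here, since $\frec$ consists of total recursive functions, so $f(x)$ always converges; otherwise the argument matches the paper's.
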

\begin{proof}
Every infinite recursive set is easily $\core{}$ and
honestly-$\frec$\tcomp\, giving the $\supseteq$ inclusion.

For the $\subseteq$ inclusion, let $A$ be $\core{}$ and
honestly-$\frec$\tcomp\ by a compression function $f$. Let $M$
accept $\overline{A}$ Define the sets $Q_z$ from the proof of
Theorem~\ref{t:114.1}.

Then for any input $x$, compute $f(x)$ and $Q_{f(x)}$. Then dovetail
applying $M$ to each element of $Q_{f(x)}$ until only one
remains. If the remaining element is $x$, then $x \in A$. Otherwise,
$x \not \in A$, so $A$ is recursive and infinite.
\end{proof}

This next group of theorems builds to a result that if $A$ is
nonrecursive, $\frec$-selective and $\frec$\tcomp\, then
$\overline{A}$ has an infinite \re{} subset. Since $\frec$-selectivity
is such a strong assumption, this theorem is of limited use. However,
the arguments used to show it may prove useful in the proof of other
claims.

\begin{theorem}\label{t:134.2}
If $A$ is $\frec$\tcomp\ via $f$ and $f(\overline{A})$ is finite,
then $A$ is recursive.
\end{theorem}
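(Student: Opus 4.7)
The plan is to exploit the fact that a compression function is injective on $A$ and surjective onto $\sigmastar$, combined with the finiteness of $f(\overline{A})$, to show that only finitely many strings can pose any difficulty for deciding $A$. Let $F = f(\overline{A})$, which is finite by hypothesis. The first observation is that for any $y \notin F$, every preimage of $y$ under $f$ must lie in $A$: if some $x \in \overline{A}$ satisfied $f(x) = y$, then $y$ would belong to $f(\overline{A}) = F$, a contradiction. So $f(x) \notin F$ already certifies $x \in A$.

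The second observation handles the region $f^{-1}(F)$. Because $f$ is a compression function for $A$, the restriction $f|_A$ is a bijection onto $\sigmastar$, so for each $y \in F$ there is exactly one $a_y \in A$ with $f(a_y) = y$. Hence the set $A \cap f^{-1}(F) = \{a_y : y \in F\}$ is finite, with cardinality exactly $\|F\|$.

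Combining these, one obtains the following decision procedure for $A$: on input $x$, compute $f(x)$; if $f(x) \notin F$, accept; if $f(x) \in F$, accept if and only if $x \in \{a_y : y \in F\}$. Both $F$ and $\{a_y : y \in F\}$ are finite subsets of $\sigmastar$, so each has a recursive characteristic function (every finite set is recursive). Since $f$ is total recursive by hypothesis, the whole procedure is recursive, and $A \in \rec$.

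The main subtlety—and really the only one—is non-constructive: we are not producing the decision procedure uniformly from $f$, since we neither know $F$ nor the witnesses $a_y$ explicitly. But recursiveness only demands the \emph{existence} of a Turing machine deciding $A$, and the two finite sets above are hard-coded into such a machine. So the argument is a short classical existence proof rather than an effective construction, which is all that is required.
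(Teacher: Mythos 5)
Your proof is correct and is essentially the paper's own argument: your finite set $\{a_y : y \in F\}$ is exactly the paper's $L = \{x \in A \condition f(x) \in f(\overline{A})\}$, and both proofs decide $A$ via the characterization $x \in A \iff f(x)\notin f(\overline{A}) \lor x \in L$, with the finite sets hard-coded non-uniformly. No gaps.
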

\begin{proof}
Using the definition of compressibility,
$L = \{ x \in A \condition f(x) \in f(\overline{A}) \}$ is finite.
By the assumptions of the theorem, so is $f(\overline{A})$. But,
$x \in A$ if and only if
$x \in L \lor f(x) \not \in f(\overline{A})$.  Since both of these
sets are finite, this condition is recursive, and so is $A$.
\end{proof}

Now we consider the case where $f(\overline{A})$ is infinite.
\begin{theorem}\label{t:135.1}
If $A$ if $\frec$\tcomp\ via $f$ and $f(\overline{A})$ is infinite,
then there is an infinite \re{} set
$B_A = \{ (p_1,q_1), (p_2, q_2), ... \}$ such that no string appears
in more than one pair and each pair contains at least one element of
$\overline{A}$.
\footnote{
In the theorem and the proof (and similarly regarding the proof of
the corollary) we should, to be formally correct, define and work
with the one-dimensional set
$\{ \langle a,b \rangle \condition (a,b) \in B_A\}$, where
$\langle \cdot , \cdot \rangle$ is a nice, standard pairing
function; but let us consider that implicit.
}
\end{theorem}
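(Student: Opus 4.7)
The plan is to build $B_A$ by dovetailing computations of $f$ on all strings and, whenever we discover two distinct strings $x_1, x_2$ with $f(x_1) = f(x_2)$ such that neither $x_1$ nor $x_2$ has appeared in a previously enumerated pair, to enumerate the encoding $\langle x_1, x_2 \rangle$. Since $f \in \frec$, this procedure is recursive, so $B_A$ is r.e.

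The two non-size conditions are then immediate. The ``no string appears in more than one pair'' condition is enforced directly by the filter built into the algorithm. For the condition that every pair contains an element of $\overline{A}$, note that since $f$ is a compression function for $A$, the restriction $f|_A$ is injective, and so any collision pair $(x_1, x_2)$ with $x_1 \neq x_2$ has at most one member in $A$, hence at least one member in $\overline{A}$.

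The real content is infinitude. Consider the fibers $F_v := \{ x \in \sigmastar \mid f(x) = v \}$ for $v \in f(\overline{A})$. Each such $F_v$ contains the unique element of $A$ that $f$ sends to $v$ (supplied by $f|_A$ being a bijection onto $\sigmastar$) together with at least one element of $\overline{A}$ (by the definition of $v \in f(\overline{A})$), so $\|F_v\| \geq 2$. Crucially, since $f$ is a function, the fibers $F_v$ for distinct values $v$ are pairwise disjoint as sets of strings. Consequently, for each $v \in f(\overline{A})$ the dovetailing will eventually detect a within-fiber collision, and since no string from $F_v$ can lie in any other fiber, pairs previously enumerated from other fibers cannot have consumed either element of the present collision; thus at least one pair drawn from $F_v$ passes the filter and is enumerated. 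Because $f(\overline{A})$ is infinite, $B_A$ accumulates at least one pair per value in $f(\overline{A})$ and so is itself infinite.

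I do not anticipate a significant obstacle. The only mild subtlety is ensuring that the disjointness-across-pairs filter does not starve the enumeration across different fibers, and this is handled by the elementary observation that fibers for distinct values are pairwise string-disjoint; within a single fiber the filter may discard some collisions it sees later, but the first collision discovered within each $F_v$ necessarily survives, which is all that is needed.
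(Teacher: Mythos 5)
Your proof is correct and takes essentially the same route as the paper's: the identical enumeration of collision pairs among strings not yet used, with injectivity of $f$ on $A$ yielding that each pair meets $\overline{A}$. Your fiber-by-fiber argument for infinitude merely spells out carefully the step the paper states tersely (that fresh collisions keep appearing because $f(\overline{A})$ is infinite), so the substance is the same.
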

\begin{proof}
We describe a machine that enumerates the desired set.

Initialize $Q = \emptyset$. Begin running
$f(\epsilon), f(0), f(1), ...$ in sequence. Since $f(\overline{A})$
is infinite, there will be two strings $x,y \not \in Q$ where
$f(x) = f(y)$. Enumerate $(x, y)$, and add both $x$ and $y$ to
$Q$. The enumerated set will have the desired properties.
\end{proof}

\begin{corollary}\label{t:136.1}
If $A$ is nonrecursive, $\frec$-selective, and $\frec$\tcomp, then
$\overline{A}$ has an infinite \re{} subset.
\end{corollary}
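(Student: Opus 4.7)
The plan is to chain together Theorems~\ref{t:134.2} and~\ref{t:135.1} with the $\frec$-selectivity hypothesis. Since $A$ is $\frec$-compressible via some $f$, apply the dichotomy: either $f(\overline{A})$ is finite, in which case Theorem~\ref{t:134.2} makes $A$ recursive (contradicting our hypothesis that $A$ is nonrecursive), or $f(\overline{A})$ is infinite. So we are in the latter case, and Theorem~\ref{t:135.1} hands us an infinite \re{} set of pairs $B_A = \{(p_1,q_1),(p_2,q_2),\ldots\}$ with two critical properties: no string appears in more than one pair, and each pair contains at least one element of $\overline{A}$.

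Next I would use the $\frec$-selector $h$ for $A$ to pin down, in each pair, a specific element of $\overline{A}$. The key observation is this: for every pair $(p_i,q_i) \in B_A$, the element that $h(p_i,q_i)$ does \emph{not} return is guaranteed to lie in $\overline{A}$. Indeed, at least one of $p_i,q_i$ is in $\overline{A}$ by construction of $B_A$. If both are in $\overline{A}$, then trivially the non-selected one is in $\overline{A}$. If exactly one is in $\overline{A}$, then the selector's defining property forces $h(p_i,q_i)$ to return the element that is in $A$, and so again the non-selected one is in $\overline{A}$.

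I would then describe the desired enumerator explicitly: dovetail the enumerator for $B_A$; each time a new pair $(p_i,q_i)$ is produced, compute $h(p_i,q_i)$ and output whichever of $p_i,q_i$ is \emph{not} the value returned. The result is an \re{} subset of $\overline{A}$. It is infinite because $B_A$ is infinite and, since no string appears in more than one pair of $B_A$, the elements output from distinct pairs are distinct.

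The only delicate point, and the place that requires the $\frec$-selectivity hypothesis to be used exactly right, is the case-split on $h(p_i,q_i)$: one must be careful that in the ``both in $\overline{A}$'' case the selector is allowed to return either element, so we cannot hope to identify a specific element of $A$ from the pair, but we \emph{can} always identify an element of $\overline{A}$ (namely the non-selected one). That asymmetry is what makes this corollary an existence-of-\re{}-subset statement about $\overline{A}$ rather than a stronger statement about $A$.
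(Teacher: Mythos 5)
Your proposal is correct and follows essentially the same route as the paper: invoke Theorem~\ref{t:135.1} to get the pairs, apply the $\frec$-selector to each pair, and enumerate the non-selected elements, which must lie in $\overline{A}$ and are distinct since no string appears in two pairs. The only (welcome) difference is that you make explicit the appeal to Theorem~\ref{t:134.2} to rule out $f(\overline{A})$ being finite, a step the paper leaves implicit when it applies Theorem~\ref{t:135.1}.
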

\begin{proof}
Create the set from Theorem~\ref{t:135.1}, and apply the
$\frec$-selector to each pair. If the selector chooses $p_i$, then
$q_i \in \overline{A}$, and vice versa. So we can enumerate an
infinite $\re{}$ subset of $\overline{A}$ by enumerating the
elements not chosen by the selector.
\end{proof}

\end{document}